\newif\ifIEEE
\newif\ifaddappendix
\newif\iftoolprototype
\newcommand{\denseparagraph}[1]{\vspace*{-2.6mm}\paragraph{#1}}
\renewcommand{\denseparagraph}[1]{\paragraph{#1}}
\renewcommand{\emptyset}{\varnothing}
\newcommand{\Reals}{\ensuremath{\mathbb{R}}}
\newcommand{\N}{\ensuremath{\mathbb{N}}}
\newcommand{\paths}{\ensuremath{\mathsf{paths}}}
\newcommand{\traces}{\ensuremath{\mathsf{traces}}}
\newcommand{\trace}{\ensuremath{\mathsf{trace}}}
\newcommand{\NOx}{\ensuremath{\mathrm{NO}_x}}
\newcommand{\Inputs}{{\mathsf{In}}}
\newcommand{\Outputs}{{\mathsf{Out}}}
\newcommand{\qOutputs}{{\ensuremath{\mathsf{Out}_\quiescence}}}
\newcommand{\inp}{\mathsf{i}}
\newcommand{\outp}{\mathsf{o}}
\newcommand{\Contract}{{\ensuremath{\mathcal{C}}}}
\newcommand{\Norm}{{\mathsf{StdIn}}}
\newcommand{\Std}{{\ensuremath{\mathcal{S}}}}
\newcommand{\qStd}{{\ensuremath{\mathcal{S}_\quiescence}}}
\newcommand{\StdPlus}{{\Std_{+}}}
\newcommand{\mStdPlus}{{\qStd}}
\newcommand{\Spec}{{\ensuremath{\mathcal{R}}}}
\newcommand{\STraces}{{\mathsf{STraces}}}
\newcommand{\BadInputs}{\mathit{BadTraces}}
\newcommand{\BadInputsFin}{\mathit{BadTracesFin}}
\newcommand{\inpbound}{\kappa_\inp}
\newcommand{\outpbound}{\kappa_\outp}
\newcommand{\robustly}{robustly}
\newcommand{\LTS}{\mathcal{L}}
\newcommand{\SUT}{\ensuremath{\mathcal{I}}}
\newcommand{\NoInp}{\ensuremath{\text{--}_i}}
\newcommand{\NoOutp}{\ensuremath{\text{--}_o}} 
\newcommand{\quiescence}{\ensuremath{\delta}}
\newcommand{\test}{t}
\newcommand{\out}{\mathsf{\sf out}}
\newcommand{\after}{\mathrel{\mathsf{after}}}
\newcommand{\ioco}{\ensuremath{\mathrel{\text{\rm\bf ioco}}}}
\newcommand{\pass}{\ensuremath{\text{\rm\bf pass}}}
\newcommand{\fail}{\ensuremath{\text{\rm\bf fail}}}
\newcommand{\passes}{\ensuremath{\text{ \rm\bf passes }}}
\newcommand{\TG}{\ensuremath{\mathsf{TG}}}
\newcommand{\DTG}{\ensuremath{\mathsf{DTG}}}
\newcommand{\DT}{\ensuremath{\mathsf{DT}}}
\newcommand{\nsum}{\mathop{\textstyle\sum}}
\newcommand{\true}{\textsf{true}}
\newcommand{\abs}[1]{\lvert {#1} \rvert}
\newcommand{\last}{\mathop{\mathsf{last}}}
\newcommand{\mapInp}[1]{#1 \ensuremath{{\downarrow_i}}}
\newcommand{\mapOut}[1]{#1 \ensuremath{{\downarrow_o}}}
\newcommand{\clean}{\mathsf{acc}}
\newcommand{\lift}{[\theta/\delta]}
\newcommand{\NEDC}{{\sc{NEDC}}}
\newcommand{\PowerNEDC}{{\sc{PowerNEDC}}}
\newcommand{\SineNEDC}{{\sc{SineNEDC}}}
\newtheorem{definition}{Definition}{\bfseries}{\itshape}
{\bfseries}{\itshape}
\newtheorem{lemma}{Lemma}{\bfseries}{\itshape}
\newtheorem{corollary}{Corollary}{\bfseries}{\itshape}
{\bfseries}{\itshape}
\newtheorem{example}{Example}{\itshape}{\rmfamily}
\renewenvironment{proof}{\begin{IEEEproof}}{\end{IEEEproof}}
\newcommand{\qed}{}
\newif\ifcomments
\newif\ifshowremoved
\definecolor{lightblue}{RGB}{210,210,225}
\definecolor{lightred}{RGB}{225,210,210}
\definecolor{lightgreen}{RGB}{210,225,210}
\definecolor{lightyellow}{RGB}{225,222,200}
\definecolor{lightpurple}{RGB}{225,210,225}
\definecolor{warningyellow}{RGB}{247, 245, 187}
\definecolor{darkergreen}{RGB}{0,64,0}
\definecolor{darkred}{RGB}{128,0,0}
\definecolor{darkblue}{RGB}{0,0,128}
\definecolor{darkgreen}{RGB}{0,128,0}
\definecolor{darkpurple}{RGB}{128,0,128}
\definecolor{warningorange}{RGB}{124, 81, 0}
\definecolor{eyecancerpink}{rgb}{1.0, 0.0, 1.0}
\definecolor{radiationyellow}{rgb}{0.8, 1.0, 0.0}
\newcommand{\colorpar}[3]{\colorbox{#1}{\parbox{#2}{#3}}}
\newcommand{\marginremark}[4]{\marginpar{\colorpar{#2}{\linewidth}{\color{#1}\tiny{[#3]~ #4}}}}
\def\THICKhrulefill{\leavevmode \leaders \hrule height 5pt\hfill \kern \z@}
\newcommand{\highlightedremark}[4]{%
  \begin{center}\fcolorbox{#1}{#2}{%
  \begin{minipage}{.98\linewidth}\color{#1}%
  \textbf{\THICKhrulefill[ #3 ]\THICKhrulefill}%
  \par\noindent#4\end{minipage}}\end{center}%
}
\newcommand{\remarkHH}[1]{\ifcomments\marginremark{eyecancerpink}{radiationyellow}{HH}{#1}\fi}
\newcommand{\remarkPRD}[1]{\ifcomments\marginremark{darkred}{lightred}{PRD}{#1}\fi}
\newcommand{\remarkSB}[1]{\ifcomments\marginremark{darkergreen}{lightgreen}{SB}{#1}\fi}
\newcommand{\hrmkSB}[1]{\ifcomments\highlightedremark{darkergreen}{lightgreen}{SB}{#1}\fi}
\newcommand{\removed}[1]{\ifcomments\highlightedremark{warningorange}{warningyellow}{REMOVED CONTENT}{\ifshowremoved #1 \else Add ``showremovedtrue'' below ``commentstrue'' to display removed content. \fi}\fi}
\begin{document}
\title{Doping Tests for Cyber-Physical Systems%
  \thanks{This work is partly supported by the ERC Grant 695614 (POWVER) by the Deutsche Forschungsgemeinschaft (DFG, German Research Foundation) grant 389792660 as part of TRR~248, see \url{https://perspicuous-computing.science}, by the Saarbr\"ucken Graduate School of Computer Science, by the Sino-German CDZ project 1023 (CAP), by ANPCyT PICT-2017-3894 (RAFTSys), and by SeCyT-UNC 33620180100354CB (ARES).%
	}%
}%

\author{Sebastian Biewer\inst{1} \and
Pedro D'Argenio\inst{1,2,3} \and
Holger Hermanns\inst{1}}

\institute{
  Saarland University, Saarland Informatics Campus, Germany
  \and Universidad Nacional de C\'ordoba, FAMAF, Argentina
  \and CONICET, Argentina
}

\def\hh#1{\remarkHH{#1}}

\maketitle

\begin{abstract}
 The software running in embedded or cyber-physical systems (CPS) is typically of proprietary nature, so users do not know precisely what the systems they own are (in)capable of doing. Most malfunctionings of such systems are not intended by the manufacturer, but some are, which means these cannot be classified as bugs or security loopholes. The most prominent examples have become public in the diesel emissions scandal, where millions of cars were found to be equipped with software violating the law, altogether polluting the environment and putting human health at risk. The behaviour of the software embedded in these cars was intended by the manufacturer, but it was not in the interest of society, a  phenomenon that has been called \emph{software doping}. Doped software is significantly different from buggy or insecure software and hence it is not possible to use classical verification and testing techniques to discover and mitigate software doping.

The work presented in this paper builds on existing definitions of software doping and lays the theoretical foundations for conducting software doping tests, so as to enable attacking evil manufacturers. The complex nature of software doping makes it very hard to effectuate doping tests in practice. We explain the biggest challenges and provide efficient solutions to realise doping tests despite this complexity.
\end{abstract}

\section{Introduction}
Embedded and cyber-physical systems are becoming more and more widespread as part of our daily life. Printers, mobile phones, smart watches, smart home equipment, virtual assistants, drones and batteries are just a few examples. Modern cars are even composed of a multitude of such systems. These systems can have a huge impact on our lives, especially if they do not work as expected. As a result, numerous approaches exist to assure quality of a system. The classical and most common type of malfunctioning is what is widely called ``bug''. Usually, a bug is a very small mistake in the software or hardware that causes a behaviour that is not intended or expected. Other types of malfunctioning are caused by incorrect or wrongly interpreted sensor data, physical deficiencies of a component, or are simply radiation-induced.

Another interesting kind of malfunction (also from an ethical perspective~\cite{DBLP:conf/isola/Baum16}) arises if the expectation of how the system should behave is different for two (or more) parties. Examples for such scenarios are widespread in the context of personal data privacy, where product manufacturers and data protection agencies have notoriously different opinions about how a software is supposed to handle personal data. Another example is the usage of third-party cartridges in printers. Manufacturers and users do not agree on whether their printer should work with third-party cartridges (the user's opinion) or only with those sold by the manufacturer (the manufacturer's opinion). Lastly, an example that received very high media attention are emission cleaning systems in diesel cars. There are regulations for dangerous particles and gases like CO$_2$ and NO$_2$ defining how much of these substances are allowed to be emitted during car operation. Part of these regulations are emissions tests, precisely defined test cycles that a car has to undergo on a chassis dynamometer~\cite{nedc}. Car manufacturers have to obey to these regulations in order to get admission to sell a new car model. The central weakness of these regulations is that the relevant behaviour of the car is only a trickle of the possible behaviour on the road. Indeed, several manufacturers equipped their cars with defeat devices that recognise if the car is undergoing an official emissions test. During the test, the car obeys the regulation, but outside test conditions, the emissions extruded are often significantly higher than allowed. Generally speaking, the phenomena described above are considered as incorrect software behaviour by one party, but as intended software behaviour by the other party (usually the manufacturer). In the literature, such phenomena are called software doping~\cite{BartheDFH16:isola,DBLP:conf/esop/DArgenioBBFH17}. 

The difference between software doping and 
bugs
is threefold: (1) There is a disagreement of intentions about what the software should do. (2) While a bug is most often a small coding error, software doping can be present in a considerable portion of the implementation. (3) Bugs can potentially be detected during production by the manufacturer, whereas software doping needs to be uncovered after production, by the other party facing the final product.
Embedded software is typically proprietary, so (unless one finds a way to breach into the intellectual property~\cite{DBLP:conf/sp/ContagLPDLHS17}) it is only possible to detect software doping by observation of the behaviour of the product, i.e., by black-box testing.

This paper 
develops the foundations for black-box testing approaches geared towards uncovering doped software in concrete cases. We will start off from an established formal notion of robust cleanness (which is the negation of software doping)~\cite{DBLP:conf/esop/DArgenioBBFH17}. Essentially, the idea of robust cleanness is based on a succinct specification (called a ``contract'') capturing the intended behaviour of a system with respect to all inputs to the system. Inputs are considered to be user inputs or environmental inputs given by sensors. The contract is defined by input and output distances on standard system trajectories supplemented by input and output thresholds. Simply put, the input distance and threshold induce a tube around the standard inputs, and similar for outputs. For any input in the tube around some standard input the system must be able to react with an output that is in the tube around the output possible according to the standard.
\begin{example}
For a diesel car the standard trajectory is the behaviour exhibited during the official emissions test cycle. The input distance measures the deviation in car speed from the standard. The  input threshold is a small number larger than the acceptable error tolerance of the cycle limiting the inputs considered of interest. The output distance then is the difference between (the total amount of) \NOx\ extruded by the car facing inputs of interest and that extruded if on the standard test cycle. 
For cars with an active defeat device we expect to see a violation of the contract even for relatively large output thresholds. 
\end{example}
A cyber-physical system (CPS) is influenced by physical or chemical dynamics. Some of this can be observed by the sensors the CPS is equipped with, but some portion might remain unknown, making proper analysis difficult. Nondeterminsm is a powerful way of representing  such uncertainty faithfully, and indeed the notion of robust cleanness supports non-deterministic reactive systems~\cite{DBLP:conf/esop/DArgenioBBFH17}. Furthermore, the analysis needs to consider (at least) two trajectories simultaneously, namely the standard trajectory and another that stays within the input tube. In the presence of  nondeterminism it might even become necessary to consider infinitely many trajectories at the same time. Properties over multiple traces are called hyperproperties~\cite{ClarksonS08}. 
In this respect, expressing robust cleanness as a hyperproperty needs both $\forall$ and $\exists$ trajectory quantifiers.
Formulas containing only one type of quantifier can be analysed efficiently, e.g., using model-checking techniques, but checking properties with alternating  quantifiers is known to be very complex~\cite{ClarksonFKMRS14:post,FinkbeinerRS15:cav}. Even more, testing of such problems is in general not possible. 
Assume, for example, a property requiring for a (non-deterministic) system that for every input $i$, there exists the output $o=i$, i.e., one of the systems possible behaviours computes the identity function. For black-box systems with infinite input and output domains the property can neither be verified nor falsified through testing. In order to verify the property, it is necessary to iterate over the infinite input set. For falsification 
one must show
that for some $i$ the system can not produce $i$ as output. However, not observing an output in finitely many steps does not rule out that this output can be generated.
As a result, there is no prior work (we are aware of) that targets the automatic generation of test cases for hyperproperties, let alone robust cleanness. 

The contribution of this paper is three-fold. 
(1) We observe that standard behaviour, in particular when derived by common standardisation procedures, can be represented by finite models, and we identify under which conditions the resulting contracts are (un)satisfiable.
(2) For a given satisfiable contract we construct the largest non-deterministic model that is robustly clean w.r.t. this contract. We integrate this model into a model-based testing theory, 
which can provide a non-deterministic algorithm to derive sound test suites.
(3) We develop a testing algorithm for bounded-length tests and discretised input/output values. We present test cases for the diesel emissions scandal and execute these tests with a real car on a chassis dynamometer.

\section{Software Doping on Reactive Programs}
\label{sec:sd:react}

Embedded software is reactive, it reacts to inputs received from sensors by producing outputs that are meant to control the device functionality.   
We consider a reactive program as a function $P:\Inputs^\omega\to
2^{(\Outputs^\omega)}$ on infinite sequences of inputs so that the program reacts to the $k$-th input in the input sequence by producing non-deterministically the $k$-th output in each respective
output sequence.  Thus, the program can be seen, for instance, as a
(non-deterministic) Mealy or Moore machine.
Moreover, we consider an equivalence relation
${\approx}\subseteq{\Inputs^\omega\times\Inputs^\omega}$ that equates
sequences of inputs. To illustrate this, think of the program embedded in a printer.
Here $\approx$ would for instance equate input sequences that agree
with respect to submitting the same documents regardless of the cartridge brand,
the level of the toner (as long as there is sufficient), etc.  We furthermore
consider the set $\Norm\subseteq\Inputs^\omega$ of inputs of interest
or \emph{standard inputs}. In the previous example, $\Norm$ contains
all the input sequences with compatible cartridges and printable
documents.
The definitions given below are simple adaptations of those given
in~\cite{DBLP:conf/esop/DArgenioBBFH17} (but where parameters are  instead treated as parts of the inputs).


\begin{definition}\label{def:clean:react}
  A reactive program $P$ is \emph{clean} if for all inputs
  $\inp,\inp'\in\Norm$ such that $\inp\approx\inp'$, $P(\inp) =
  P(\inp')$. Otherwise it is  \emph {doped}.
\end{definition}
This definition states that a program is \emph{clean} if its
execution exhibits the same visible sequence of output when supplied
with two equivalent inputs, provided such inputs comply with the given
standard $\Norm$.
Notice that the behaviour outside $\Norm$ is deemed immediately
clean since it is of no interest.

In the context of the printer example, a program that would fail to print a document when provided with an ink cartridge from a third-party manufacturer, but would otherwise succeed to print would be considered doped, since this difference in output behaviour is captured by the above definition. For this, the inputs (being pairs of document and printer cartridge) must be considered equivalent (not identical), which comes down to ink cartridges being compatible.

However, the above definition is not very helpful for cases that need to
preserve certain intended behaviour \emph{outside} of the standard
inputs $\Norm$.  This is clearly the case in the diesel emissions
scandal where the standard inputs are given precisely by the emissions test, but the behaviour observed there is assumed to generalise beyond the singularity of this test setup. It is meant to ensure that the amount of NO$_2$ and NO (abbreviated as \NOx) in the car exhaust gas does not deviate considerably \emph{in general}, and comes with a legal prohibition of defeat mechanisms that simply turn off the cleaning mechanism. This legal framework is obviously a bit short sighted, since it can be circumvented by mechanisms that alter the behaviour gradually in a continuous manner, but in effect drastically. 
In a nutshell, one expects that if the input values observed by the electronic control unit (ECU) of a diesel vehicle deviate within ``reasonable distance'' from the \emph{standard} input values provided during the lab emission test, the amount of \NOx\ found in the exhaust gas is still within the regulated threshold, or at least
it does not exceed it more than a ``reasonable amount''.

This motivates the need to introduce the notion of distances on inputs and
outputs.  More precisely, we consider distances on finite traces:
$d_\Inputs:(\Inputs^*\times\Inputs^*)\to\Reals_{\geq0}$ and
$d_\Outputs:(\Outputs^*\times\Outputs^*)\to\Reals_{\geq0}$. 
Such distances are required to be 
pseudometrics. ($d$ is a pseudometric
if $d(x,x)=0$, $d(x,y)=d(y,x)$ and $d(x,y)\leq d(x,z)+d(z,y)$ for all
$x$, $y$, and $z$.)
With this, D'Argenio et. al~\cite{DBLP:conf/esop/DArgenioBBFH17} provide a definition of robust cleanness that considers two parameters:
parameter $\inpbound$ refers to the acceptable distance an input may
deviate from the norm to be still considered, and parameter
$\outpbound$ that tells how far apart outputs are allowed to be in
case their respective inputs are within $\inpbound$ distance (Def.~\ref{def:ed-clean:react}
spells out the Hausdorff distance used in~\cite{DBLP:conf/esop/DArgenioBBFH17}).

\begin{definition}\label{def:ed-clean:react}
  Let $\sigma[..k]$ denote the $k$-th prefix of the sequence $\sigma$.
  A reactive program $P$ is \emph{\robustly\ clean} if for all input
  sequences $\inp,\inp'\in\Inputs^\omega$ with $\inp\in\Norm$,
  for all $k\geq 0$ such that
  $d_\Inputs(\inp[..j],\inp'[..j])\leq\inpbound$ for all $j\leq k$,
  the following holds:
%
%
  \begin{enumerate}[(1)]
  \item\label{def:ed-clean:seq:i}
    for all $\outp\in P(\inp)$ there exists $\outp'\in P(\inp')$ such
    that $d_\Outputs(\outp,\outp') \leq \outpbound$, and
  \item\label{def:ed-clean:seq:ii}
    for all $\outp'\in P(\inp')$ there exists $\outp\in P(\inp)$ such
    that $d_\Outputs(\outp,\outp') \leq \outpbound$,
  \end{enumerate}
  where $P(\inp)[..k]=\{\outp[..k] \mid \outp\in P(\inp)\}$
  and similarly for $P(\inp')[..k]$.
\end{definition}
Notice that this is what we actually need for the non-deterministic
case: each output of one of the program instances should be matched
within ``reasonable distance'' by some output of the other program
instance.
Also notice that $\inp'$ does not need to satisfy $\Norm$, but it will
be considered as long as it is within $\inpbound$ distance of any
input satisfying $\Norm$.  In such a case, outputs generated by
$P(\inp')$ will be requested to be within $\outpbound$ distance of
some output generated by the respective execution induced by a
standard input.

We remark that Def.~\ref{def:ed-clean:react} entails the existence of
a \emph{contract} which defines the set of standard inputs $\Norm$, the
tolerance parameters $\inpbound$ and $\outpbound$ as well as the
distances $d_\Inputs$ and $d_\Outputs$. In the context of diesel engines, one might imagine that the values to be considered, especially the tolerance parameters  $\inpbound$ and $\outpbound$ for a particular car model are made publicly available (or are even advertised by the car manufacturer), so as to enable potential customers to discriminate between different car models according to the robustness they reach in being clean. It is also imaginable that the tolerances and distances are fixed by the legal authorities as part of environmental regulations.

\section{Robustly Clean Labelled Transition Systems}
This section develops the framework needed for an effective theory of black-box doping tests based on the above concepts. In this, the standard behaviour (e.g. as defined by the emission tests) and the robust cleanness definitions together will induce a set of reference behaviours that then serve as a model in a model-based conformance testing approach. To set the stage for this, we recall the definitions of labelled transition systems (LTS) and input-output transitions systems (IOTS) together with 
Tretmans' notion on model-based conformance testing~\cite{DBLP:phd/basesearch/Tretmans92}. We then recast
the characterisation of robust cleanness (Def.~\ref{def:ed-clean:react}) in terms of LTS.

\begin{definition} \label{def:LTS}\label{def:IOTS}
  A \emph{labelled transition system (LTS)} with inputs and outputs is
  a tuple $\langle Q, \Inputs, \Outputs, {\rightarrow}, q_0 \rangle$ where
  \begin{inparaenum}[(i)]
  \item $Q$ is a (possibly uncountable) non-empty set of states;
  \item $L = \Inputs \uplus \Outputs$ is a (possibly uncountable) set of labels;
  \item ${\rightarrow} \subseteq {Q \times 
    L
    \times Q}$ 
    is the transition relation;
  \item $q_0 \in Q$ is the initial state.
  \end{inparaenum}
  We say that a LTS is an \emph{input-output transition system
    (IOTS)} if it is input-enabled in any state, i.e.,
  for all $s\in Q$ and $a\in\Inputs$ there is some $s'\in Q$ such that
  $s\xrightarrow{a}s'$.
\end{definition}
For ease of presentation, we do not consider internal transitions. The following definitions will be  used throughout the paper.  A
\emph{finite path} $p$ in an LTS $\LTS$ is a sequence
$s_1a_1s_2a_2\ldots a_{n-1}s_n$
with $s_i \xrightarrow{a_i } s_{i+1}$ for all $1 \leq i < n$.
Similarly, an \emph{infinite path} $p$ in $\LTS$ is a sequence
%
$s_1a_1s_2a_2\ldots$
with $s_i \xrightarrow{a_i } s_{i+1}$ for all $i \in \N$.
Let $\paths_*(\LTS)$ and $\paths_\omega(\LTS)$ be the sets of all finite and
infinite paths of $\LTS$ beginning in the initial states, respectively. 
The sequence $a_1 a_2 \cdots a_n$ is a \emph{finite trace} of $\LTS$
if there is a finite path
$s_1a_1s_2a_2\ldots a_{n}s_{n+1} \in \paths_*(\LTS)$,
and $a_1 a_2 \cdots$ is an \emph{infinite trace} if there is an
infinite path
$s_1a_1s_2a_2\ldots\in \paths_\omega(\LTS)$.
If $p$ is a path, we let $\trace(p)$ denote the trace defined by $p$.
Let $\traces_*(\LTS)$ and $\traces_\omega(\LTS)$ be the sets of all
finite and infinite traces of $\LTS$, respectively.
We will use $\LTS_1 \subseteq \LTS_2$ to denote that $\traces_\omega(\LTS_1) \subseteq \traces_\omega(\LTS_2)$.

\denseparagraph{Model-Based Conformance Tests.}

In the following we recall the basic notions of \ioco\ conformance
testing~\cite{DBLP:phd/basesearch/Tretmans92,DBLP:journals/cn/Tretmans96,DBLP:conf/fortest/Tretmans08}, and refer to the mentioned literature for more details.
In this setting, it is  assumed that the implemented system under test (IUT) 
$\SUT$ can be modelled as an IOTS while the specification of
the required behaviour is given in terms of a LTS $\mathcal{S}$.  The
idea of whether the IUT $\SUT$ \emph{conforms to} the
specification $\mathcal{S}$ is formalized by means of the \ioco\ relation
which we define in the following.

We first need to identify the \emph{quiescent} (or \emph{suspended})
states.  A state is quiescent whenever it cannot proceed autonomously,
i.e., it cannot produce an output. We will make each such state identifiable by
adding a quiescence transition to it, in the form of a loop with the distinct label $\delta$.

\begin{definition} \label{def:delta-closure}
  Let $\LTS = \langle Q, \Inputs, \Outputs, {\rightarrow}, q_0 \rangle$ be an LTS.
  The \emph{quiescence closure} (or \emph{$\quiescence$-closure}) of
  $\LTS$ is the LTS
  $\LTS_\quiescence \coloneqq \langle Q, \Inputs, \Outputs \cup \{\quiescence\}, {\rightarrow_\quiescence}, q_0 \rangle$
  with
  ${\rightarrow_\quiescence} \coloneqq  {\rightarrow}  \cup  \{{s \xrightarrow{\;\quiescence\;}_\quiescence s} \mid \forall o \in \Outputs, t \in Q: {s \mathrel{\ \not\!\!\!\xrightarrow{\;o\;}} t} \}$.
  Using this we define the \emph{suspension traces} of $\LTS$ by
  $\traces_*(\LTS_\quiescence)$.
\end{definition}
Let $\LTS$ be an LTS with initial state $q_0$ and
$\sigma=a_1\,a_2\ldots a_n\in\traces_*(\LTS)$.
We define $\LTS\after\sigma$ as the set
%
$\{q_n \mid q_0 a_1 q_1 a_2 \ldots a_n q_n\in\paths_*(\LTS)\}$.
For a state $q$, let
$\out(q)=\{o\in\Outputs\cup\{\delta\}\mid\exists q': q\xrightarrow{o}q'\}$
and for a set of states $Q'\subseteq Q$, let
$\out(Q')=\bigcup_{q\in Q'}\out(q)$.

The idea behind the \ioco\ relation is that any output produced by the
IUT must have been foreseen by its specification, and moreover, any input in
the IUT not foreseen in the specification may introduce new
functionality. \ioco\ captures this by harvesting concepts from refusal
testing.
As a result, $\SUT\ioco \textit{Spec}$ is defined to hold whenever
$\out(\SUT_\quiescence \after \sigma) \subseteq \out(\textit{Spec}_\quiescence \after \sigma)$
for all $\sigma\in\traces_*(\textit{Spec}_\quiescence)$.

The base principle of \emph{conformance testing} now is to assess by means of testing
whether the IUT conforms to its specification w.r.t.\ \ioco. An algorithm to derive a  corresponding test suite $T_{\textit{Spec}}$ is available~\cite{DBLP:journals/cn/Tretmans96,DBLP:conf/fortest/Tretmans08}, so that for any
IUT $\SUT$, $\SUT\ioco\textit{Spec}$ iff
$\SUT$ passes all tests in $T_{\textit{Spec}}$. 

It is important to remark that the specification in the  setting considered here is missing. Instead, we need to construct the specification from the standard inputs and the respective observed outputs, together with the distances and the tresholds given by the contract.  Furthermore, this needs to respect the ${\forall}-{\exists}$ interaction required by the cleanness property (Def.~\ref{def:ed-clean:react}).

\denseparagraph{Software Doping on LTS.}

To capture the notion of software
doping in the context of LTS, we
provide two projections of a trace,  projecting to a sequence of
the appearing inputs, respectively outputs.  To do this, we extend
the set of labels by adding the input $\NoInp$, that indicates that in
the respective step some output (or quiescence) was produced (but masking the precise output), and the output $\NoOutp$
that indicates that in this step some (masked) input was given.

The \emph{projection on inputs} $\mapInp{} : L^\omega \rightarrow
(\Inputs \cup \{\NoInp\})^\omega$ and the \emph{projection on outputs}
$\mapOut{} : L^\omega \rightarrow (\Outputs \cup \{\NoOutp\})^\omega$
are defined for all traces $\sigma$ and $k\in\N$ as follows:
 $\mapInp{\sigma} [k] \coloneqq
  \textbf{ if } \sigma[k] \in \Inputs \textbf{ then } \sigma[k]
  \textbf{ else } \NoInp$ and
 $\mapOut{\sigma} [k] \coloneqq
  \textbf{ if } \sigma[k] \in \Outputs \textbf{ then } \sigma[k]
  \textbf{ else } \NoOutp$.
They are lifted to sets of traces in the usual elementwise way.

\begin{definition} \label{def:LTS:Std}
  A LTS $\Std$ is \emph{standard} for a LTS $\LTS$, if for all
  $\sigma \in \traces_\omega(\Std)$ and $\sigma'\in
  \traces_\omega(\LTS)$, $\mapInp{\sigma} = \mapInp{\sigma'}$ implies
  $\sigma' \in \traces_\omega(\Std)$.
\end{definition}
The above definition provides our LTS-specific interpretation of the notion of $\Norm$ for a given program $P$ modelled in terms of LTS $\LTS$. $\Norm$ is implicitly determined as the input sequences $\mapInp{\traces_\omega(\Std)}$ occurring in $\Std$, which contains both the standard inputs and the outputs that are produced in $\LTS$, and altogether covers exactly the traces of $\LTS$ whose input sequences are in $\Norm$.
If instead  $\LTS$ and $\Norm\subseteq (\Inputs \cup \NoInp)^\omega$ are given, LTS $\Std$ can be defined such that
$\sigma\in\traces_\omega(\Std)$ iff $\mapInp{\sigma}\in\Norm$ and
$\sigma\in\traces_\omega(\LTS)$.  This $\Std$ is indeed
standard for $\LTS$ and contains exactly all traces of $\LTS$ whose
input sequences are in $\Norm$. 

In this new setting, we assume that the distance functions $d_\Inputs$
and $d_\Outputs$ run on traces containing labels $\NoInp$ and
$\NoOutp$, i.e. they are pseudometrics in
$(\Inputs\cup\{\NoInp\})^* \times (\Inputs\cup\{\NoInp\})^* \rightarrow \Reals_{\geq0}$
and
$(\Outputs\cup\{\NoOutp\})^* \times (\Outputs\cup\{\NoOutp\})^* \rightarrow \Reals_{\geq0}$,
respectively.

Now the definition of \robustly\ clean can be restated in terms of
LTS as follows.

\begin{definition}\label{def:ed-clean:LTS}
  Let $\LTS$ be an IOTS and $\Std$ a standard LTS.  $\LTS$ is
  \emph{\robustly\ clean} if for all
  $\sigma,\sigma'\in\traces_\omega(\LTS_\quiescence)$ such that
  $\sigma\in\traces_\omega(\Std_\quiescence)$ then for all $k\geq 0$ such
  that
  $d_\Inputs(\mapInp{\sigma[..j]},\mapInp{\sigma'[..j]})\leq\inpbound$
  for all $j\leq k$, the following holds:
  \begin{enumerate}[\hspace{0pt}1.\hspace{-3pt}]
  \item\label{def:ed-clean:LTS:i}%
  \mbox{there exists $\sigma''\!\in\traces_\omega(\LTS_\quiescence)$ s.t. 
    $\mapInp{\sigma'\!} = \mapInp{\sigma''\!}$ and
    $d_\qOutputs(\mapOut{\sigma[..k]},\mapOut{\sigma''[..k]})\leq\outpbound$\phantom{.}  }
  \item\label{def:ed-clean:LTS:ii}%
    \mbox{there exists $\sigma''\!\in\traces_\omega(\LTS_\quiescence)$ s.t. 
    $\mapInp{\sigma\!}=\mapInp{\sigma''\!}$ and
    $d_\qOutputs(\mapOut{\sigma'[..k]},\mapOut{\sigma''[..k]})\leq\outpbound$. }
  \end{enumerate}
\end{definition}
Following the principles of model-based testing, Def.~\ref{def:ed-clean:LTS} takes specific care of quiescence in a system. In order to properly consider quiescence in the context of robust cleanness it must be considered as a unique output. As a consequence, in the presence of  a contract $\Contract = \langle \Inputs, \Outputs, \Std, d_\Inputs, d_\Outputs, \inpbound, \outpbound \rangle$, we use -- instead of $\Std$, $\Outputs$ and $d_\Outputs$ -- the quiescence closure $\Std_\quiescence$ of $\Std$, $\qOutputs = \Outputs \cup \{ \quiescence \}$ and an extended output distance defined as
$d_{\Outputs_\quiescence}(\sigma_1, \sigma_2) \coloneqq d_\Outputs({\sigma_1}_{{\backslash}\quiescence}, {\sigma_2}_{{\backslash}\quiescence})$ if 
${\sigma_1[i]=\quiescence} \Leftrightarrow {\sigma_2[i]=\quiescence}$ for all $i$, 
and $d_{\Outputs_\quiescence}(\sigma_1, \sigma_2) \coloneqq \infty$ otherwise, where $\sigma_{{\backslash}\quiescence}$ is the same as $\sigma$ whith all $\quiescence$ removed.

Def.~\ref{def:ed-clean:LTS} echoes the semantics of
the HyperLTL  interpretation appearing in Proposition 19
of~\cite{DBLP:conf/esop/DArgenioBBFH17} restricted to programs with no
parameters.  Thus, the proof showing that Def.~\ref{def:ed-clean:LTS}
is the correct interpretation of Def.~\ref{def:ed-clean:react} in
terms of LTS, can be obtained  in a way  similar to that of Prop.~19
in~\cite{DBLP:conf/esop/DArgenioBBFH17}.

\section{Reference Implementation for Contracts}

As mentioned before, doping tests need to be based on a contract \Contract, which we assume given. \Contract\ specifies the domains $\Inputs$, $\Outputs$, a standard LTS $\Std$, the distances $d_\Inputs$ and $d_\Outputs$ and the bounds $\inpbound$ and $\outpbound$. 
%
%
We intuitively
expect the contract to be satisfiable in the sense that it never enforces a single input sequence of the implementation to keep outputs close enough to two different executions of the specification while their outputs stretch too far apart.   
We show such a problematic case in the following example.

%
\denseparagraph{Example 2.}
\begin{wrapfigure}[12]{r}{0.18\textwidth}
  \vspace*{-9mm}
  \hspace*{-3mm}
  \begin{tikzpicture}[
      ->,>=stealth',shorten >=1pt,
      state/.style={draw=black, circle},
      auto,
    ]
    \node[state] (s0)               {};
    \node        (g2) [below=6mm of s0] {};
    \node[state] (s1) [left=5mm of g2]  {};
    \node[state] (s3) [right=5mm of g2] {};
  
    \node[state] (s6) [below of=s3] {};

      \node [left of=s0, xshift=2mm, yshift=1mm] {$\qStd$};
    
    \path (s0) edge node[midway,left]  {$\scriptstyle i{-}\inpbound\ $} (s1)
               edge node[midway,right] {$\scriptstyle \ i{+}\inpbound$} (s3)
          (s3) edge node[midway,right] {$\scriptstyle o$} (s6)
          (s0) edge [loop right, draw=black] node[midway, text=black] {$\scriptstyle \quiescence$} (s0)
          (s6) edge [loop left, draw=black] node[midway, text=black] {$\scriptstyle \quiescence$} (s6)
          (s1) edge [loop below, draw=black] node[midway, text=black] {$\scriptstyle \quiescence$} (s1);
          
  \end{tikzpicture}
  
\vspace{0.1cm}

\hspace*{-7.5mm}
  \begin{tikzpicture}[
      ->,>=stealth',shorten >=1pt,
      state/.style={draw=black, circle},
      auto,
    ]
    \node[state] (s0)               {};
    \node        (g2) [below=6mm of s0] {};
    \node[state] (s1) [left=5mm of g2]  {};
    \node[state] (s3) [right=5mm of g2] {};
  
    \node [state] (s6) [below of=s3] {};
    \node [left of=s0, xshift=-.65cm, yshift=.2cm] {};

 \node [left of=s0, xshift=2mm, yshift=1mm] {$\LTS$};

    \node[state] (t1) [below=.55 of s0] {};
    \node (t2) [below of = t1] {};

    \path (s0) edge node[midway,left]  {$\scriptstyle i{-}\inpbound\ $} (s1)
               edge node[midway,right] {$\scriptstyle \ i{+}\inpbound$} (s3)
          (s3) edge node[midway,right] {$\scriptstyle o$} (s6);
          
     \path (s0) edge node[midway,right] {$\scriptstyle i$} (t1)
     		  (t1) edge node[midway,right] {$\scriptstyle x$} (t2);
  \end{tikzpicture}
    
\end{wrapfigure}
On the right a quiescence-closed standard LTS $\qStd$ for an implementation $\LTS$ (shown below) is  depicted. For simplicity 
some input transitions are omitted. 
Assume $\Outputs=\{o\}$ and $\Inputs = \{i, i-\inpbound, i+\inpbound\}$. Consider the transition labelled $x$ of~$\LTS$. This must be one of either $o$ or $\quiescence$, but we will see that either choice leads to a contradiction w.r.t. the output distances induced. 
The input projection of the middle path in $\LTS$ is  $i\;\NoInp$ and the input distance to $(i-\inpbound)\;\NoInp$ and $(i+\inpbound)\;\NoInp$ is exactly $\inpbound$, so both branches $(i+\inpbound)\;o$ and $(i-\inpbound)\;\quiescence$ of \qStd{} must be considered to determine $x$.  
For $x=o$, the output distance of $\NoOutp\;x$ to $\NoOutp\;o$ in the right branch of \qStd{} is $0$, i.e. less than $\outpbound$. However, $d_{\qOutputs}(\NoOutp\;\quiescence, \NoOutp\;o) = \infty > \outpbound$. Thus the output distance to the left branch of \qStd{} is too high  if picking $o$. 
Instead picking $x=\quiescence$ does not work either, for the symmetric reasons, the problem switches sides. 
Thus, neither picking $o$ nor $\quiescence$ for $x$ satisfies robust cleanness here. Indeed, no implementation satisfying robust cleanness exists  for the given contract. 

We would expect that a correct implementation fully entails the standard behaviour. So, to satisfy a contract, the standard behaviour itself must be robustly clean.
This and the need for satisfiability of particular inputs lead to Def.~\ref{def:satisfiableContract}.

\begin{definition}[Satisfiable Contract]
\label{def:satisfiableContract}
Let $\Inputs, \Outputs, \Std, d_\Inputs, d_\Outputs, \inpbound$ and $\outpbound$ define some contract $\Contract$. Let \emph{input} $\sigma_i \in (\Inputs \cup \{ \NoInp \})^\omega$ be the input projection of some trace. $\sigma_i$ is \emph{satisfiable} for \Contract\ if and only if  for every standard trace $\sigma_S \in \traces_\omega(\Std_\quiescence)$ and $k > 0$ such that for all $j \leq k$ $d_\Inputs(\mapInp{\sigma_i[..j], \sigma_S[..j]}) \leq \inpbound$ there is some implementation $\LTS$ that satisfies Def.~\ref{def:ed-clean:LTS}.\ref{def:ed-clean:LTS:ii} w.r.t. \Contract\ and has some trace $\sigma \in \traces_\omega(\LTS_\quiescence)$ with $\mapInp{\sigma} = \sigma_i$ and  $d_\qOutputs(\mapOut{\sigma[k]}, \mapOut{\sigma_S[k]}) \leq \outpbound$.

\Contract\ is \emph{satisfiable} if and only if all inputs $\sigma_i \in (\Inputs \cup \{ \NoInp \})^\omega$ are satisfiable for \Contract\ and if $\Std_\quiescence$ is robustly clean w.r.t. contract $\Contract_\Std = \langle \Norm, \Outputs, \Std, d_\Inputs, d_\Outputs,
  \inpbound, \outpbound \rangle$.
A contract that is not satisfiable is called \emph{unsatisfiable}.
\end{definition}

Given a satisfiable contract it is always possible to construct an implementation that is robustly clean w.r.t. to this contract. Furthermore, for every contract there is exactly one implementation (modulo trace equivalence)
that contains all possible outputs that satisfy robust cleanness. Such an implementation is called the \emph{largest implementation}.

\begin{definition}[Largest Implementation]
Let \Contract\ be a contract and $\LTS$ an implementation that is robustly clean w.r.t. \Contract. $\LTS$ is the \emph{largest implementation within \Contract} if and only if for every $\LTS'$ that is robustly clean w.r.t. \Contract\ it holds that $\traces_\omega(\LTS'_\quiescence) \subseteq \traces_\omega(\LTS_\quiescence)$.
\end{definition}

In the following, we will focus on the fragment of satisfiable contracts with standard behaviour defined by finite LTS. For unsatisfiable contracts, testing is not necessary, because every implementation is not robustly clean w.r.t. to $\Contract$. Finiteness of $\Std$ will be necessary to make testing feasible in practice. 
For simplicity we will further assume
\emph{past-forgetful} output distance functions. That is, 
$d_\Outputs(\sigma_1,\sigma_2)=d_\Outputs(\sigma'_1,\sigma'_2)$
whenever $\last(\sigma_1)=\last(\sigma'_1)$ and
$\last(\sigma_2)=\last(\sigma'_2)$ (where $\last(a_1\ a_2\ldots a_n)=a_n$.)
Thus, we simply assume that
$d_\Outputs:(\Outputs{\cup}\{\NoOutp\}\times\Outputs{\cup}\{\NoOutp\})\to\Reals_{\geq0}$, i.e., the output distances are determined by the last output only. 
We remark that
$d_\qOutputs(\quiescence,o)=\infty$
for all $o\neq\quiescence$.

We will now show how to construct the largest implementation for any contract (of the fragment we consider), which we name \emph{reference implementation}~\Spec.  
%
It 
is derived from $\mStdPlus$ by adding inputs and outputs in such a way that whenever  the input sequence  leading to a particular state is within $\inpbound$ distance of an input sequence
$\sigma_i$ of $\mStdPlus$, then the outputs possible in such a state
should be at most $\outpbound$ distant from those outputs possible in the unique
state on $\mStdPlus$ reached through $\sigma_i$.  This ensures that
$\Spec$ will satisfy condition 2) in Def.~\ref{def:ed-clean:LTS}.
\denseparagraph{Reference implementation.}
To construct the reference implementation $\Spec$ we decide to model the quiescence transitions explicitly instead of using the quiescence closure. We
preserve the property, that in each state of the LTS it is possible to
do an output or a quiescence transition. 
The construction of $\Spec$ proceeds by adding all transitions
that satisfy the second condition of Def.~\ref{def:ed-clean:LTS}.

\begin{definition}\label{def:testing:approxSpec}
  Given a 
  standard LTS
  $\mStdPlus = \langle Q, \Inputs, \Outputs, {\rightarrow_\mStdPlus}, \epsilon \rangle$,
  bounds $\inpbound$ and $\outpbound$, and distances
  $d_\Inputs$ and $d_\Outputs$, 
  the \emph{reference implementation} $\Spec$ is the LTS
  $\langle Q, \Inputs, \Outputs, {\rightarrow_{\Spec}}, \epsilon \rangle$
  where $\rightarrow_{\Spec}$ is defined by
  \[
    \infer{
      \sigma \xrightarrow{a}_{\Spec} \sigma \cdot a
    }{\begin{array}{l}
        \forall \sigma_i \in \mapInp{\traces_\omega(\mStdPlus)}:\\[.4ex]
        \quad
        (\forall j \leq |\sigma|+1:
        d_\Inputs(\mapInp{(\sigma \cdot a)}[..j], \sigma_i[..j]) \leq \inpbound)\\[.4ex]
        \qquad \Rightarrow
        \exists \sigma_S\in\traces_\omega(\mStdPlus):
        \mapInp{\sigma_S} = \sigma_i \land d_\qOutputs(\mapOut{a}, \mapOut{\sigma_S[|\sigma|+1]}) \leq \outpbound
      \end{array}
    }
  \]
\end{definition}
Notably, $\Spec$ is deterministic, since only
transitions of the form
$\sigma \xrightarrow{a}_{\Spec} \sigma \cdot a$
are added. 
As a consequence of this determinism, outputs and quiescence may coexists as options in a state, i.e. they are not mutually exclusive.

\begin{figure*}[t]
  \centering
    \begin{tikzpicture}[
        ->,>=stealth',shorten >=1pt,
        scale=0.75,
        state/.style={rectangle,draw,rounded corners=1.7mm,transform shape},
        auto,
      ]
      \node[state] (s0)               {$\,\epsilon\,$};
      \node[state] (s2) [below of=s0] {$i{+}[0,2\inpbound]$};
      \node        (g1) [left=.47cm of s2]  {};
      \node        (g3) [right=.47cm of s2] {};
      \node[state] (s1) [left=2.1cm of g1]  {$i{+}[-\inpbound,0)$};
      \node[state] (s3) [right=1.9cm of g3] {$\text{other}\_i$};
      \node[state] (s5) [below of=s1] {$i{+}[-\inpbound,0)\,o{+}[-\outpbound,2\outpbound]$};
      \node[state] (s4) [left=.1cm of s5]  {$i{+}[-\inpbound,0)\ \text{any}\_i$};
      \node[state] (s6) [below of=g1] {$i{+}[0,2\inpbound]\ \text{any}\_i$};
      \node[state] (s7) [below of=g3] {$i{+}[0,2\inpbound]\,o{+}[0,2\outpbound]$};
      \node[state] (s8) [below of=s3] {$\text{other}\_i\,\text{any}\_o$};
      \node[state] (s9) [right=.1cm of s8] {$\text{other}\_i\ \text{any}\_i$};

      \path (s0) edge[bend right=10] node[midway,left]  {$\scriptstyle i{+}[-\inpbound,0)\quad$}            (s1)
                 edge node[midway,right] {$\scriptstyle i{+}[0,2\inpbound]$}            (s2)
                 edge[bend left=10] node[midway,right] {$\scriptstyle \quad\text{other}\_i$}               (s3)
            (s1) edge node[midway,left]  {$\scriptstyle \text{any}\_i\ \ $}                 (s4)
                 edge node[midway,right] {$\scriptstyle o{+}[-\outpbound,2\outpbound]$} (s5)
            (s2) edge node[midway,left]  {$\scriptstyle \text{any}\_i\,$}                 (s6)
                 edge node[midway,right] {$\scriptstyle \,o{+}[0,2\outpbound]$}           (s7)
            (s3) edge node[midway,left]  {$\scriptstyle \text{any}\_o$}                 (s8)
                 edge node[midway,right] {$\scriptstyle \,\text{any}\_i$}                 (s9);
    \end{tikzpicture}
    
    \caption{The reference implementation $\Spec$ of $\Std$ in 
    Example~3.}\label{fig:ex:stdplusoverlineref}
\end{figure*}

\begin{wrapfigure}[7]{r}{0.27\textwidth}
\vspace{-0.73cm}
    \begin{tikzpicture}[
        ->,>=stealth',shorten >=1pt,
        state/.style={rectangle,draw,rounded corners=1.7mm},
        auto,
      ]
      \node[state] (s0)               {$s_0$};
      \node[state] (s2) [below of=s0] {$s_2$};
      \node[state] (s1) [left of=s2]  {$s_1$};
      \node[state] (s3) [right of=s2] {$s_3$};
      \node[state] (s4) [below of=s1] {$s_4$};
      \node[state] (s5) [below of=s2] {$s_5$};
      \node[state] (s6) [below of=s3] {$s_6$};
      \node [left of=s0, xshift=-.2cm, yshift=-2mm] {$\Std$};

      \path (s0) edge node[midway,left]  {$\scriptstyle i$}              (s1)
                 edge node[midway,left]  {$\scriptstyle i$}              (s2)
                 edge node[midway,right] {$\scriptstyle i{+}\inpbound$}  (s3)
            (s1) edge node[midway,left]  {$\scriptstyle o$}              (s4)
            (s2) edge node[midway,left]  {$\scriptstyle o{+}\outpbound$} (s5)
            (s3) edge node[midway,right] {$\scriptstyle o{+}\outpbound$} (s6);
    \end{tikzpicture}
\end{wrapfigure}
\denseparagraph{Example 3.} 
  Fig.~\ref{fig:ex:stdplusoverlineref} gives a schematic representation
  of the reference implementation $\Spec$ for the LTS 
  $\Std$ 
  on the right.
Input (output) actions are
  denoted with letter $i$ ($o$, respectively), quiescence transitions are omitted. 
  We use
  Euclidean distances $\parallel \!\! \cdot \!\! \parallel$, so that
  $d_\Inputs(i, i') \coloneqq \; \parallel \!\! i-i' \!\! \parallel$ and $d_\Outputs(o, o') \coloneqq \; \parallel \!\! o-o' \!\! \parallel$.
  For this example, the quiescence closure $\Std_\quiescence$ looks like $\Std$ but with
  $\quiescence$-loops in states $s_0$, $s_4$, $s_5$, and $s_6$.
  Label $r{+}[a,b]$ should be interpreted as any value $r' \in [a+r, b+r]$
  and similarly $r{+}[a,b)$ and $r{+}(a,b]$, appropriately considering closed
  and open boundaries; 
``$\text{other}\_i$'' represents any other input not explicitly considered leaving the same state; and
  ``$\text{any}\_i$'' and ``$\text{any}\_o$'' represent any possible
  input and output (including $\quiescence$), respectively.
  In any case $\NoInp$ and $\NoOutp$ are not considered since they are
  not part of the alphabet of the LTS.  Also, we note that any
  possible sequence of inputs becomes enabled in the last states (omitted in the picture).

\denseparagraph{Robust cleanness of reference implementation.}
In the following, the aim is to show that $\Spec$ is robustly clean. 
By construction, each state in $\Spec$ equals the trace that
leads to that state. In other words, $\last(p)=\trace(p)$ for any 
$p \in \paths_*(\Spec)$ can be shown by induction. 
As a consequence, a path in $\Spec$ can be completely
identified by the trace it defines.
%
%
The following lemma states that $\Spec$ preserves 
all traces of the standard \qStd\ it is constructed from.
This can be proven by using that $\Std_\quiescence$ is robustly clean w.r.t.\ the (satisfiable) contract
$\Contract$ (see Def.~\ref{def:satisfiableContract}).

\begin{restatable}{lemma}{prespecPreservesTraces}
\label{lemma:prespec-preserves-traces}
Let $\Spec$ be constructed
  from contract $\Contract = \langle \Inputs, \Outputs, \Std, d_\Inputs, d_\Outputs \rangle$.
  Then
  $\traces_\omega(\Std_\quiescence)\subseteq\traces_\omega(\Spec)$.
  \end{restatable}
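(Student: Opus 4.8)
The goal is to show $\traces_\omega(\Std_\quiescence) \subseteq \traces_\omega(\Spec)$, i.e. every suspension-trace of the standard LTS survives in the reference implementation. Since $\Spec$ is deterministic and every state $\sigma$ of $\Spec$ literally equals the trace leading to it, it suffices to show the following: for every finite prefix $\sigma\cdot a$ of a trace in $\traces_\omega(\Std_\quiescence)$, the transition $\sigma \xrightarrow{a}_\Spec \sigma\cdot a$ is present, i.e. the premise of the inference rule in Def.~\ref{def:testing:approxSpec} is satisfied. A straightforward induction on the length of the prefix then assembles these transitions into a full infinite path of $\Spec$ whose trace is the given one.

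So the plan is to fix $\tau \in \traces_\omega(\Std_\quiescence)$ and an arbitrary length $n$, set $\sigma := \tau[..n]$, $a := \tau[n+1]$, and verify the premise: for every $\sigma_i \in \mapInp{\traces_\omega(\mStdPlus)}$ such that $d_\Inputs(\mapInp{(\sigma\cdot a)}[..j], \sigma_i[..j]) \leq \inpbound$ for all $j \leq |\sigma|+1$, there exists $\sigma_S \in \traces_\omega(\mStdPlus)$ with $\mapInp{\sigma_S} = \sigma_i$ and $d_\qOutputs(\mapOut a, \mapOut{\sigma_S[|\sigma|+1]}) \leq \outpbound$. The key move is that $\tau$ itself is a standard trace, so we can use it as the "$\sigma$" role in condition~\ref{def:ed-clean:LTS:ii} of Def.~\ref{def:ed-clean:LTS}, and the arbitrary $\sigma_i$ determines (up to output choices masked by the input projection) the "$\sigma'$" role. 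Concretely, pick any standard trace $\sigma'$ with $\mapInp{\sigma'} = \sigma_i$; such a trace exists because $\sigma_i \in \mapInp{\traces_\omega(\mStdPlus)}$, and by Def.~\ref{def:LTS:Std} the set $\traces_\omega(\mStdPlus)$ is closed under changing outputs while keeping the input projection fixed. The input-distance hypothesis on $\sigma_i$ translates exactly into the hypothesis $d_\Inputs(\mapInp{\sigma[..j]}, \mapInp{\sigma'[..j]}) \leq \inpbound$ for all $j \leq k := |\sigma|+1$ needed to invoke robust cleanness of $\Std_\quiescence$ w.r.t. $\Contract_\Std$ (which holds because $\Contract$ is satisfiable, Def.~\ref{def:satisfiableContract}). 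Applying condition~\ref{def:ed-clean:LTS:ii} with this $\sigma$ (playing the role of the standard trace) and $\sigma'$ yields a standard trace $\sigma''$ with $\mapInp{\sigma''} = \mapInp{\sigma} = \mapInp{\tau[..n]}$ — wait, that is the wrong direction; instead one applies condition~\ref{def:ed-clean:LTS:ii} reading $\tau$ as the standard trace and $\sigma'$ as the second trace, obtaining $\sigma''$ with $\mapInp{\sigma''} = \mapInp{\tau}$ and $d_\qOutputs(\mapOut{\sigma'[..k]}, \mapOut{\sigma''[..k]}) \leq \outpbound$; but we need a witness in $\Spec$'s construction whose input projection is $\sigma_i = \mapInp{\sigma'}$, so the roles must be arranged so that the matching witness $\sigma_S$ has $\mapInp{\sigma_S} = \sigma_i$. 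The cleanest arrangement is: $\sigma'$ (the $\sigma_i$-carrier) need not be standard, $\tau$ is standard; apply condition~\ref{def:ed-clean:LTS:ii} which asks that for the standard trace and its partner $\sigma'$ there be $\sigma''$ with $\mapInp{\sigma''} = \mapInp{\tau}$ — no. The correct one is condition~\ref{def:ed-clean:LTS:ii} in the form: given standard $\sigma$ and arbitrary $\sigma'$ within input-distance, there is $\sigma''$ with $\mapInp{\sigma''} = \mapInp{\sigma}$ matching $\sigma'$'s outputs. That still gives an input projection equal to $\tau$'s, not $\sigma_i$'s.

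Therefore the right reading is to use the \emph{other} direction, and here the subtlety — and the main obstacle — is getting the quantifier roles in Def.~\ref{def:ed-clean:LTS} lined up with the direction of the implication in Def.~\ref{def:testing:approxSpec}. The transition rule for $\Spec$ is built precisely to enforce condition~\ref{def:ed-clean:LTS:ii}: for an arbitrary input continuation that stays $\inpbound$-close to \emph{some} standard input, the chosen output must be $\outpbound$-close to that standard input's output. When the trace $\tau$ we are trying to preserve is itself standard, the required witness $\sigma_S$ can be taken to be a standard trace extending $\sigma_i$ directly, and the needed distance bound $d_\qOutputs(\mapOut a, \mapOut{\sigma_S[k]}) \leq \outpbound$ follows from robust cleanness of $\Std_\quiescence$ w.r.t. $\Contract_\Std$ applied with $\tau$ as the "$\sigma'$"-trace (the non-standard partner) and any standard trace with input projection $\sigma_i$ as the "$\sigma$"-trace — i.e. condition~\ref{def:ed-clean:LTS:ii} of Def.~\ref{def:ed-clean:LTS} instantiated on $\Std_\quiescence$ gives exactly $\sigma_S$ matching $\tau$'s output at step $k$, which is $\mapOut a$. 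One must also handle the boundary/quiescence bookkeeping: since $\Spec$ models quiescence explicitly rather than via closure, check that a $\quiescence$-step of $\tau$ in $\Std_\quiescence$ corresponds to a legitimate $\quiescence$-transition addable by the rule, using $d_\qOutputs(\quiescence,\quiescence)=0$ and $d_\qOutputs(\quiescence,o)=\infty$, so that the distance condition forces and permits exactly the right choice. Finally, the induction that lifts the per-step transition existence to an infinite path is routine given that $\Spec$ is deterministic and its states are traces, so no extra reachability argument is needed. I expect the only genuinely delicate point to be the correct pairing of the $\forall/\exists$ roles and confirming that "$\sigma_i \in \mapInp{\traces_\omega(\mStdPlus)}$ implies some standard trace realises $\sigma_i$", which is immediate from Def.~\ref{def:LTS:Std}.
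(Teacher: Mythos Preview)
Your proposal is correct and follows essentially the same route as the paper: reduce to showing that every finite prefix of a trace in $\traces_\omega(\Std_\quiescence)$ is a trace of $\Spec$ (the paper justifies the passage to infinite traces via determinism/image-finiteness), then verify the transition rule of Def.~\ref{def:testing:approxSpec} step by step using robust cleanness of $\Std_\quiescence$ w.r.t.\ $\Contract_\Std$ (Def.~\ref{def:satisfiableContract}), applying condition~\ref{def:ed-clean:LTS:ii} with the $\sigma_i$-carrier standard trace in the ``$\sigma$'' role and $\tau$ in the ``$\sigma'$'' role to produce the required witness $\sigma_S$. Despite the meandering exposition with several abandoned role assignments left in the text, the final arrangement you settle on is exactly the paper's; for a clean write-up, drop the false starts and state the role assignment once.
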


The following theorem states that the reference implementation
$\Spec$ is robustly clean w.r.t. the contract it was constructed from.

\begin{restatable}{theorem}{SCleanImpliesRClean}
\label{thm:SCleanImpliesRClean}
Let \Spec\ be constructed from \Contract. Then \Spec\ is robustly clean w.r.t. \Contract.
\end{restatable}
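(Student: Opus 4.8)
The plan is to verify directly that $\Spec$ satisfies both conditions of Definition~\ref{def:ed-clean:LTS}. Fix traces $\sigma, \sigma' \in \traces_\omega(\Spec_\quiescence)$ with $\sigma \in \traces_\omega(\Std_\quiescence)$ — note that by Lemma~\ref{lemma:prespec-preserves-traces} such $\sigma$ are also traces of $\Spec$ proper, modulo the quiescence bookkeeping, which is the reconciliation step one must be careful about since $\Spec$ models quiescence explicitly whereas Def.~\ref{def:ed-clean:LTS} uses the closure. Fix $k \geq 0$ with $d_\Inputs(\mapInp{\sigma[..j]},\mapInp{\sigma'[..j]}) \leq \inpbound$ for all $j \leq k$. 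Condition~2 is the easy direction and should be handled first: I would argue that since $\Spec$ is input-enabled and deterministic, the prefix $\sigma'[..k]$ can be extended along the input sequence $\mapInp{\sigma'}$ to a full trace $\sigma'' \in \traces_\omega(\Spec_\quiescence)$ with $\mapInp{\sigma''} = \mapInp{\sigma'}$; the real content is that the $k$-th output of $\sigma'$ is within $\outpbound$ of the $k$-th output of $\sigma$. But $\sigma[..k]$ being a prefix of a trace of $\Spec$ means the inference rule of Def.~\ref{def:testing:approxSpec} fired for the transition producing $\sigma[k]$ — wait, actually here I want it the other way: $\sigma'[k]$ was produced by $\Spec$, so the inference rule fired with $\sigma = \sigma'[..k{-}1]$ and $a = \sigma'[k]$; since $\sigma \in \traces_\omega(\Std_\quiescence)$ and $d_\Inputs$ stays within $\inpbound$ along all prefixes up to $k$, the input projection $\mapInp{\sigma}$ is one of the $\sigma_i$ quantified over in the rule, so the rule's conclusion hands us a standard trace $\sigma_S$ with $\mapInp{\sigma_S} = \mapInp{\sigma}$ and $d_\qOutputs(\mapOut{\sigma'[k]}, \mapOut{\sigma_S[k]}) \leq \outpbound$. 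Since $\sigma \in \traces_\omega(\Std_\quiescence)$ too with the same input projection, I can take $\sigma_S = \sigma$ (or argue the standard LTS property of $\Std$ forces $\sigma_S$ and $\sigma$ to have the same outputs at position $k$ — this needs the past-forgetful assumption on $d_\Outputs$ so that only $\sigma_S[k]$ matters), giving condition~2 with $\sigma'' = \sigma'$ (suitably completed).

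For condition~1 the witness $\sigma''$ must have input projection $\mapInp{\sigma'}$ and $k$-th output within $\outpbound$ of $\sigma$'s $k$-th output; this is where satisfiability of the contract is needed. Here I would use that $\Contract$ is satisfiable: the input $\sigma_i := \mapInp{\sigma'}$ is satisfiable for $\Contract$, and since $\sigma \in \traces_\omega(\Std_\quiescence)$ is a standard trace with $d_\Inputs(\mapInp{\sigma_i[..j]}, \sigma[..j]) \leq \inpbound$ for $j \leq k$, Definition~\ref{def:satisfiableContract} guarantees there is a value $a$ for the $k$-th output with $d_\qOutputs(\mapOut{a}, \mapOut{\sigma[k]}) \leq \outpbound$ realizable by \emph{some} robustly clean implementation. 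The point is then to show that the inference rule of Def.~\ref{def:testing:approxSpec} will \emph{also} fire for this $a$ at the state $\sigma'[..k{-}1]$ of $\Spec$: one must check that for every $\sigma_i' \in \mapInp{\traces_\omega(\mStdPlus)}$ whose prefixes up to $|\sigma'[..k{-}1]|+1 = k$ stay within $\inpbound$ of $\mapInp{\sigma'[..k]}$, there is a standard trace $\sigma_S'$ with that input projection and $d_\qOutputs(\mapOut{a}, \mapOut{\sigma_S'[k]}) \leq \outpbound$. The triangle inequality for $d_\Inputs$ links $\sigma_i'$ to $\sigma$ via $\sigma'$ — but the bound would be $2\inpbound$, not $\inpbound$, so this naive route fails; instead the correct argument is that robust cleanness of $\Std_\quiescence$ itself (the second clause of satisfiability, via $\Contract_\Std$) propagates the constraint: since $\sigma$ and $\sigma_S'$ are both standard, close to the same input $\sigma_i'$-region, their $k$-th outputs are within $\outpbound$ of each other, and combined with the choice of $a$ close to $\sigma[k]$, past-forgetfulness lets the distances compose in the one place it matters. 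Once the rule fires, $\Spec$ has the transition $\sigma'[..k{-}1] \xrightarrow{a}_\Spec \sigma'[..k{-}1]\cdot a$, and input-enabledness extends this to the required $\sigma''$.

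The main obstacle I anticipate is precisely this composition-of-distances bookkeeping in condition~1: making sure one never needs a $2\inpbound$ slack, which would break the argument. The resolution should lean on the fact that satisfiability bakes in robust cleanness of $\Std_\quiescence$, so the "matching standard trace" is handed to us with the right $\outpbound$ bound rather than being reconstructed by triangle inequality. A secondary technical nuisance is the quiescence modelling mismatch — $\Spec$ uses explicit $\quiescence$-transitions and (being deterministic) allows real outputs and $\quiescence$ to coexist, whereas $\Std_\quiescence$ and Def.~\ref{def:ed-clean:LTS} use the closure where $\quiescence$ is mutually exclusive with outputs; I would handle this by noting $d_{\qOutputs}(\quiescence,o)=\infty$ forces any $\outpbound$-close pair of outputs to agree on quiescence, so the extra $\quiescence$-transitions in $\Spec$ never help or hurt the distance conditions, and the suspension-trace semantics line up. With those two points pinned down, the rest is routine unwinding of Definitions~\ref{def:ed-clean:LTS}, \ref{def:satisfiableContract}, and \ref{def:testing:approxSpec}.
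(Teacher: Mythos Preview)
Your treatment of condition~\ref{def:ed-clean:LTS:ii} is essentially the paper's argument (its Lemma~\ref{lemma:testing:RefSatisfiesRule2}): the transition $\sigma'[..k{-}1]\xrightarrow{\sigma'[k]}_\Spec\sigma'[..k]$ fired, so instantiating the rule's universal with $\sigma_i=\mapInp{\sigma}$ yields a standard trace $\sigma_S$ with $\mapInp{\sigma_S}=\mapInp{\sigma}$ and $d_\qOutputs(\mapOut{\sigma'[k]},\mapOut{\sigma_S[k]})\le\outpbound$. But your last sentence goes astray: the witness for condition~\ref{def:ed-clean:LTS:ii} is $\sigma''=\sigma_S$, not $\sigma''=\sigma'$ (whose input projection is wrong). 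You cannot ``take $\sigma_S=\sigma$'' either---the rule hands you \emph{some} $\sigma_S$, possibly different from $\sigma$. That $\sigma_S\in\traces_\omega(\Spec)$ is exactly Lemma~\ref{lemma:prespec-preserves-traces}.

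For condition~\ref{def:ed-clean:LTS:i} your approach has a real gap, and your suggested resolution does not close it. You correctly diagnose that verifying the inference rule for a candidate output $a$ at state $\sigma'[..k{-}1]$ requires matching \emph{every} standard input $\sigma_i'$ within $\inpbound$ of $\sigma'$, and that triangulating through $\sigma'$ places such $\sigma_i'$ only within $2\inpbound$ of $\sigma$. Robust cleanness of $\Std_\quiescence$ (the second clause of satisfiability) does not help here: it would relate two standard traces that are $\inpbound$-close, but $\sigma$ and any standard trace with input projection $\sigma_i'$ are, in general, only $2\inpbound$-close---so there is no way to compose the output bounds without picking up a factor of~$2$.

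The paper's route is different and sidesteps the issue entirely. It first proves an intermediate ``largest'' lemma: every LTS~$\LTS$ satisfying condition~\ref{def:ed-clean:LTS:ii} has $\traces_\omega(\LTS_\quiescence)\subseteq\traces_\omega(\Spec)$. The proof of that lemma is where the real leverage lies: given $\sigma''\in\traces_\omega(\LTS_\quiescence)$, to fire each transition of $\sigma''$ in $\Spec$ one needs, for every standard input $\sigma_i$ close to $\sigma''$, a matching standard trace; condition~\ref{def:ed-clean:LTS:ii} of $\LTS$ (applied to $\sigma''$ and any standard $\sigma_{io}$ with $\mapInp{\sigma_{io}}=\sigma_i$) yields some $\tau\in\traces_\omega(\LTS_\quiescence)$ with $\mapInp{\tau}=\sigma_i$ and the right output bound, and then the \emph{standardness} property of~$\Std$ (Def.~\ref{def:LTS:Std}) forces $\tau\in\traces_\omega(\Std_\quiescence)$. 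With this lemma in hand, condition~\ref{def:ed-clean:LTS:i} for $\Spec$ is immediate: satisfiability gives an~$\LTS$ satisfying condition~\ref{def:ed-clean:LTS:ii} together with $\sigma''\in\traces_\omega(\LTS_\quiescence)$ having $\mapInp{\sigma''}=\mapInp{\sigma'}$ and $d_\qOutputs(\mapOut{\sigma''[k]},\mapOut{\sigma[k]})\le\outpbound$; the lemma then places $\sigma''$ in $\traces_\omega(\Spec)$. So the key ingredients you are missing are (i) to use the \emph{whole} implementation~$\LTS$ that satisfiability provides (not just the single output value), and (ii) to invoke standardness of $\Std$, not robust cleanness of $\Std_\quiescence$, to turn $\LTS$-traces with standard input into standard traces.

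Your remarks on the quiescence bookkeeping are sound; the paper packages this as a separate lemma showing $\Spec_\quiescence=\Spec$.
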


Furthermore, it is not difficult to show that $\Spec$ is indeed the largest implementation within the contract it was constructed from.

\begin{restatable}{theorem}{RefIsLargestImplementation}
\label{thm:testing:RefIsLargestImplementation}
Let \Spec\ be constructed from contract \Contract. Then \Spec\ is the largest implementation within $\Contract$.
\end{restatable}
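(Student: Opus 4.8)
The plan is to establish the two things that being the \emph{largest implementation within $\Contract$} asks for: that $\Spec$ is robustly clean w.r.t.\ $\Contract$ — which is exactly Theorem~\ref{thm:SCleanImpliesRClean} — and that every IOTS $\LTS'$ that is robustly clean w.r.t.\ $\Contract$ satisfies $\traces_\omega(\LTS'_\quiescence)\subseteq\traces_\omega(\Spec_\quiescence)$. Since the $\quiescence$-closure only adds transitions, $\traces_\omega(\Spec)\subseteq\traces_\omega(\Spec_\quiescence)$, so it suffices to show $\traces_\omega(\LTS'_\quiescence)\subseteq\traces_\omega(\Spec)$. For this I would exploit that $\Spec$ is deterministic and that, as observed before Lemma~\ref{lemma:prespec-preserves-traces}, each state of $\Spec$ is the unique trace reaching it: consequently a sequence $\rho$ lies in $\traces_\omega(\Spec)$ iff for every $m\ge 0$ the edge $\rho[..m]\xrightarrow{\rho[m{+}1]}_{\Spec}\rho[..m{+}1]$ exists, i.e.\ iff the premise of the inference rule of Def.~\ref{def:testing:approxSpec} holds for $\sigma:=\rho[..m]$ and $a:=\rho[m{+}1]$. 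Hence, fixing $\rho\in\traces_\omega(\LTS'_\quiescence)$, the whole task reduces to checking that premise at every prefix of $\rho$.

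For a fixed $m$ put $\sigma:=\rho[..m]$ and $a:=\rho[m{+}1]$, and let $\sigma_i\in\mapInp{\traces_\omega(\mStdPlus)}$. If $\sigma_i$ is not $\inpbound$-close to $\mapInp{\sigma\cdot a}$ on all prefixes up to length $m{+}1$, the implication in Def.~\ref{def:testing:approxSpec} is vacuous and there is nothing to prove, so $\Spec$ already permits every out-of-tube step. Otherwise, choose any standard trace $\tau\in\traces_\omega(\mStdPlus)$ with $\mapInp{\tau}=\sigma_i$; because a robustly clean implementation realises the standard behaviour, $\tau\in\traces_\omega(\LTS'_\quiescence)$. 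Applying Def.~\ref{def:ed-clean:LTS}.\ref{def:ed-clean:LTS:ii} to $\LTS'$ with standard trace $\tau$, deviating trace $\rho$, and index $k:=m{+}1$ — the input-distance hypotheses hold since $\mapInp{\tau}=\sigma_i$ is $\inpbound$-close to $\mapInp{\rho}$ on every prefix up to $m{+}1$ — yields some $\sigma''\in\traces_\omega(\LTS'_\quiescence)$ with $\mapInp{\sigma''}=\sigma_i$ and $d_\qOutputs(\mapOut{\rho[..k]},\mapOut{\sigma''[..k]})\le\outpbound$. Now $\sigma''$ has the same input projection as the standard trace $\tau$, so by standardness of $\Std$ for $\LTS'$ (in its $\quiescence$-closed form) $\sigma''\in\traces_\omega(\mStdPlus)$; and by the past-forgetful assumption on $d_\Outputs$, together with the way $d_\qOutputs$ treats $\quiescence$, the prefix-level bound collapses to $d_\qOutputs(\mapOut{a},\mapOut{\sigma''[m{+}1]})\le\outpbound$. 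Thus $\sigma_S:=\sigma''$ is exactly the witness the inference rule requires, so the edge exists and the induction step goes through.

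Combining the two parts with the definition of largest implementation closes the argument. I expect the fiddly-but-routine work to be the $\quiescence$-bookkeeping: using the contract convention so that $\mStdPlus=\Std_\quiescence$, reading ``standard LTS'' in Def.~\ref{def:ed-clean:LTS} as standard for the $\quiescence$-closure, and justifying $\traces_\omega(\Spec)\subseteq\traces_\omega(\Spec_\quiescence)$. The genuine obstacle is the $\forall$--$\exists$ alternation in the inductive step: the construction rule of Def.~\ref{def:testing:approxSpec} insists on a witness that is a \emph{standard} trace over \emph{exactly} the quantified input $\sigma_i$, whereas robust cleanness of $\LTS'$ (Def.~\ref{def:ed-clean:LTS}.\ref{def:ed-clean:LTS:ii}) only returns some trace of $\LTS'$ with that projection. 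Bridging this is precisely where one needs both that a robustly clean implementation w.r.t.\ a (satisfiable) contract contains the standard behaviour — so the seed trace $\tau$ lies in $\LTS'_\quiescence$ — and that, by standardness, it cannot deviate from the standard on standard inputs — so the returned witness $\sigma''$ falls back into $\traces_\omega(\mStdPlus)$ — together with the past-forgetful simplification to match the single-step output distance used in the rule.
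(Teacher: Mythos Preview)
Your argument is essentially the paper's own proof of Lemma~\ref{lemma:spec-is-largest-clean2} unwound into a direct forward verification of the transition rule rather than a proof by contradiction; the key steps (pick a standard trace with input projection $\sigma_i$, invoke Def.~\ref{def:ed-clean:LTS}.\ref{def:ed-clean:LTS:ii}, push the resulting witness back into $\traces_\omega(\mStdPlus)$ via Def.~\ref{def:LTS:Std}, and use past-forgetfulness to match the single-symbol distance in Def.~\ref{def:testing:approxSpec}) are identical.

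Two small remarks. First, the paper additionally checks that $\Spec$ is an IOTS (input-enabled) via Lemma~\ref{lemma:R-is-IOTS}; you rely on this implicitly through Theorem~\ref{thm:SCleanImpliesRClean}, so it is worth naming. Second, the step you flag --- that the chosen standard trace $\tau$ lies in $\traces_\omega(\LTS'_\quiescence)$ --- is not a consequence of robust cleanness per se: Def.~\ref{def:ed-clean:LTS} quantifies only over $\sigma\in\traces_\omega(\LTS'_\quiescence)\cap\traces_\omega(\Std_\quiescence)$. The paper's proof makes the very same move without comment; the clean way to justify it is either to assume $\Std_\quiescence\subseteq\LTS'_\quiescence$ (as the paper does explicitly in later theorems such as Theorem~\ref{thm:testing:cleanImpliesIoco}), or to observe that input-enabledness of $\LTS'$ together with the $\quiescence$-closure guarantees some trace of $\LTS'_\quiescence$ with input projection $\sigma_i$, which by Def.~\ref{def:LTS:Std} then already lies in $\traces_\omega(\mStdPlus)$ and can serve as $\tau$.
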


\section{Model-Based Doping Tests}
\label{sec:testGeneration}

Following the conceptual ideas behind \ioco, we need to construct a specification that is compatible with our notion of robust
cleanness in such a way that a test suite can be derived. Intuitively, such a specification must be able to foresee every behaviour of the system that is allowed by the contract. We will take the reference implementation from the previous section as this specification. Indeed 
 we claim that $\Spec$ is constructed in 
such a way that whenever an IUT $\SUT$ is robustly clean, $\SUT\ioco\Spec$ holds. The latter translates to
\begin{equation}
  \forall \sigma \in \traces_*(\Spec_\quiescence):
  \out(\SUT_\quiescence \after \sigma) \subseteq \out(\Spec_\quiescence \after \sigma).
  \label{eq:doping:test}
\end{equation}

\vspace{-3mm}

\begin{restatable}{theorem}{cleanImpliesIoco}
\label{thm:testing:cleanImpliesIoco}
  Let $\Contract$ be a contract with standard $\Std$, let IOTS $\SUT$ be robustly clean w.r.t. $\Contract$ and with $\Std_\quiescence \subseteq \SUT_\quiescence$.
If $\Spec$ is constructed from $\Contract$, 
  then $\SUT \ioco \Spec$.
\end{restatable}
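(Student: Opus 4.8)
The plan is to unfold the definition of \ioco\ and reduce the required inclusion~\eqref{eq:doping:test} to a statement about single outputs that can be discharged using robust cleanness of $\SUT$ together with the defining rule of $\Spec$ (Def.~\ref{def:testing:approxSpec}). Concretely, fix $\sigma \in \traces_*(\Spec_\quiescence)$ and an output $a \in \out(\SUT_\quiescence \after \sigma)$; the goal is $a \in \out(\Spec_\quiescence \after \sigma)$. Since $\Spec$ (and hence $\Spec_\quiescence$) is deterministic with $\last(p)=\trace(p)$, the state $\Spec_\quiescence \after \sigma$ is the single ``state'' $\sigma$, and membership $a\in\out(\Spec_\quiescence\after\sigma)$ amounts to showing that the transition $\sigma \xrightarrow{a}_{\Spec_\quiescence}\sigma\cdot a$ exists — i.e.\ that the premise of the inference rule in Def.~\ref{def:testing:approxSpec} holds for $\sigma$ and $a$ (treating the $\quiescence$-loops added by the closure separately, see below).

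First I would establish that $\sigma \in \traces_*(\Spec_\quiescence)$ together with $a \in \out(\SUT_\quiescence \after \sigma)$ implies that $\sigma$ (with $\quiescence$ reinterpreted appropriately) is a suspension trace of $\SUT_\quiescence$: this uses the hypothesis $\Std_\quiescence \subseteq \SUT_\quiescence$ and Lemma~\ref{lemma:prespec-preserves-traces} to get that the standard-compatible part of $\Spec$ is already realised in $\SUT$, plus the observation that any trace of $\Spec_\quiescence$ not accounted for that way is a trace whose input projection is too far from every standard input, in which case $\SUT$'s behaviour is vacuously unconstrained but $\SUT$ being input-enabled still lets us follow $\sigma$. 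Next, take an arbitrary standard input $\sigma_i \in \mapInp{\traces_\omega(\mStdPlus)}$ with $d_\Inputs(\mapInp{(\sigma\cdot a)}[..j],\sigma_i[..j]) \le \inpbound$ for all $j \le |\sigma|+1$. I must produce $\sigma_S \in \traces_\omega(\mStdPlus)$ with $\mapInp{\sigma_S}=\sigma_i$ and $d_\qOutputs(\mapOut{a},\mapOut{\sigma_S[|\sigma|+1]}) \le \outpbound$. This is exactly where robust cleanness of $\SUT$ (Def.~\ref{def:ed-clean:LTS}, clause~1) is invoked: the trace of $\SUT_\quiescence$ extending $\sigma$ by $a$ is within $\inpbound$ of $\sigma_i$, which by $\Std_\quiescence\subseteq\SUT_\quiescence$ corresponds to a standard trace, so clause~1 hands back a trace $\sigma''$ of $\SUT_\quiescence$ with the right input projection and output distance $\le \outpbound$ at position $|\sigma|+1$; pushing this back to $\mStdPlus$ via the standardness of $\Std$ (Def.~\ref{def:LTS:Std}) and the explicit-quiescence modelling gives the required $\sigma_S$. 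Assembling these witnesses verifies the premise of the rule, hence $\sigma\xrightarrow{a}_{\Spec}\sigma\cdot a$, hence $a \in \out(\Spec_\quiescence \after \sigma)$.

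The main obstacle I anticipate is the careful bookkeeping around quiescence: $\Spec$ is built with \emph{explicit} $\quiescence$-transitions while $\ioco$ is phrased over the quiescence closure $\Spec_\quiescence$, and the definition of robust cleanness (Def.~\ref{def:ed-clean:LTS}) uses the closure $\LTS_\quiescence$ with the extended distance $d_{\Outputs_\quiescence}$ that sends $(\quiescence,o)$ to $\infty$. One must check that adding $\quiescence$-loops in the closure of $\Spec$ does not create outputs unforeseen by the rule — intuitively it cannot, because a state of $\Spec$ already offers $\quiescence$ exactly when the rule permits it, so the closure adds nothing — and that the case $a=\quiescence$ is handled symmetrically (here $d_\qOutputs(\quiescence,\sigma_S[\cdot])$ forces $\sigma_S[\cdot]=\quiescence$, matching the "$\quiescence$ as unique output" convention). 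A second, more routine, subtlety is translating between the $\omega$-trace formulation of Def.~\ref{def:ed-clean:LTS} and the finite-trace formulation of $\ioco$: every finite suspension trace must be extended to an infinite one before clause~1 applies, which is legitimate because both $\SUT_\quiescence$ and $\Spec_\quiescence$ are non-blocking (each state can always emit an output or $\quiescence$). Modulo these points, the argument is a direct match between the $\forall$–$\exists$ shape of clause~1 of robust cleanness and the $\forall$–$\exists$ shape of the inference rule defining $\rightarrow_\Spec$.
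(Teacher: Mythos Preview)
Your overall strategy is sound, but there is one concrete slip: you invoke clause~\ref{def:ed-clean:LTS:i} of Def.~\ref{def:ed-clean:LTS} where clause~\ref{def:ed-clean:LTS:ii} is needed. Clause~\ref{def:ed-clean:LTS:i} returns a $\sigma''$ whose input projection matches the \emph{non-standard} trace and whose output is close to the \emph{standard} one; to discharge the premise of the rule in Def.~\ref{def:testing:approxSpec} you need the converse --- a $\sigma''$ with input projection equal to the standard input $\sigma_i$ and with output close to the IUT output $a$ at position $|\sigma|{+}1$ --- which is precisely what clause~\ref{def:ed-clean:LTS:ii} delivers. With that correction, pushing $\sigma''$ into $\mStdPlus$ via Def.~\ref{def:LTS:Std} works as you describe. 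A smaller point: showing that $\sigma\in\traces_*(\SUT_\quiescence)$ needs no appeal to $\Std_\quiescence\subseteq\SUT_\quiescence$ or Lemma~\ref{lemma:prespec-preserves-traces}; it is immediate from $\out(\SUT_\quiescence\after\sigma)\neq\emptyset$.

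The paper takes a more modular route: rather than re-verifying the rule of Def.~\ref{def:testing:approxSpec} at each $\sigma\cdot a$, it invokes Theorem~\ref{thm:testing:RefIsLargestImplementation} (that $\Spec$ is the \emph{largest} robustly clean implementation) to obtain $\traces_\omega(\SUT_\quiescence)\subseteq\traces_\omega(\Spec)$ in one step. Then $\sigma\cdot a\in\traces_*(\SUT_\quiescence)$ immediately yields $\sigma\cdot a\in\traces_*(\Spec)=\traces_*(\Spec_\quiescence)$ (the equality being your anticipated ``the closure adds nothing'' observation, which the paper isolates as a separate lemma), whence $a\in\out(\Spec_\quiescence\after\sigma)$. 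Your direct argument is essentially an inlined version of the clause-\ref{def:ed-clean:LTS:ii} lemma underlying that theorem, so the two proofs coincide at the technical core; the paper's version simply factors the work through the ``largest implementation'' result instead of reproving it on the spot.
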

\noindent The key observations to prove this theorem are: 
\begin{inparaenum}[(i)]
\item the reference implementation is the largest implementation within the contract, i.e. if the IUT is robustly clean, then all its traces are covered by \Spec, and
\item
  by construction of $\Spec$ and satisfiability of \Contract, the suspension traces of $\Spec$ are exactly its finite traces.  
\end{inparaenum}

\begin{algorithm}[t]
    \caption{Doping Test (\DT)}
\algsetup{linenodelimiter=}
\algsetup{linenosize=\tiny}
\label{algo:dynamictest}
\begin{algorithmic}[1]
\REQUIRE history $h \in (\Inputs \cup \Outputs \cup \{ \quiescence \})^*$
\ENSURE \pass\ or \fail
\STATE c $\leftarrow$ $\Omega_\text{case}(h)$ \quad \COMMENT{Pick from one of three cases}
\IF{c = 1}
\RETURN \pass \quad \COMMENT{Finish test generation}
\ELSIF{c = 2 \AND no output from \SUT\ is available}
\STATE $i \leftarrow \Omega_\Inputs(h)$ \quad \COMMENT{Pick next input}
\STATE $i \twoheadrightarrow \SUT$ \quad \COMMENT{Forward input to SUT}
\RETURN $\DT(h \cdot i)$ \quad \COMMENT{Continue with next step}
\ELSIF{c = 3 \OR output from \SUT\ is available}
\STATE $o \twoheadleftarrow \SUT$ \quad \COMMENT{Receive output from SUT}
\IF{$o \in \clean(h)$}
\RETURN $\DT(h \cdot o)$ \  \COMMENT{If o is foreseen by oracle continue with next step}
\ELSE
\RETURN \fail	\quad \COMMENT{Otherwise, report test failure}
\ENDIF
\ENDIF
\end{algorithmic}
\end{algorithm}

\paragraph{Test Algorithm.}An important element of the model-based testing theory is a non-deterministic algorithm to generate test cases. A set of test cases is called a \emph{test suite}. 
It is shown elsewhere~\cite{DBLP:conf/fortest/Tretmans08}, that there is an algorithm that can produce a (possibly infinitely large) test suite $T$, for which
a system $\SUT$ passes $T$ if $\SUT$ is correct w.r.t. \ioco\ and, conversely, $\SUT$ is correct w.r.t. \ioco\ if $\SUT$ passes $T$. The former property is called \emph{soundness} and the latter is called \emph{exhaustiveness}. Algorithm~\ref{algo:dynamictest} shows a tail-recursive algorithm to test for robust cleanness. 
This \DT{} algorithm takes as an argument the history $h$ of the test currently running. Every doping test is inititalized by $\DT(\epsilon)$. 
Several runs of the algorithm constitute a test suite.
Each test can either \pass\ or \fail, which is reported by the output of the algorithm. In each call \DT\ picks one of three choices: 
\begin{inparaenum}[(i)]
\item it either terminates the test by returning \pass\ (line 3),
\item if there is no pending output that has to be read from the system under test, the algorithm may pick a new input and pass it to the system (lines 5-6), or
\item  \DT\ reads and checks the next output (or quiescence) that the system produces (lines 9-10).  Quiescence can be recognized by using a timeout mechanism that returns $\quiescence$ if no output has been received in a given amount of time.
\end{inparaenum}
In the original algorithm, the case and the next input are determined non-deterministically. Our algorithm is parameterized by $\Omega_\text{case}$ and $\Omega_\Inputs$, which can be instantiated by either non-determinism or some optimized test-case selection. Until further notice we assume non-deterministic selection. 
An output or quiescence that has been produced by the IUT is checked by means of an oracle $\clean$ (line 10). The oracle reflects the reference implementation \Spec, that is used as the specification for the \ioco\ relation and is defined in equation~(\ref{eq:algo:clean}).
\begin{align}
  \clean(h) \coloneqq \{
  & o \in \qOutputs \ | \label{eq:algo:clean}\\
  & \forall \sigma_i \in \mapInp{\traces_{\omega}(\Std_\quiescence)}:
  (\forall j \leq |h|{+}1: d_\Inputs(\mapInp{\sigma_i[..j]}, \mapInp{(h \cdot o)[..j]}) \leq \inpbound\})\notag\\
  & \qquad\!\! \Rightarrow \exists \sigma \in \traces_{\omega}(\Std_\quiescence): 
  \mapInp{\sigma} = \mapInp{\sigma_i} \land d_\qOutputs(o, \mapOut{\sigma[|h|+1]}) \leq \outpbound \}\notag
\end{align}
%
Given a finite execution, $\clean$ returns the set of acceptable outputs
(after such an execution) which corresponds exactly to the set of outputs
in $\Spec$ (after such an execution).  Thus $\clean(h)$ is precisely the
set of outputs that satisfies the premise in the definition of $\Spec$
after the trace $h$, as stipulated in Def.~\ref{def:testing:approxSpec}.

We refer to $\clean$ as an oracle, because it cannot be computed in general due to the infinite traces of $\Std_\quiescence$ in the definition. However, we get the following theorem stating that the algorithm is sound and exhaustive with respect to \ioco\ (and we present a computable algorithm in the next section). The theorem follows from the soundness and exhaustiveness of the original test generation algorithm for model-based testing and Def.~\ref{def:testing:approxSpec}.

\begin{restatable}{theorem}{iocoDT}
\label{thm:ioco-DT}
Let $\Contract$ be a contract with standard $\Std$. Let $\SUT$ be an implementation with $\Std_\quiescence \subseteq \SUT_\quiescence$
and let $\Spec$ be the largest implementation within $\Contract$.
Then, $\SUT \ioco \Spec$ if and only if for every test execution $t = \DT(\epsilon)$ it holds that $\SUT \passes t$.
\end{restatable}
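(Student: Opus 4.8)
\noindent The plan is to recognise \DT{} as the classical on-the-fly \ioco\ test-derivation procedure applied to the specification $\Spec$, and then to invoke the soundness-and-exhaustiveness theorem for the induced test suites~\cite{DBLP:conf/fortest/Tretmans08,DBLP:journals/cn/Tretmans96}. Two preparatory remarks: since $\Spec$ is the largest implementation within $\Contract$, it is trace-equivalent to the reference implementation built from $\Contract$ by Def.~\ref{def:testing:approxSpec} (Thm.~\ref{thm:testing:RefIsLargestImplementation}), and both \ioco\ and the verdict of \DT{} depend only on suspension traces and output menus, hence are insensitive to this equivalence; so it is legitimate to reason directly with the construction of Def.~\ref{def:testing:approxSpec}. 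The only genuinely new ingredient — and the one I expect to carry the weight of the proof — is that the oracle $\clean$ computes exactly the output menus $\out(\Spec_\quiescence \after h)$.

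I would first establish $\clean(h)=\out(\Spec_\quiescence \after h)$ for all $h\in\traces_*(\Spec_\quiescence)$, using two observations. First, quiescence is modelled explicitly in $\Spec$, and by its construction together with satisfiability of $\Contract$ — exactly the key observation behind Thm.~\ref{thm:testing:cleanImpliesIoco} — every state of $\Spec$ already enables some output or a $\quiescence$-transition; hence the $\quiescence$-closure of Def.~\ref{def:delta-closure} introduces no new transitions, so $\traces_*(\Spec_\quiescence)=\traces_*(\Spec)$ and $\out(\Spec_\quiescence\after h)=\out(\Spec\after h)$. Second, since every path of $\Spec$ is named by the trace it defines ($\last(p)=\trace(p)$, as already noted), $\out(\Spec\after h)$ is precisely the set of labels $a$ for which the inference rule of Def.~\ref{def:testing:approxSpec} fires with $\sigma:=h$; reading that rule with $\mStdPlus=\Std_\quiescence$, with $|\sigma|+1=|h|+1$, and with $a$ ranging over $\qOutputs$, its side condition is literally the membership condition of eq.~(\ref{eq:algo:clean}). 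Chaining the two gives $o\in\out(\Spec_\quiescence\after h)\iff o\in\clean(h)$.

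Next I would observe that, with $\Omega_\text{case}$ and $\Omega_\Inputs$ resolved non-deterministically, the three branches of \DT{} are exactly the three choices of Tretmans' test generation for $\Spec$: returning \pass\ terminates a test, offering a stimulus $i$ when no output is pending extends a test by an input, and observing an output-or-quiescence $o$ either continues the test (if $o$ lies in the current output menu, which by the previous step is $\clean(h)$) or ends it with \fail. Since $\Spec$ is input-enabled by construction and an accepted output extends an $\Spec$-trace to an $\Spec$-trace, the running history stays in $\traces_*(\Spec)=\traces_*(\Spec_\quiescence)$ until a \fail\ occurs, so the oracle is always consulted where it equals $\out(\Spec_\quiescence\after\cdot)$ — matching the quantification over $\traces_*(\Spec_\quiescence)$ in the definition of \ioco. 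Hence the set of possible executions $\{\,t\mid t=\DT(\epsilon)\,\}$ coincides, up to the standard correspondence between a recursively defined \ioco\ test suite and its on-the-fly execution, with the test suite $T_\Spec$ derived from $\Spec$, and the classical theorem — for an IOTS $\SUT$, $\SUT\ioco\Spec$ holds iff $\SUT\passes t$ for every test $t$ derivable from $\Spec$~\cite{DBLP:conf/fortest/Tretmans08} — yields the stated equivalence ($\SUT$ is an IOTS by assumption). The hardest part is the second observation of the preceding paragraph: matching the syntactically different premises of Def.~\ref{def:testing:approxSpec} and eq.~(\ref{eq:algo:clean}), and confirming that the explicit treatment of quiescence in $\Spec$ — recall $d_\qOutputs(\quiescence,o)=\infty$ — really does render the $\quiescence$-closure redundant, so that ``outputs of $\Spec_\quiescence$'' and ``values returned by $\clean$'' agree on every reachable history.
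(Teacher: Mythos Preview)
Your proposal is correct and follows essentially the same line as the paper. The paper likewise reduces everything to the identity $\clean(h)=\out(\Spec\after h)$ (its Lemma~\ref{lemma:equiCleanOut}), combined with $\Spec_\quiescence=\Spec$ (its Lemma~\ref{lemma:R-is-quiescence-closed}), and then argues that \DT{} with non-deterministic $\Omega_\text{case}$, $\Omega_\Inputs$ coincides with the Tretmans-style generator applied to $\Spec$. The only stylistic difference is that the paper, instead of directly invoking the classical soundness/exhaustiveness theorem, introduces an intermediate generator $\DTG$ (a clone of $\TG$ with $\clean$ in place of $\out$) and re-proves completeness for $\DTG$ from scratch via two auxiliary lemmas (every $\Spec$-trace is exercised by some test, and every history reached during testing is an $\Spec$-trace); this makes the argument self-contained but is not conceptually different from your appeal to the existing \ioco\ theory.
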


\noindent Together with Theorem~\ref{thm:testing:cleanImpliesIoco} and satisfiability of \Contract, we derive the following
corollary.
\begin{corollary}
\label{cor:passing}
  Let $\Contract$ be a contract with standard $\Std$. Let $\SUT$ be an implementation with $\Std_\quiescence \subseteq \SUT_\quiescence$.
If $\SUT$ is
  robustly clean, then for every test execution $t = \DT(\epsilon)$ it holds that $\SUT \passes t$.
\end{corollary}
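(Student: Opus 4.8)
The plan is to obtain the corollary by composing the two main results already in hand, Theorem~\ref{thm:testing:cleanImpliesIoco} and Theorem~\ref{thm:ioco-DT}, under the standing assumption that $\Contract$ is satisfiable (this restriction was adopted when we passed to finite standard LTS). No new construction or estimate is required; the work is purely in checking that the hypotheses line up.

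First I would apply Theorem~\ref{thm:testing:cleanImpliesIoco}. Its premises are precisely the data we are given: $\Contract$ is a contract with standard $\Std$, the IOTS $\SUT$ is robustly clean w.r.t.\ $\Contract$, and $\Std_\quiescence \subseteq \SUT_\quiescence$. Taking $\Spec$ to be the reference implementation constructed from $\Contract$ via Def.~\ref{def:testing:approxSpec}, the theorem delivers $\SUT \ioco \Spec$.

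Next I would transfer this conformance statement to test executions using Theorem~\ref{thm:ioco-DT}. To invoke it we need $\Spec$ to be the largest implementation within $\Contract$, which is exactly Theorem~\ref{thm:testing:RefIsLargestImplementation}, and we still have $\Std_\quiescence \subseteq \SUT_\quiescence$. Theorem~\ref{thm:ioco-DT} then states that $\SUT \ioco \Spec$ holds if and only if for every test execution $t = \DT(\epsilon)$ we have $\SUT \passes t$. Chaining this equivalence with the conclusion of the previous paragraph yields that every test execution passes, which is the claim.

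The only point that warrants care — rather than a genuine obstacle — is that Theorem~\ref{thm:testing:cleanImpliesIoco}, and in particular the observation that the suspension traces of $\Spec$ coincide with its finite traces, depends on $\Contract$ being satisfiable (cf.\ Def.~\ref{def:satisfiableContract}); since satisfiability is our standing assumption on contracts in this section, it is available, and the corollary follows as a straightforward composition of the cited theorems.
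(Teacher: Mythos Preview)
Your proposal is correct and matches the paper's own derivation: the corollary is obtained directly by combining Theorem~\ref{thm:testing:cleanImpliesIoco} with Theorem~\ref{thm:ioco-DT}, under the standing satisfiability assumption on $\Contract$. Your additional remark that Theorem~\ref{thm:testing:RefIsLargestImplementation} is what justifies $\Spec$ being the largest implementation (as required by Theorem~\ref{thm:ioco-DT}) is a welcome clarification but not a departure from the paper's argument.
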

It is worth noting that in Corollary~\ref{cor:passing} we do not get that $\SUT$ is robustly clean if $\SUT$ always passes \DT. This is due the intricacies of genuine hyperproperties. By testing, we will never be able to verify the first condition of Def.~\ref{def:ed-clean:LTS}, because this needs a simultaneous view on all possible execution traces of \SUT. During testing, however, we always can observe only one trace.
 
\denseparagraph{Finite Doping Tests.}
As mentioned before, the execution of \DT\ is not possible, because the oracle $\clean$ is not computable. There is, however, a computable version $\clean_b$ of $\clean$ for executions up to some test length $b$ for bounded and discretised $\Inputs$ and $\Outputs$. Even for infinite executions, $b$ can be seen as a limit of interest and testing is still sound.
$\clean_b$ is shown in eq. (\ref{eq:algo:clean-approx}). The only variation w.r.t.\ $\clean$ lies in the use of the set
$\traces_{b}(\Std_\quiescence)$, instead of
$\traces_\omega(\Std_\quiescence)$, so as to return all traces of
$\Std_\quiescence$ whose length is exactly $b$.  Since
$\Std_\quiescence$ is finite, function $\clean_b$ can be implemented.
\begin{align}
  \clean_b(h) \coloneqq \{
  & o \in \qOutputs \ | \label{eq:algo:clean-approx}\\
  & \forall \sigma_i \in \mapInp{\traces_{b}(\Std_\quiescence)}: 
  (\forall j \leq |h|{+}1: d_\Inputs(\mapInp{\sigma_i[..j]}, \mapInp{(h \cdot o)[..j]}) \leq \inpbound)\notag\\
  & \qquad\!\! \Rightarrow \exists \sigma \in \traces_{b}(\Std_\quiescence):\ 
  \mapInp{\sigma} = \mapInp{\sigma_i} \land d_\qOutputs(o, \mapOut{\sigma[|h|{+}1]}) \leq \outpbound \}\notag
\end{align}
%
%
Now we get a new algorithm $\DT_b$ by replacing $\clean$ by $\clean_b$ in \DT\ and by forcing case 1 when and only when $|h| = b$. We get a similar soundness theorem for $\DT_b$ as in Corollary~\ref{cor:passing}.

\begin{restatable}{theorem}{boundedPassing}
\label{thm:boundedPassing}
 Let $\Contract$ be a contract with standard $\Std$. Let $\SUT$ be an implementation with $\Std_\quiescence \subseteq \SUT_\quiescence$.
 If $\SUT$ is
  robustly clean, then for every boundary $b$ and every test execution $t = \DT_b(\epsilon)$ it holds that $\SUT \passes t$.
\end{restatable}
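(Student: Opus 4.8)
The plan is to reduce the bounded statement to the unbounded one (Corollary~\ref{cor:passing}) by showing that every execution of $\DT_b$ is ``dominated'' by a corresponding execution of $\DT$ as far as the pass/fail verdict is concerned. Concretely, I would first observe that $\DT_b$ differs from $\DT$ in exactly two respects: (i) it consults the oracle $\clean_b$ instead of $\clean$ when checking an output in line~10, and (ii) it is forced to take case~1 (return $\pass$) precisely when $|h|=b$, and is forbidden from taking case~1 before that. Since a verdict $\fail$ can only be produced in line~13, i.e.\ when some received output $o$ lies outside $\clean_b(h)$ for a history $h$ with $|h| < b$, it suffices to show that for all such $h$ and all $o \in \qOutputs$, if $o$ is actually produced by a robustly clean $\SUT$ after $h$, then $o \in \clean_b(h)$. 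Once this is established, $\DT_b$ never reaches line~13 on a robustly clean $\SUT$, hence every test execution returns $\pass$ in line~3 (reached at the latest when $|h|=b$).

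The heart of the argument is therefore the containment $\clean(h) \subseteq \clean_b(h)$ restricted to histories of length less than $b$ — or, more precisely, that the verdict-relevant membership test is not tightened by passing from $\clean$ to $\clean_b$. I would prove this by comparing the two defining implications. Fix $h$ with $|h|<b$ and $o\in\qOutputs$ with $o\in\clean(h)$. To show $o\in\clean_b(h)$, take any $\sigma_i\in\mapInp{\traces_b(\Std_\quiescence)}$ satisfying the input-closeness premise for all $j\le|h|+1$. Because $\Std_\quiescence$ is $\quiescence$-closed it has no deadlocks, so every length-$b$ trace extends to an infinite trace of $\Std_\quiescence$; pick such an extension $\hat\sigma$ and note $\mapInp{\hat\sigma}$ agrees with $\sigma_i$ on the first $b$ positions, hence in particular on all positions $j\le |h|+1<b+1$, so the premise of the $\clean$-implication is met for the infinite input $\mapInp{\hat\sigma}$. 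From $o\in\clean(h)$ we obtain some infinite $\sigma\in\traces_\omega(\Std_\quiescence)$ with $\mapInp{\sigma}=\mapInp{\hat\sigma}$ and $d_\qOutputs(o,\mapOut{\sigma[|h|+1]})\le\outpbound$. Truncating $\sigma$ to its length-$b$ prefix yields a witness in $\traces_b(\Std_\quiescence)$: its input projection agrees with $\sigma_i$ on the first $b$ positions (which is all that $\clean_b$ requires, since $\clean_b$ only ranges over length-$b$ traces), and the output-distance condition only constrains position $|h|+1\le b$, which is preserved by truncation. Hence $o\in\clean_b(h)$.

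Combining the two steps: by Corollary~\ref{cor:passing} a robustly clean $\SUT$ produces, after any history $h$, only outputs in $\clean(h)$; by the containment just shown, for $|h|<b$ these outputs also lie in $\clean_b(h)$, so $\DT_b$ never fails in line~13; and since $\DT_b$ is forced to take case~1 when $|h|=b$, every test execution $t=\DT_b(\epsilon)$ terminates with $\pass$. I expect the only genuinely delicate point to be the direction of the input-projection matching between the finite and infinite oracles — one has to be careful that $\clean_b$ quantifies over finite standard traces whose input projections need only match $\sigma_i$ on $b$ positions, whereas $\clean$ quantifies over infinite ones; the deadlock-freeness of $\Std_\quiescence$ (guaranteed by the quiescence closure) is exactly what lets one pass between the two, and all remaining manipulations are routine prefix bookkeeping.
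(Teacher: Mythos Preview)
Your proof is essentially correct and the central technical step — the containment $\clean(h) \subseteq \clean_b(h)$ for $|h|<b$ — is exactly the paper's Lemma~\ref{lemma:supseteqCleanOut}, argued in the same way (extend the finite standard input to an infinite one via deadlock-freeness of $\Std_\quiescence$, invoke $\clean$, truncate the witness back to length $b$).

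The wrapper argument differs slightly from the paper. The paper does not route through Corollary~\ref{cor:passing}. Instead it argues directly (in Lemma~\ref{lemma:IocoPassesDTGb}, to which the theorem is reduced via the equivalence of $\DT_b$ and the test-generation scheme $\DTG_b$): assuming a failing $\DT_b$ execution at history $h$ with output $o\notin\clean_b(h)$, the containment yields $o\notin\out(\Spec\after h)$; but the history $h\cdot o$ is a trace of $\SUT_\quiescence$, and robust cleanness together with Theorem~\ref{thm:testing:RefIsLargestImplementation} gives $\traces_\omega(\SUT_\quiescence)\subseteq\traces_\omega(\Spec)$, whence $o\in\out(\Spec\after h)$, a contradiction. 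This is slightly more direct because it appeals to trace containment (or equivalently to $\SUT\ioco\Spec$) rather than to the pass/fail behaviour of unbounded $\DT$.

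Your reduction via Corollary~\ref{cor:passing} also works, but the sentence ``by Corollary~\ref{cor:passing} a robustly clean $\SUT$ produces, after any history $h$, only outputs in $\clean(h)$'' is not an immediate consequence of that corollary: you need an auxiliary induction on $|h|$ to guarantee that every $\DT_b$-history is also $\DT$-reachable (i.e.\ all earlier accepted outputs already lie in $\clean$, not just in $\clean_b$). Consider the possibility that some intermediate output lies in $\clean_b\setminus\clean$; then $\DT$ would have failed earlier, which still contradicts Corollary~\ref{cor:passing} — but this case distinction should be spelled out. Alternatively, and more cleanly, you can bypass Corollary~\ref{cor:passing} and obtain the claim directly from Theorem~\ref{thm:testing:cleanImpliesIoco} together with $\out(\Spec_\quiescence\after h)=\clean(h)$.
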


Since $\SUT \passes \DT_b(\epsilon)$ implies  $\SUT \passes \DT_a(\epsilon)$ for any $a \leq b$, we have in summary arrived at an on-the-fly  algorithm $\DT_b$ that for sufficiently large $b$ (corresponding to the length of the test) will be able to conduct a ``convicting'' 
doping test for any IUT $\SUT$ that is not robustly clean w.r.t.\ a given contract $\Contract$.
The bounded-depth algorithm effectively circumvents the fact that, except for $\Std$ and $\Std_\quiescence$, all other objects we need to deal with are countably or  uncountably infinite and that the property we check is a hyperproperty.

We implemented a prototype of a testing framework using the bounded-depth algorithm.
The specification of distances, value domains and test case selection are parameters of the algorithm that can be set specific for a concrete test scenario.
This flexibility enables us to use the framework in a two-step approach for cyber-physical systems not equipped with a digital interface to forward the inputs to: first, the tool can generate test inputs, that are executed by a human or a robot on the CPS under test. The actual inputs (possibly deviating from the generated inputs) and outputs from the system are recorded so that in the second step our tool determines if the (actual) test is passed or failed. 

\section{Evaluation}
\label{sec:evaluation}
The normed emission test NEDC (New European Driving Cycle) (see Fig. \ref{fig:NEDC}) is the legally binding framework in Europe~\cite{nedc} (at the time the scandal surfaced). It is to be carried out on a chassis dynamometer and all relevant parameters are fixed by the norm,  including for instance the outside temperature at which it is run. 

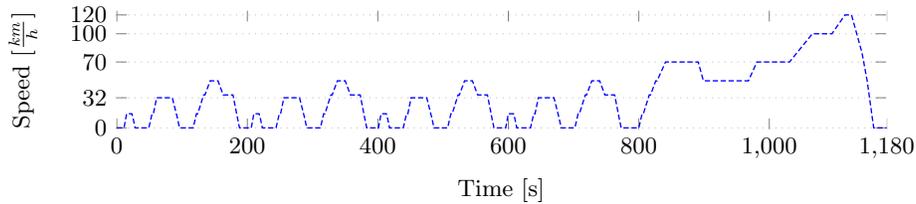
\begin{figure*}[tb]
\begin{center}
\hspace{1cm}    
\pgfplotsset{
    axis line style={white}
  }
\begin{tikzpicture}[trim axis left, scale=1]
\begin{axis}[
    width=.97\textwidth,
    height=0.26\textwidth,
    xlabel={Time [s]},
    ylabel={Speed [$\frac{\mathit{km}}{h}$]},
    xmin=-2, xmax=1180,
    ymin=-2, ymax=125,
    xtick={0,200, 400, 600, 800, 1000, 1180},
    ytick={0, 32, 70, 100, 120},
    legend pos=north west,
    ymajorgrids=true,
    grid style=dotted,draw=gray!10,
]
 
\addplot[
    color=blue,
    line width=.5pt,
    dash pattern=on 2pt off 1pt,
    ]
    coordinates {
    (0,0)(6,0)(11,0)(15,15)(23,15)(25,10)(28,0)(44,0)(49,0)(54,15)(56,15)(61,32)(85,32)(93,10)(96,0)(112,0)(117,0)(122,15)(124,15)(133,35)(135,35)(143,50)(155,50)(163,35)(176,35)(178,35)(185,10)(188,0)(195,0)(201,0)(206,0)(210,15)(218,15)(220,10)(223,0)(239,0)(244,0)(249,15)(251,15)(256,32)(280,32)(288,10)(291,0)(307,0)(312,0)(317,15)(319,15)(328,35)(330,35)(338,50)(350,50)(358,35)(371,35)(373,35)(380,10)(383,0)(390,0)(396,0)(401,0)(405,15)(413,15)(415,10)(418,0)(434,0)(439,0)(444,15)(446,15)(451,32)(475,32)(483,10)(486,0)(502,0)(507,0)(512,15)(514,15)(523,35)(525,35)(533,50)(545,50)(553,35)(566,35)(568,35)(575,10)(578,0)(585,0)(591,0)(596,0)(600,15)(608,15)(610,10)(613,0)(629,0)(634,0)(639,15)(641,15)(646,32)(670,32)(678,10)(681,0)(697,0)(702,0)(707,15)(709,15)(718,35)(720,35)(728,50)(740,50)(748,35)(761,35)(763,35)(770,10)(773,0)(780,0)(800,0)(805,15)(807,15)(816,35)(818,35)(826,50)(828,50)(841,70)(891,70)(895,60)(899,50)(968,50)(981,70)(1031,70)(1066,100)(1096,100)(1116,120)(1126,120)(1142,80)(1150,50)(1160,0)(1180,0)
    };

\end{axis}
\end{tikzpicture}
\caption{NEDC speed profile.}
\label{fig:NEDC}
\end{center}
\vspace{-1cm}
\end{figure*}


For a given car model, the normed test induces a standard LTS $\Std$ as follows. The input dimensions of $\Std$ are spanned by the sensors the car model is equipped with (including e.g. temperature of the exhaust, outside temperature, vertical and lateral acceleration, throttle position, time after engine start, engine rpm, possibly height above ground level etc.) which are accessible via the standardized OBD-2 interface~\cite{OBD2}. The output is the amount of \NOx\ per kilometre that has been extruded since engine start. 
Inputs are sampled  at equidistant times (once per second). The standard LTS $\Std$ is obtained from the trace representing the observations of running NEDC on the chassis dynamometer, say $\sigma_S \coloneqq i_1 \, \cdots \, i_{1180} \, o_S \, \quiescence  \, \quiescence \,  \quiescence \, \cdots$ with inputs $i_1, \cdots i_{1180}$  given by the NEDC over its 20 minutes (1180 seconds)
duration, and $o_S$ is the amount of \NOx\ gases accumulated during the test procedure. This  $\sigma_S$ is the only standard trace of our experiments. 
The trace ends with an infinite suffix $\quiescence^\omega$ of quiescence steps.  

The input space, $\Inputs$ is a  vector space spanned by  all possible input parameter dimensions. For $\vec{a}\in\Inputs$ we distinguish the speed dimension as $v(\vec{a})\in\Reals$  (measured in \textit{km/h}).  We can use past-forgetful distances with $d_\Inputs(\vec{a},\vec{b}) \coloneqq |v(\vec{a})-v(\vec{b})|$ if $\vec{a},\vec{b} \in \Inputs$, $d_\Inputs(\NoInp,\NoInp) = 0$ and $d_\Inputs(a,b) = \infty$ otherwise.
The speed is the decisive quantity defined to vary along the NEDC (cf.~Fig.~\ref{fig:NEDC}). Hence $d_\Inputs(\vec{a},\vec{b})=0$ if $v(\vec{a})=v(\vec{b})$ regardless of the values of other parameters. 
 We also take $\Outputs = \Reals$ for the average amount of \NOx\ gases per kilometre since engine start (in \textit{mg/km}). 
 We define $d_\Outputs(a,b)=|a-b|$ if $a,b\in\Outputs$, 
 and $d_\Outputs(a,b) = \infty$ otherwise. 

\denseparagraph{Doping tests in practice.}
 For the purpose of practically exercising doping tests, we picked a Renault 1.5 dci (110\textit{hp}) (Diesel) engine. 
This engine runs, among others, inside a Nissan NV200 Evalia which is classified as a Euro 6 car. The test cycle used in the original type approval of the car was \NEDC\ (which corresponds to Euro 6b). Emissions are cleaned using \emph{exhaust gas recirculation} (EGR). The technical core of EGR is a valve between the exhaust and intake pipe, controlled by a software. EGR is known to possibly cause performance losses, especially at higher speed. Car manufacturers might be tempted to optimize EGR usage  for engine performance unless facing a known test cycle such as the \NEDC.

We fixed a contract with $\inpbound = 15$ \textit{km/h}, $\outpbound = 180$ \textit{mg/km}. 
We report here on two of the tests we executed apart from the NEDC reference: \emph{(i)} \PowerNEDC\ is a variation of the \NEDC, where acceleration is increased from $0.94\frac{m}{s^2}$ to $1.5\frac{m}{s^2}$ in phase 6 of the \NEDC\ elementary urban cycle (i.e. after $56s, 251s, 446s$ and $641s$) and \emph{(ii)} \SineNEDC\ defines the speed at time $t$ to be the speed of the \NEDC\ at time $t$ plus $5 \cdot sin(0.5t)$ (but capped at $0$). Both can be generated by $\DT_{1181}(\epsilon)$ for specific deterministic $\Omega_\text{case}$ and $\Omega_\Inputs$. For instance, \SineNEDC\ is  given  below.  Fig. \ref{fig:SineNEDC} shows the initial 200s of \SineNEDC\ (red, dotted).
\vspace{-1mm}
\begin{align*}
\Omega_\text{case}(h) = \begin{cases}
2 & \text{, if } |h| \leq 1179 \\
3 & \text {, if } |h| = 1180
\end{cases} \quad
\Omega_\Inputs(h) = \max
\left\{
\begin{array}{ll}
0, \\
\text{\NEDC}(|h|) + 5 \cdot \sin(0.5|h|))
\end{array}
\right\}
\end{align*}

\vspace{-0mm}

\begin{figure}[t]
\begin{center}
\hspace{1cm}    
\pgfplotsset{
    axis line style={white}
  }
\begin{tikzpicture}[trim axis left, scale=1]
\begin{axis}[
    width=.95\textwidth,
    height=0.3\textwidth,
    xlabel={Time [s]},
    ylabel={Speed [$\frac{\mathit{km}}{h}$]},
    xmin=-2, xmax=200,
    ymin=-2, ymax=60,
    xtick={0, 100,200},
    ytick={0,15, 32, 50},
    legend pos=north west,
    ymajorgrids=true,
    grid style=dotted,draw=gray!10,
]
 
\addplot[
    color=blue,
    line width=.5pt,
    dash pattern=on 2pt off 1pt,
    ]
    coordinates {
    
    (0.0,-0.1440901)(1.0,-0.1706017)(2.0,-0.1706017)(3.0,-0.1440901)(4.0,0.01497977)(5.0,-0.06455518)(6.0,-0.09106683)(7.0,-0.1175785)(8.0,-0.1175785)(9.0,-0.09106685)(10.0,-0.09106686)(11.0,-0.1175785)(12.0,0.8103291)(13.0,9.983359)(14.0,10.06289)(15.0,12.26336)(16.0,13.82755)(17.0,13.66848)(18.0,13.40336)(19.0,13.82755)(20.0,15.10011)(21.0,15.78941)(22.0,15.7629)(23.0,13.00569)(24.0,10.69917)(25.0,10.64615)(26.0,10.64615)(27.0,6.510334)(28.0,-0.1440896)(29.0,-0.1440896)(30.0,-0.2501362)(31.0,-0.1440897)(32.0,-0.1706013)(33.0,-0.1706013)(34.0,-0.197113)(35.0,-0.0910664)(36.0,-0.1706013)(37.0,-0.03804312)(38.0,-0.09106642)(39.0,-0.1440897)(40.0,-0.1175781)(41.0,-0.1706014)(42.0,-0.1440897)(43.0,-0.06455481)(44.0,-0.09106646)(45.0,-0.09106647)(46.0,-0.09106648)(47.0,-0.03804319)(48.0,-0.1175781)(49.0,-0.1175781)(50.0,3.302424)(51.0,8.631266)(52.0,8.816847)(53.0,13.11173)(54.0,15.20615)(55.0,16.63778)(56.0,13.80104)(57.0,13.37685)(58.0,18.36104)(59.0,25.86383)(60.0,28.38244)(61.0,31.11314)(62.0,32.35919)(63.0,33.20757)(64.0,34.37408)(65.0,35.38151)(66.0,35.72617)(67.0,35.83221)(68.0,35.38152)(69.0,33.60523)(70.0,34.6657)(71.0,30.76849)(72.0,30.29128)(73.0,30.29128)(74.0,30.60942)(75.0,30.87453)(76.0,31.00709)(77.0,30.84802)(78.0,30.82152)(79.0,30.58291)(80.0,30.18524)(81.0,30.10571)(82.0,30.45036)(83.0,30.52989)(84.0,30.50338)(85.0,30.39734)(86.0,30.21176)(87.0,25.28059)(88.0,20.53501)(89.0,22.28477)(90.0,18.17547)(91.0,15.81593)(92.0,14.33128)(93.0,14.70244)(94.0,11.68011)(95.0,5.582437)(96.0,0.147549)(97.0,-0.1440791)(98.0,-0.1440791)(99.0,-0.1971024)(100.0,-0.1705908)(101.0,-0.09105587)(102.0,-0.1175675)(103.0,-0.1440792)(104.0,-0.1705908)(105.0,-0.1440792)(106.0,-0.03803262)(107.0,-0.1705909)(108.0,-0.1705909)(109.0,-0.1971025)(110.0,-0.1175676)(111.0,-0.1705909)(112.0,-0.2236142)(113.0,-0.1440793)(114.0,-0.1971026)(115.0,-0.1971026)(116.0,-0.1705909)(117.0,-0.1175677)(118.0,2.135922)(119.0,10.61965)(120.0,11.52105)(121.0,13.77454)(122.0,15.60384)(123.0,17.14152)(124.0,15.33872)(125.0,17.48617)(126.0,21.94012)(127.0,24.03454)(128.0,25.3071)(129.0,26.50013)(130.0,27.56059)(131.0,28.80664)(132.0,30.37082)(133.0,32.78338)(134.0,34.6392)(135.0,33.20757)(136.0,33.02198)(137.0,39.25222)(138.0,40.76339)(139.0,42.43362)(140.0,43.99781)(141.0,45.85363)(142.0,47.78898)(143.0,49.53875)(144.0,50.49316)(145.0,49.98944)(146.0,50.49316)(147.0,51.23549)(148.0,51.60665)(149.0,52.00432)(150.0,52.32246)(151.0,52.13688)(152.0,52.42852)(153.0,52.48154)(154.0,51.58014)(155.0,51.18247)(156.0,48.58433)(157.0,47.68294)(158.0,42.38061)(159.0,42.00944)(160.0,39.54387)(161.0,36.91921)(162.0,35.67316)(163.0,34.45363)(164.0,35.85875)(165.0,36.17689)(166.0,35.19596)(167.0,36.17689)(168.0,36.52154)(169.0,36.15038)(170.0,35.40805)(171.0,35.03689)(172.0,34.71875)(173.0,34.45363)(174.0,34.4006)(175.0,34.26805)(176.0,34.162)(177.0,33.89688)(178.0,33.73781)(179.0,29.65502)(180.0,18.70571)(181.0,22.41734)(182.0,17.43315)(183.0,18.38757)(184.0,15.4978)(185.0,15.07361)(186.0,12.13082)(187.0,4.442443)(188.0,-0.1440718)(189.0,-0.09104847)(190.0,-0.1705834)(191.0,-0.1440718)(192.0,-0.0910485)(193.0,-0.03802521)(194.0,-0.1705835)(195.0,-0.06453688)(196.0,-0.1440718)(197.0,-0.1440718)(198.0,-0.1175602)(199.0,-0.1440718)(200.0,-0.1970952)(201.0,-0.1440719)(202.0,-0.1175602)(203.0,-0.1440719)(204.0,-0.1175602)(205.0,-0.06453695)(206.0,-0.1440719)(207.0,4.972676)(208.0,6.80198)(209.0,8.73733)(210.0,12.05129)(211.0,13.69501)(212.0,14.14571)(213.0,14.72896)(214.0,15.60385)(215.0,16.16059)(216.0,16.74385)(217.0,16.74385)(218.0,16.34618)(219.0,14.5699)(220.0,9.824312)
    };

   \addplot [color=red,style=densely dotted,line width=.6pt ]
	coordinates {
	  (0,0)(1,2)(2,4)(3,5)(4,5)(5,3)(6,1)(7,0)(8,0)(9,0)(10,0)(11,0)(12,2)(13,9)(14,15)(15,20)(16,20)(17,19)(18,17)(19,15)(20,12)(21,11)(22,10)(23,11)(24,10)(25,10)(26,9)(27,7)(28,5)(29,5)(30,3)(31,1)(32,0)(33,0)(34,0)(35,0)(36,0)(37,0)(38,1)(39,3)(40,5)(41,5)(42,4)(43,2)(44,0)(45,0)(46,0)(47,0)(48,0)(49,0)(50,2)(51,8)(52,13)(53,17)(54,20)(55,19)(56,16)(57,17)(58,18)(59,20)(60,24)(61,28)(62,30)(63,32)(64,35)(65,36)(66,37)(67,36)(68,35)(69,32)(70,30)(71,28)(72,27)(73,27)(74,29)(75,31)(76,33)(77,36)(78,37)(79,37)(80,36)(81,34)(82,31)(83,29)(84,27)(85,27)(86,25)(87,24)(88,24)(89,24)(90,23)(91,21)(92,18)(93,13)(94,8)(95,2)(96,0)(97,0)(98,0)(99,0)(100,0)(101,1)(102,3)(103,5)(104,5)(105,4)(106,2)(107,0)(108,0)(109,0)(110,0)(111,0)(112,0)(113,0)(114,2)(115,4)(116,5)(117,5)(118,6)(119,7)(120,7)(121,8)(122,10)(123,10)(124,11)(125,16)(126,20)(127,25)(128,29)(129,31)(130,33)(131,33)(132,33)(133,33)(134,31)(135,30)(136,32)(137,36)(138,40)(139,44)(140,48)(141,51)(142,53)(143,53)(144,51)(145,49)(146,47)(147,45)(148,45)(149,46)(150,48)(151,51)(152,53)(153,54)(154,55)(155,54)(156,51)(157,46)(158,42)(159,38)(160,36)(161,34)(162,34)(163,34)(164,37)(165,39)(166,40)(167,40)(168,39)(169,37)(170,34)(171,32)(172,30)(173,30)(174,31)(175,33)(176,35)(177,38)(178,39)(179,36)(180,32)(181,27)(182,21)(183,15)(184,10)(185,5)(186,2)(187,0)(188,0)(189,1)(190,3)(191,5)(192,5)(193,4)(194,2)(195,0)(196,0)(197,0)(198,0)(199,0)(200,0)
	};
	
	\addplot [color=teal, style=solid,line width=.5pt]
	coordinates {
	(0.0,-0.1440901)(1.0,-0.06455515)(2.0,-0.1440901)(3.0,-0.1440901)(4.0,-0.09106681)(5.0,-0.1440901)(6.0,-0.09106683)(7.0,-0.09106684)(8.0,-0.1706018)(9.0,-0.06455521)(10.0,-0.1175785)(11.0,-0.09106687)(12.0,5.370332)(13.0,11.25592)(14.0,9.055451)(15.0,8.763824)(16.0,13.98662)(17.0,15.63034)(18.0,11.54755)(19.0,11.97173)(20.0,17.67174)(21.0,15.02057)(22.0,12.44894)(23.0,11.41499)(24.0,13.7215)(25.0,11.8922)(26.0,8.472195)(27.0,8.922892)(28.0,9.771264)(29.0,7.994984)(30.0,3.991725)(31.0,-0.1175798)(32.0,-0.1175798)(33.0,-0.1706031)(34.0,-0.1440915)(35.0,-0.1440915)(36.0,-0.1706031)(37.0,-0.1971148)(38.0,-0.09106822)(39.0,8.207077)(40.0,7.093588)(41.0,6.775448)(42.0,6.642891)(43.0,5.768007)(44.0,1.95033)(45.0,-0.09106726)(46.0,-0.06455563)(47.0,-0.1971139)(48.0,-0.1175789)(49.0,-0.1440906)(50.0,1.685213)(51.0,8.525218)(52.0,6.722426)(53.0,6.642891)(54.0,6.722426)(55.0,12.05127)(56.0,12.26336)(57.0,14.09266)(58.0,17.56569)(59.0,18.33453)(60.0,15.89546)(61.0,21.99314)(62.0,28.40895)(63.0,30.74198)(64.0,32.51826)(65.0,36.20338)(66.0,35.54058)(67.0,35.5671)(68.0,36.57454)(69.0,32.01453)(70.0,31.2457)(71.0,32.91594)(72.0,27.56058)(73.0,26.71221)(74.0,30.84802)(75.0,27.71965)(76.0,32.86291)(77.0,33.15454)(78.0,34.4271)(79.0,36.17686)(80.0,35.69965)(81.0,34.9043)(82.0,31.53732)(83.0,29.73454)(84.0,31.06012)(85.0,26.87128)(86.0,24.8829)(87.0,26.55314)(88.0,25.46616)(89.0,23.98151)(90.0,23.23918)(91.0,20.45546)(92.0,17.53918)(93.0,16.29313)(94.0,14.8615)(95.0,14.51685)(96.0,0.1475368)(97.0,-0.1175797)(98.0,-0.03804478)(99.0,-0.09106809)(100.0,-0.0380448)(101.0,-0.01153316)(102.0,7.067077)(103.0,6.006611)(104.0,6.669402)(105.0,5.635448)(106.0,-0.1175794)(107.0,-0.01153285)(108.0,-0.1440911)(109.0,-0.1440911)(110.0,-0.09106781)(111.0,-0.1175795)(112.0,-0.1706028)(113.0,0.0414904)(114.0,1.844282)(115.0,6.828472)(116.0,7.517775)(117.0,8.260101)(118.0,7.438239)(119.0,8.790333)(120.0,9.134985)(121.0,7.093587)(122.0,10.59313)(123.0,9.903822)(124.0,10.0894)(125.0,13.64196)(126.0,16.16057)(127.0,14.30476)(128.0,19.68662)(129.0,25.22755)(130.0,28.24988)(131.0,30.1057)(132.0,32.65081)(133.0,32.75686)(134.0,31.43128)(135.0,31.35175)(136.0,33.81733)(137.0,34.93082)(138.0,39.83547)(139.0,33.84384)(140.0,39.11966)(141.0,48.84943)(142.0,50.97036)(143.0,52.08385)(144.0,53.43594)(145.0,49.43269)(146.0,49.22059)(147.0,46.33082)(148.0,47.92152)(149.0,48.02757)(150.0,48.69036)(151.0,49.35314)(152.0,50.81129)(153.0,52.53454)(154.0,55.18571)(155.0,53.14431)(156.0,53.48896)(157.0,48.87594)(158.0,42.99035)(159.0,43.86523)(160.0,41.45267)(161.0,35.14291)(162.0,36.01779)(163.0,33.79081)(164.0,36.78662)(165.0,38.64244)(166.0,40.33918)(167.0,39.41127)(168.0,37.84709)(169.0,38.29778)(170.0,35.32848)(171.0,35.75267)(172.0,30.66243)(173.0,32.35918)(174.0,34.05592)(175.0,32.75685)(176.0,34.61267)(177.0,35.3815)(178.0,37.55546)(179.0,30.84801)(180.0,32.01452)(181.0,30.23825)(182.0,25.8108)(183.0,25.8108)(184.0,24.53824)(185.0,17.35359)(186.0,1.817761)(187.0,-0.1175897)(188.0,-0.170613)(189.0,-0.1175898)(190.0,6.377764)(191.0,7.040555)(192.0,7.040556)(193.0,5.052183)(194.0,2.772181)(195.0,0.7307842)(196.0,-0.1441002)(197.0,-0.09107696)(198.0,-0.09107696)(199.0,-0.01154204)(200.0,-0.1175886)(201.0,-0.09107699)(202.0,3.673577)(203.0,6.828463)(204.0,6.722417)(205.0,7.279162)
	};
\end{axis}
\end{tikzpicture}
\caption{Initial 200s of a \SineNEDC\ (red, dotted), its test drive  (green) and the  NEDC driven (blue, dashed).}
\label{fig:SineNEDC}
\vspace{-1cm}
\end{center}
\end{figure}
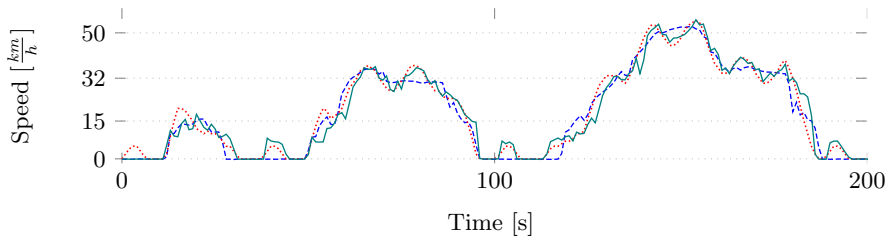

The car was fixed on a \emph{Maha LPS 2000} dynamometer and attached to an \emph{AVL M.O.V.E iS} portable emissions measurement system (PEMS, see Fig.~\ref{fig:evalia}) with speed data sampling at a rate of 20~\textit{Hz}, averaged to match the 1~\textit{Hz} rate of the NEDC. The human driver effectuated the \NEDC\ with a deviation of at most 9 \textit{km/h}  relative to  the reference 
 (notably, the result obtained for \NEDC\ are not consistent with 
  the car data sheet,
 likely caused by lacking calibration and absence of any further manufacturer-side optimisations).

\vspace{2mm}

 \hspace{-0.6cm}
\begin{minipage}{0.4\textwidth}

\includegraphics[width=1.1\columnwidth]{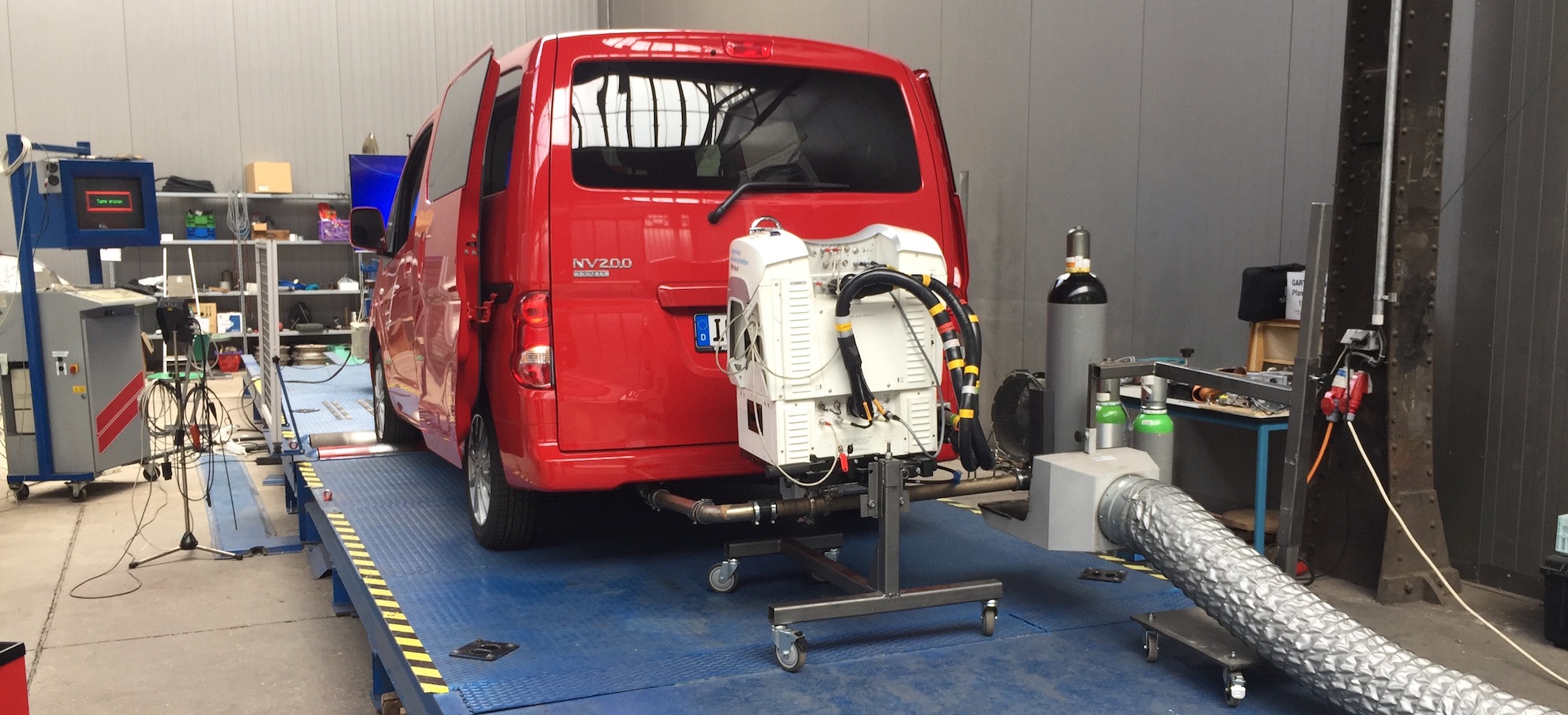}
\captionof{figure}{\label{fig:evalia}Nissan NV200 Evalia on a dynamometer}
\end{minipage}
\hspace{1cm}
\begin{minipage}{.6\textwidth}
\vspace{-0.003mm}
\renewcommand{\arraystretch}{1.123}
\begin{tabular}{@{}| l | r | r | r |@{}}
    \hline
    \textit{
    } & NEDC  & Power & Sine \\ \hline
    \textit{Distance} $\left[m\right]$ & 11,029 & 11,081 & 11,171 \\ \hline
    \textit{Avg. Speed} $\left[\frac{\textit{km}}{\textit{h}}\right]$ & 33 & 29 & 34 \\ \hline
     $\mathrm{CO}_2$ $\left[\frac{\textit{g}}{\textit{km}}\right]$ & 189 & 186  & 182 \\ \hline
     $\mathrm{NO}_x$ $\left[\frac{\textit{mg}}{\textit{km}}\right]$ & 180 & 204 & \textbf{584} \\
    \hline
    \end{tabular}

\captionof{table}{\label{tbl:htwresults}Dynamometer measurements \\(sample rate: 1\textit{Hz})}
\end{minipage}
\vspace{0mm}

\noindent The \PowerNEDC\ test drive differed by less than 15 \textit{km/h} and the \SineNEDC\ by less than 14 \textit{km/h} from the \NEDC\ test drive, so both inputs deviate by less than $\inpbound$. The green line in Fig. \ref{fig:SineNEDC} shows  \SineNEDC\ driven. The test outcomes are summarised in Table \ref{tbl:htwresults}. They show that the amount of CO$_2$ for the two tests is lower than for the \NEDC\ driven. The \NOx\ emissions of \PowerNEDC\ deviate by around 24 \textit{mg/km}, which is clearly below $\outpbound$. But the \SineNEDC\ produces about 3.24 times the amount of \NOx , that is 404 \textit{mg/km} more than what we measured for the \NEDC, which is a violation of the contract. This result can be verified with our algorithm a posteriori, namely by using $\Omega_\Inputs$ to replay the actually executed test inputs (which are different from the test inputs generated upfront due to human driving imprecisions) and by 
feeding the outputs recorded by the PEMS into the algorithm. As to be expected, this makes the recording of the \PowerNEDC\  return \pass\ and the recording of \SineNEDC\ return \fail.

Our algorithm is powerful enough to detect other kinds of defeat devices like those uncovered in investigations of the Volkswagen or the Audi case. Due to lack of space, we cannot present the concrete $\Omega_\text{case}$ and $\Omega_\Inputs$ for these examples.


\section{Discussion}
\label{sec:conclu}

\denseparagraph{Related Work.}
\label{sec:relwork}
The present work complements white-box approaches to software doping, like model-checking~\cite{DBLP:conf/esop/DArgenioBBFH17} or static code analysis~\cite{DBLP:conf/sp/ContagLPDLHS17} by a black-box testing approach, for which the specification is given implicitly by a contract, and usable for on-the-fly testing. 
Existing test frameworks like TGV \cite{DBLP:journals/sttt/JardJ05} or TorX \cite{DBLP:journals/sttt/VriesT00} provide support for the last step, however they fall short on scenarios where \begin{inparaenum}[(i)]\item the specification is not at hand and, among others, \item the test input is distorted in the testing process, e.g., by a human driving a car under test.
\end{inparaenum}

Our work is based on the definition of robust cleanness~\cite{DBLP:conf/esop/DArgenioBBFH17} which has conceptual similarities to continuity properties~\cite{DBLP:conf/popl/ChaudhuriGL10,DBLP:conf/issta/Hamlet02} of programs. However, continuity itself does not provide a reasonably good guarantee of cleanness. This is because physical outputs (e.g. the amount of \NOx\ gas in the exhaust) usually do change continuously. For instance, a doped car may alter its emission cleaning in a discrete way, but that induces a (rapid but) continuous change of \NOx\ gas concentrations. 
Established notions of stability and robustness~\cite{DBLP:conf/emsoft/TabuadaBCSM12,DBLP:conf/acsd/DoyenHLN10,DBLP:conf/rtss/MajumdarS09,572653} differ from robust cleanness in that the former assure the outputs (of a white-box system model) to stabilize despite transient input disturbances.  Robust cleanness does not consider perturbations but (intentionally) different inputs, and needs a hyperproperty formulation.

\denseparagraph{Concluding Remarks.}
This work lays the theoretical foundations for black-box testing approaches geared
towards uncovering doped software. As in the diesel emissions scandal -- where manufacturers were forced to pay excessive fines~\cite{VW-fine} and where executive managers are facing lawsuits or indeed went to prison~\cite{winterkorn-prison,stadler-prison} -- doped behaviour is typically strongly related to illegal behaviour.

As we have discussed, software doping analysis comes with several challenges. It can be performed   \begin{inparaenum}[(i)]\item only after production time on the final embedded or cyber-physical product, \item notoriously without support by the manufacturer, and \item the property belongs to the class of hyperproperties with alternating quantifiers. \item Non-determinism and imprecision caused by a human in-the-loop complicate doping analysis of CPS even further.\end{inparaenum}

Conceptually central to the approach is a contract that is assumed to be explicitly offered by the manufacturer. The contract itself is defined by very few parameters making it easy to form an opinion about a concrete contract. And even if a manufacturer is not willing to provide such contractual guarantees, instead a contract with very generous parameters can provide convicing evidence of doping if a test uncovers the contract violation. We showed this in a real automotive example demonstrating how a legally binding reference behaviour and a contract altogether induce a finite state LTS enabling to harvest input-output conformance testing for doping tests.
We developed an algorithm that can be attached directly to a system under test or in a three-step process, first generating a valid test case, afterwards used to guide a human interacting with the system, possibly adding distortions, followed by an a-posteriori validation of the recorded trajectory.
For more effective test case selection~\cite{DBLP:conf/pts/FeijsGMT02,TestPurposes} we are  exploring different guiding techniques~\cite{Faneikos-STALIRO,DeshmukhJKM15,DBLP:conf/cav/AdimoolamDDKJ17} for cyber-physical systems.  

\clearpage

\ifIEEE
\bibliographystyle{IEEEtran}
\IEEEtriggeratref{35}
\else
\bibliographystyle{splncs04}
\fi

\bibliography{softwaredoping}

\begin{thebibliography}{10}
\providecommand{\url}[1]{\texttt{#1}}
\providecommand{\urlprefix}{URL }
\providecommand{\doi}[1]{https://doi.org/#1}

\bibitem{DBLP:conf/cav/AdimoolamDDKJ17}
Adimoolam, A.S., Dang, T., Donz{\'{e}}, A., Kapinski, J., Jin, X.:
  Classification and coverage-based falsification for embedded control systems.
  In: Majumdar, R., Kuncak, V. (eds.) Computer Aided Verification - 29th
  International Conference, {CAV} 2017, Heidelberg, Germany, July 24-28, 2017,
  Proceedings, Part {I}. Lecture Notes in Computer Science, vol. 10426, pp.
  483--503. Springer (2017). \doi{10.1007/978-3-319-63387-9\_24},
  \url{https://doi.org/10.1007/978-3-319-63387-9\_24}

\bibitem{Faneikos-STALIRO}
Annpureddy, Y., Liu, C., Fainekos, G.E., Sankaranarayanan, S.: S-taliro: {A}
  tool for temporal logic falsification for hybrid systems. In: Abdulla, P.A.,
  Leino, K.R.M. (eds.) Tools and Algorithms for the Construction and Analysis
  of Systems - 17th International Conference, {TACAS} 2011, Held as Part of the
  Joint European Conferences on Theory and Practice of Software, {ETAPS} 2011,
  Saarbr{\"{u}}cken, Germany, March 26-April 3, 2011. Proceedings. Lecture
  Notes in Computer Science, vol.~6605, pp. 254--257. Springer (2011).
  \doi{10.1007/978-3-642-19835-9\_21},
  \url{https://doi.org/10.1007/978-3-642-19835-9\_21}

\bibitem{BartheDFH16:isola}
Barthe, G., D'Argenio, P.R., Finkbeiner, B., Hermanns, H.: Facets of software
  doping. In: Margaria and Steffen  \cite{DBLP:conf/isola/2016-2}, pp.
  601--608. \doi{10.1007/978-3-319-47169-3\_46},
  \url{http://dx.doi.org/10.1007/978-3-319-47169-3\_46}

\bibitem{DBLP:conf/isola/Baum16}
Baum, K.: What the hack is wrong with software doping? In: Margaria and Steffen
   \cite{DBLP:conf/isola/2016-2}, pp. 633--647.
  \doi{10.1007/978-3-319-47169-3\_49},
  \url{https://doi.org/10.1007/978-3-319-47169-3\_49}

\bibitem{stadler-prison}
{BBC}: {A}udi chief {R}upert {S}tadler arrested in diesel emissions probe.
  {BBC}, \url{https://www.bbc.com/news/business-44517753} (2018),
  \url{https://www.bbc.com/news/business-44517753}, {O}nline; accessed:
  2019-01-28

\bibitem{DBLP:conf/popl/ChaudhuriGL10}
Chaudhuri, S., Gulwani, S., Lublinerman, R.: Continuity analysis of programs.
  In: Hermenegildo, M.V., Palsberg, J. (eds.) Proceedings of the 37th {ACM}
  {SIGPLAN-SIGACT} Symposium on Principles of Programming Languages, {POPL}
  2010, Madrid, Spain, January 17-23, 2010. pp. 57--70. {ACM} (2010).
  \doi{10.1145/1706299.1706308},
  \url{http://doi.acm.org/10.1145/1706299.1706308}

\bibitem{ClarksonFKMRS14:post}
Clarkson, M.R., Finkbeiner, B., Koleini, M., Micinski, K.K., Rabe, M.N.,
  S{\'{a}}nchez, C.: Temporal logics for hyperproperties. In: Abadi, M.,
  Kremer, S. (eds.) {POST} 2014. LNCS, vol.~8414, pp. 265--284. Springer
  (2014). \doi{10.1007/978-3-642-54792-8\_15},
  \url{http://dx.doi.org/10.1007/978-3-642-54792-8\_15}

\bibitem{ClarksonS08}
Clarkson, M.R., Schneider, F.B.: Hyperproperties. In: CSF'08. pp. 51--65
  (2008). \doi{10.1109/CSF.2008.7}, \url{http://dx.doi.org/10.1109/CSF.2008.7}

\bibitem{DBLP:conf/sp/ContagLPDLHS17}
Contag, M., Li, G., Pawlowski, A., Domke, F., Levchenko, K., Holz, T., Savage,
  S.: How they did it: An analysis of emission defeat devices in modern
  automobiles. In: 2017 {IEEE} Symposium on Security and Privacy, {SP} 2017,
  San Jose, CA, USA, May 22-26, 2017. pp. 231--250. {IEEE} Computer Society
  (2017). \doi{10.1109/SP.2017.66}, \url{https://doi.org/10.1109/SP.2017.66}

\bibitem{DBLP:conf/esop/DArgenioBBFH17}
D'Argenio, P.R., Barthe, G., Biewer, S., Finkbeiner, B., Hermanns, H.: Is your
  software on dope? - {F}ormal analysis of surreptitiously "enhanced" programs.
  In: Yang, H. (ed.) Programming Languages and Systems - 26th European
  Symposium on Programming, {ESOP} 2017, Proceedings. Lecture Notes in Computer
  Science, vol. 10201, pp. 83--110. Springer (2017).
  \doi{10.1007/978-3-662-54434-1\_4},
  \url{https://doi.org/10.1007/978-3-662-54434-1\_4}

\bibitem{TestPurposes}
{de Vries}, R.: Towards formal test purposes. In: Brinksma, H., Tretmans, G.,
  Brinksma, H. (eds.) Formal Approaches to Testing of Software 2001 (FATES'01).
  pp. 61--76. No. NS-01-4 in BRICS Notes Series, BRICS, University of Aarhus (8
  2001)

\bibitem{DeshmukhJKM15}
Deshmukh, J.V., Jin, X., Kapinski, J., Maler, O.: Stochastic local search for
  falsification of hybrid systems. In: Finkbeiner, B., Pu, G., Zhang, L. (eds.)
  Automated Technology for Verification and Analysis - 13th International
  Symposium, {ATVA} 2015, Shanghai, China, October 12-15, 2015, Proceedings.
  Lecture Notes in Computer Science, vol.~9364, pp. 500--517. Springer (2015).
  \doi{10.1007/978-3-319-24953-7\_35},
  \url{https://doi.org/10.1007/978-3-319-24953-7\_35}

\bibitem{DBLP:conf/acsd/DoyenHLN10}
Doyen, L., Henzinger, T.A., Legay, A., Nickovic, D.: Robustness of sequential
  circuits. In: Gomes, L., Khomenko, V., Fernandes, J.M. (eds.) 10th
  International Conference on Application of Concurrency to System Design,
  {ACSD} 2010, Braga, Portugal, 21-25 June 2010. pp. 77--84. {IEEE} Computer
  Society (2010). \doi{10.1109/ACSD.2010.26},
  \url{https://doi.org/10.1109/ACSD.2010.26}

\bibitem{winterkorn-prison}
Ewing, J.: {E}x-{V}olkswagen {C.E.O.} {C}harged {W}ith {F}raud {O}ver {D}iesel
  {E}missions. {New York Times},
  \url{https://www.nytimes.com/2018/05/03/business/volkswagen-ceo-diesel-fraud.html}
  (2018),
  \url{https://www.nytimes.com/2018/05/03/business/volkswagen-ceo-diesel-fraud.html},
  {O}nline; accessed: 2019-01-28

\bibitem{DBLP:conf/pts/FeijsGMT02}
Feijs, L.M.G., Goga, N., Mauw, S., Tretmans, J.: Test selection, trace distance
  and heuristics. In: Schieferdecker, I., K{\"{o}}nig, H., Wolisz, A. (eds.)
  Testing of Communicating Systems XIV, Applications to Internet Technologies
  and Services, Proceedings of the {IFIP} 14th International Conference on
  Testing Communicating Systems - TestCom 2002, Berlin, Germany, March 19-22,
  2002. {IFIP} Conference Proceedings, vol.~210, pp. 267--282. Kluwer (2002)

\bibitem{FinkbeinerRS15:cav}
Finkbeiner, B., Rabe, M.N., S{\'{a}}nchez, C.: Algorithms for model checking
  {HyperLTL} and {HyperCTL$^*$}. In: Kroening, D., Pasareanu, C.S. (eds.) {CAV}
  2015. LNCS, vol.~9206, pp. 30--48. Springer (2015).
  \doi{10.1007/978-3-319-21690-4\_3},
  \url{http://dx.doi.org/10.1007/978-3-319-21690-4\_3}

\bibitem{DBLP:conf/issta/Hamlet02}
Hamlet, D.: Continuity in sofware systems. In: Frankl, P.G. (ed.) Proceedings
  of the International Symposium on Software Testing and Analysis, {ISSTA}
  2002, Roma, Italy, July 22-24, 2002. pp. 196--200. {ACM} (2002).
  \doi{10.1145/566172.566203}, \url{https://doi.org/10.1145/566172.566203}

\bibitem{DBLP:journals/sttt/JardJ05}
Jard, C., J{\'{e}}ron, T.: {TGV:} theory, principles and algorithms. {STTT}
  \textbf{7}(4),  297--315 (2005). \doi{10.1007/s10009-004-0153-x},
  \url{https://doi.org/10.1007/s10009-004-0153-x}

\bibitem{DBLP:conf/rtss/MajumdarS09}
Majumdar, R., Saha, I.: Symbolic robustness analysis. In: Baker, T.P. (ed.)
  Proceedings of the 30th {IEEE} Real-Time Systems Symposium, {RTSS} 2009,
  Washington, DC, USA, 1-4 December 2009. pp. 355--363. {IEEE} Computer Society
  (2009). \doi{10.1109/RTSS.2009.17},
  \url{https://doi.org/10.1109/RTSS.2009.17}

\bibitem{DBLP:conf/isola/2016-2}
Margaria, T., Steffen, B. (eds.): Leveraging Applications of Formal Methods,
  Verification and Validation: Discussion, Dissemination, Applications - 7th
  International Symposium, ISoLA 2016, Part {II}, LNCS, vol.~9953 (2016).
  \doi{10.1007/978-3-319-47169-3},
  \url{http://dx.doi.org/10.1007/978-3-319-47169-3}

\bibitem{572653}
Pettersson, S., Lennartson, B.: Stability and robustness for hybrid systems.
  In: Proceedings of 35th IEEE Conference on Decision and Control. vol.~2, pp.
  1202--1207 vol.2 (Dec 1996). \doi{10.1109/CDC.1996.572653}

\bibitem{VW-fine}
Riley, C.: {V}olkswagen's diesel scandal costs hit \$30 billion. {CNN
  Business},
  \url{https://money.cnn.com/2017/09/29/investing/volkswagen-diesel-cost-30-billion/index.html}
  (2018),
  \url{https://money.cnn.com/2017/09/29/investing/volkswagen-diesel-cost-30-billion/index.html},
  {O}nline; accessed: 2019-01-28

\bibitem{DBLP:conf/emsoft/TabuadaBCSM12}
Tabuada, P., Balkan, A., Caliskan, S.Y., Shoukry, Y., Majumdar, R.:
  Input-output robustness for discrete systems. In: Jerraya, A., Carloni, L.P.,
  Maraninchi, F., Regehr, J. (eds.) Proceedings of the 12th International
  Conference on Embedded Software, {EMSOFT} 2012, part of the Eighth Embedded
  Systems Week, ESWeek 2012, Tampere, Finland, October 7-12, 2012. pp.
  217--226. {ACM} (2012). \doi{10.1145/2380356.2380396},
  \url{http://doi.acm.org/10.1145/2380356.2380396}

\bibitem{OBD2}
{The European Parliament and the Council of the European Union}: Directive
  98/69/ec of the european parliament and of the council. Official Journal of
  the European Communities  (1998),
  \url{http://eur-lex.europa.eu/LexUriServ/LexUriServ.do?uri=CELEX:31998L0069:EN:HTML}

\bibitem{DBLP:phd/basesearch/Tretmans92}
Tretmans, J.: A formal approach to conformance testing. Ph.D. thesis,
  University of Twente, Enschede, Netherlands (1992),
  \url{http://purl.utwente.nl/publications/58114}

\bibitem{DBLP:journals/cn/Tretmans96}
Tretmans, J.: Conformance testing with labelled transition systems:
  Implementation relations and test generation. Computer Networks and {ISDN}
  Systems  \textbf{29}(1),  49--79 (1996). \doi{10.1016/S0169-7552(96)00017-7},
  \url{https://doi.org/10.1016/S0169-7552(96)00017-7}

\bibitem{DBLP:conf/fortest/Tretmans08}
Tretmans, J.: Model based testing with labelled transition systems. In:
  Hierons, R.M., Bowen, J.P., Harman, M. (eds.) Formal Methods and Testing, An
  Outcome of the {FORTEST} Network, Revised Selected Papers. Lecture Notes in
  Computer Science, vol.~4949, pp. 1--38. Springer (2008).
  \doi{10.1007/978-3-540-78917-8\_1},
  \url{https://doi.org/10.1007/978-3-540-78917-8\_1}

\bibitem{nedc}
{United Nations}: {UN Vehicle Regulations - 1958 Agreement, Revision 2,
  Addendum 100, Regulation No. 101, Revision 3 ---
  E/ECE/324/Rev.2/Add.100/Rev.3} (2013),
  \url{http://www.unece.org/trans/main/wp29/wp29regs101-120.html}

\bibitem{DBLP:journals/sttt/VriesT00}
de~Vries, R.G., Tretmans, J.: On-the-fly conformance testing using {SPIN}.
  {STTT}  \textbf{2}(4),  382--393 (2000). \doi{10.1007/s100090050044},
  \url{https://doi.org/10.1007/s100090050044}

\end{thebibliography}

\clearpage

\newcommand{\iproof}[1]{Proof #1:}

\ifaddappendix

\appendix
\noindent This appendix contains 
\iftoolprototype
additional information about our implementation and 
\fi
proofs of all results claimed in this paper (for the benefit of only the reviewing process).

\iftoolprototype
\section{Prototype Implementation}
\label{sec:apendix:implementation}

To demonstrate that our theory is implementable, we provide a prototype implementation of a testing framework written in Python. The core implementation contains several abstract classes leaving the choice of value domains, distance functions, the system under test and test case selection unspecified. We show examples on how to instantiate the framework by means of subclasses for concrete values and distances and demonstrate different kinds of test case selections using an easy to understand program.

A related technique to testing is \emph{monitoring}. A monitor can read the inputs and outputs of a system in order to detect incorrect behaviour of the system. In contrast to testing, the inputs are not provided by the test, but the system is monitored during normal operation. Monitors can be either online (evaluation is done while inputs are still received) or offline (observed behaviour is evaluated after the observation). A monitor can easily be extended to a test by controlling the environment providing the inputs to the system.

Technically, our test in Section \ref{sec:evaluation} is somewhere between monitoring and testing. Like for testing, we define a test by means of a test cycle (generated by using deterministic $\Omega_\text{case}$ and $\Omega_\Inputs$ for test case selection). However, we do not have full control over passing the inputs to the system under test, i.e. the car, because someone has to drive the test cycle manually. This is  done as as precise as possible while values of the actual (speed) inputs and the (\NOx) outputs are recorded. The result is a trace for which we  do offline monitoring (see Sec.~\ref{sec:doping_monitor}) to check a posteriori whether the test is passed or failed -- or trivially passed, meaning that the input threshold has been violated.

We extended our testing framework to support offline monitoring by specifying a system under test and a test case selection that is used to analyse the recorded trace. We instantiated the monitor to show that \PowerNEDC\ passes the test and \SineNEDC\ fails.

The repository is organised as follows.

\begin{itemize}
\item \texttt{doping\_test} contains all classes for the abstract testing framework
\item \texttt{doping\_monitor} is the extension of \texttt{doping\_test} to an abstract offline monitor
\item \texttt{examples} contains several examples where values have type \texttt{float} and distances are defined as $d(t_1, t_2) \coloneqq \abs{\text{last}(t_1) - \text{last}(t_2)}$. Most examples are easy-to-understand programs. We also present the experiments from Section~\ref{sec:evaluation} in \texttt{examples/nissan}.
\end{itemize}

\subsection{Abstract Testing Framework}
The abstract testing framework is defined in \texttt{doping\_test}. The main class is \texttt{DT}, which implements the bounded variant of the $\DT$ algorithm from Section \ref{sec:testGeneration}. It works with instances of \texttt{SystemUnderTest} and \texttt{TestCaseSelection}, which are abstract interfaces for a system under test $\SUT$ and the selection of $\Omega_\text{case}$ and $\Omega_\Inputs$. The outputs from $\SUT$ are checked by an instance of \texttt{AcceptanceChecker}. The default implementation of \texttt{AcceptanceChecker} determines if an output $o$ received from the system under test is clean w.r.t. to the test history $h$, i.e. it checks whether $o \in \clean_b(h)$.

Symbols are instances of \texttt{Input} or \texttt{Output}, which wrap a concrete value and allow for convenient distinction between inputs and outputs throughout the framework. The classes \texttt{Distance}, \texttt{ValueSet}, \texttt{Standard} and \texttt{Trace} define interfaces for distance functions, sets of values, a standard $\Std$ and traces. A concrete implementation for \texttt{ValueSet} is \texttt{EmptySet} for the empty set and \texttt{SimpleTrace} wrapping an array of symbols for \texttt{Trace}. We provide an implementation for random testing in class  \texttt{RandomTestCaseSelection}, which picks one of the three choices of $\DT$ randomly with different probabilities. For the selection of test inputs it tries to pick values that do not immediately let the test pass because it is too far away from a standard input. Beyond that, the choice of inputs is random with the distribution defined by the concrete subclass of \texttt{ValueSet} being used.

\subsection{Abstract Doping Monitor}
\label{sec:doping_monitor}
We extended the doping test framework by \texttt{doping\_monitor} for offline monitoring of traces. The central class is \texttt{RecordedTrace}, which can load recorded traces from disk and can provide a system under test $\SUT$ of class \texttt{MonitorUnderTest}, which simulates exactly the recorded behaviour when provided with the recorded inputs. The correct choice of inputs is guaranteed by the provided instance of class \texttt{MonitorTestCaseSelection} (defining $\Omega_\text{case}$ and $\Omega_\Inputs$ as explained in Sec.~\ref{sec:evaluation}), which makes $\DT$ pick exactly the inputs that were recorded or to wait for an output if an output has been recorded. For constructing a standard $\Std$ from (possibly several) recorded traces, an array of recorded traces can be passed to the constructor of \texttt{MonitoredStandard}. For using the framework it is necessary to provide an implementation to \texttt{RecordedTrace} for parsing the encoded symbols in a file.

\subsection{Example: Noisy Mirror}
The class \texttt{NoisyMirror} in \texttt{examples/numbers} implements a function that randomly either returns the input or twice as much. However, the program is doped. A shady programmer observed that the official authority testing the software uses inputs that have at most two decimals. This observation seduced him to add a check if more than two decimals are passed to the program in order to output values up to four times as much as the input. The official standard test commits positive integers in ascending order to the program (i.e. 1, 2, 3, ...). The resulting traces are provided by \texttt{NoisyStandard}. The sets of values used in our examples are closed ranges of floats, hence we extended \texttt{ValueSet} to \texttt{NumberRange}. We use the past-forgetful distance function $d(t_1, t_2) \coloneqq \abs{\text{last}(t_1) - \text{last}(t_2)}$ for both inputs and outputs and it is implemented in \texttt{LastComponentDistance}. We test a contract with $\inpbound = 0.2$ and $\outpbound = 0.5$.

The program in \texttt{randomTest.py} uses the random test case selection from \texttt{doping\_test} to test the NoisyMirror. Running the test several times shows that it sometimes passes and sometimes fails. In \texttt{predefinedTest.py} we demonstrate how to feed $\DT$ with inputs that have been hand-picked before the test. It uses \texttt{ManualTestCaseSelection} to generate one of four tests. We show two well defined tests: \emph{good-test1} passes, because inputs always have one decimal and \emph{good-test2} eventually has more than two decimals and mostly fails. The other two tests are badly designed as they both pass, because they are trivially satisfied. \emph{bad-test1} provides an input that is too far away from a standard and the test will pass although outputs are suspiciously large. In \emph{bad-test2}, the test waits for an output when it is supposed to provide an input, which leads to an input distance of infinity, so the test passes trivially, too. The third example is in \texttt{monitoring.py}, which shows how offline monitoring works. We construct the same standard as for the previous examples by using a \texttt{MonitoredStandard} with the traces recorded in \texttt{run-std1.txt} and \texttt{run-std2.txt}. The trace recorded in \texttt{run-fail.txt} will fail (because the output in line 16 has no standard output in $\outpbound$-distance)  and for \texttt{run-pass.txt} $\DT$ passes.


\subsection{Example: Doped Nissan}
The a posteriori processing of our examples with a Nissan from Sec. \ref{sec:evaluation} is located in \texttt{examples/nissan}. The approach is the same as for the NoisyMirror monitoring example. Each of the files \texttt{NEDC.txt}, \texttt{PowerNEDC.txt} and \texttt{SineNEDC.txt} contains 1180 inputs (the speed) and returns the \NOx-production in mg/km in line 1181. We create a standard with the single trace recorded in \texttt{NEDC.txt}. We have the same distance function as for the NoisyMirror example and $\inpbound = 15$ and $\outpbound = 88$, as described in Sec.~\ref{sec:evaluation}. The \PowerNEDC\ is tested in \texttt{checkPowerNEDC.py} and passes. \texttt{checkSineNEDC.py} checks the recorded \SineNEDC-drive and reports about the test failure as expected.

\subsection{How to run}
The tool can be used by executing the Python files from folder \texttt{tools}.

\begin{itemize}
\item \texttt{python examples/numbers/randomTest.py}
\item \texttt{python examples/numbers/predefinedTest.py}
\item \texttt{python examples/numbers/monitoring.py}
\item \texttt{python examples/nissan/checkPowerNEDC.py}
\item \texttt{python examples/nissan/checkSineNEDC.py}
\end{itemize}

The tool will either say \emph{Test passed!} and show the test trace or it says \emph{Test FAILED for Standard Trace} and additionally gives a standard trace with an input distance of at most $\inpbound$ up to the length of the test trace. However, there is no standard trace with the same inputs as the standard trace shown, which has an output of at most $\outpbound$.
Notice that the tool shows a warning when the test is trivially passed due to a test input deviating by more than $\inpbound$ from all standard traces.

\clearpage
\fi

\section{Proofs of Theorems and Lemmas}
\label{app:proofs}

We summarize the assumptions that we use during the paper and which we will need for the following proofs.
\begin{enumerate}
\item $d_\qOutputs$ is past-forgetful, i.e. $d_\Outputs(\sigma_1,\sigma_2)=d_\Outputs(\sigma'_1,\sigma'_2)$
whenever $\last(\sigma_1)=\last(\sigma'_1)$ and \label{assume:past-forgetful}
$\last(\sigma_2)=\last(\sigma'_2)$,
\item
For every $o$ it holds that $d_\qOutputs(o, \quiescence) = d_\qOutputs(\quiescence, o) = \infty$, \label{assume:quiescence-o-infinite}
\item $\Std$ is a finite LTS, \label{assume:Std-finite}
\item \Contract\ is satisfiable, i.e. \label{assume:contract-satisfiable}
\begin{enumerate}
\item every input $\sigma_i \in (\Inputs \cup \{ \NoInp \})^\omega$ is satisfiable, and \label{assume:inputs-satisfiable}
\item \qStd\ is robustly clean w.r.t. $\Contract_\Std$, \label{assume:qstd-clean}
\end{enumerate}
\item $\Omega_\text{case}$ and $\Omega_\Inputs$ are non-deterministic. \label{assume:omega_nondet}
\end{enumerate}

\subsection{\Spec\ is robustly clean and the largest implementation}
The first set of proofs aims to show that \Spec\ is reasonably defined. That is, \Spec\ constructed from \Contract\ is supposed to be the largest implementation that is robustly clean w.r.t. \Contract.

The construction of \Spec\ is such that we are always able to identify for each state the (unique) trace which can reach that state and, conversely, for each trace, we know the (unique) state that can be reached by that trace. This is shown in Lemma~\ref{lemma:testing:RefTraceIsState} and Corollary~\ref{cor:RefTrace}.

\begin{lemma} \label{lemma:testing:RefTraceIsState}
  For all $p \in \paths_*(\Spec)$,
  $\last(p)=\trace(p)$.
\end{lemma}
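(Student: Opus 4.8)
The plan is a straightforward induction on the length of the path $p$ (measured in number of states), exploiting the specific shape of $\rightarrow_{\Spec}$ given in Definition~\ref{def:testing:approxSpec}: the initial state of $\Spec$ is the empty sequence $\epsilon$, and the transition relation contains \emph{only} transitions of the form $\sigma \xrightarrow{a}_{\Spec} \sigma \cdot a$. So the states of $\Spec$ are traces, the initial state is the empty trace, and each step appends its own label to the current state.

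For the base case, a path of length one consists of the single initial state $\epsilon$ (recall that paths in $\paths_*(\Spec)$ begin in the initial state); then $\last(p) = \epsilon$ and $\trace(p)$ is the empty sequence as well, so the equality holds. For the inductive step, write a path of length $n+1$ as $p' = p\,a\,s'$, where $p$ is its length-$n$ prefix ending in some state $s$ and $s \xrightarrow{a}_{\Spec} s'$. The induction hypothesis gives $\last(p) = s = \trace(p)$. Since every transition of $\Spec$ has the form $\sigma \xrightarrow{a}_{\Spec} \sigma \cdot a$, the transition $s \xrightarrow{a}_{\Spec} s'$ forces $s' = s \cdot a$; hence $\last(p') = s' = s \cdot a = \trace(p) \cdot a = \trace(p')$, which closes the induction.

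I expect no real obstacle here: the lemma is an immediate structural consequence of the fact that in $\Spec$ each state is, by construction, literally the trace that reaches it. The side condition in Definition~\ref{def:testing:approxSpec} (the premise comparing $\inpbound$- and $\outpbound$-distances to standard traces) plays no role in this argument, since it only determines \emph{which} transitions exist, not their shape. The only mild care needed is to align the indexing conventions for finite paths and the definition of $\trace(\cdot)$ from the preliminaries, and to observe that the empty path is mapped to the empty trace. Corollary~\ref{cor:RefTrace} (each trace reaches a unique state, namely itself) then follows directly, since by determinism of $\Spec$ every finite trace is realised by exactly one path.
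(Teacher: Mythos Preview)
Your argument is correct and matches the paper's own proof essentially line for line: induction on the number of states in $p$, with the base case handled by the initial state being $\epsilon$ and the inductive step using that every transition of $\Spec$ has the shape $\sigma \xrightarrow{a}_{\Spec} \sigma\cdot a$. Your additional remark that the distance-related premise in Definition~\ref{def:testing:approxSpec} is irrelevant here is also spot on.
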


\begin{proof}
  We proceed by induction on the numbers of states in $p$.  If $p$ has
  only one state then $\last(p)=p=\epsilon=\trace(p)$, since
  $\epsilon$ is the initial state in $\Spec$.

  Suppose now $p=p'\,a\,s\in\paths_*(\Spec)$.  By induction,
  $\last(p')=\trace(p')$.  By Def.~\ref{def:testing:approxSpec},
  $\last(p')\xrightarrow{\;a\;}_{\Spec} s$ only if
  $s=\last(p')\cdot a$.  But
  $\last(p)=s=\last(p')\cdot a = \trace(p')\cdot a = \trace(p)$,
  which proves the lemma.
  \qed
\end{proof}

\begin{corollary}\label{cor:RefTrace}
  Let $p \in \paths_*(\Spec)$ and $\sigma=\trace(p)$.  Then
  $p$ is exactly the path
  $\epsilon\,\sigma[1]\,(\sigma[..1])\,\sigma[2]\,(\sigma[..2])
  \cdots
  (\sigma[..|\sigma|-1])\,\sigma[|\sigma|]\,(\sigma[..|\sigma|])$.
\end{corollary}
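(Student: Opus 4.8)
\textbf{Proof plan for Corollary~\ref{cor:RefTrace}.}
The plan is to derive the corollary directly from Lemma~\ref{lemma:testing:RefTraceIsState} together with the observation that $\Spec$ is deterministic. First I would recall from Def.~\ref{def:testing:approxSpec} that the only transitions of $\Spec$ have the shape $\sigma \xrightarrow{a}_{\Spec} \sigma\cdot a$, so that from a given state at most one transition carries a given label, and more importantly, if $\sigma'$ is the source state of a transition labelled $a$ leading to state $s$, then necessarily $s = \sigma' \cdot a$. This is exactly the fact already used in the proof of Lemma~\ref{lemma:testing:RefTraceIsState}, and it is the only structural property of $\Spec$ I need.

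Next I would fix $p \in \paths_*(\Spec)$ and write $p = s_0\,a_1\,s_1\,a_2\,s_2\cdots a_n\,s_n$ with $s_0 = \epsilon$ the initial state and $\sigma = \trace(p) = a_1 a_2 \cdots a_n$. For each prefix $p_k = s_0\,a_1\,s_1\cdots a_k\,s_k$ ($0 \le k \le n$) one has $p_k \in \paths_*(\Spec)$ and $\trace(p_k) = a_1\cdots a_k = \sigma[..k]$, so Lemma~\ref{lemma:testing:RefTraceIsState} gives $s_k = \last(p_k) = \trace(p_k) = \sigma[..k]$. In particular $s_0 = \sigma[..0] = \epsilon$ and $s_k = \sigma[..k]$ for every $k$, which identifies every state occurring in $p$ as the corresponding prefix of $\sigma$. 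Substituting these identifications back into $p$ yields precisely
\[
  \epsilon\,\sigma[1]\,(\sigma[..1])\,\sigma[2]\,(\sigma[..2]) \cdots (\sigma[..n-1])\,\sigma[n]\,(\sigma[..n]),
\]
which is the claimed form (with $n = |\sigma|$). This shows $p$ equals that path, completing the argument.

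I do not expect any real obstacle here: the corollary is essentially a repackaging of the lemma, and the only point requiring (a one-line) care is that every initial segment of a path in $\Spec$ is again a path in $\paths_*(\Spec)$ starting in the initial state, so that the lemma applies to each $p_k$. Everything else is bookkeeping about prefixes of $\sigma$.
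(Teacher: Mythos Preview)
Your proposal is correct and is precisely the intended derivation: the paper states the result as a corollary of Lemma~\ref{lemma:testing:RefTraceIsState} without giving a separate proof, and your application of the lemma to each prefix $p_k$ of $p$ to identify $s_k = \sigma[..k]$ is exactly how the corollary follows.
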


The minimum requirement for any implementation $\LTS$ with some standard $\Std$ is that the behaviour of \Std\ is fully contained in $\LTS$. Since we argue about behaviour in terms of traces, we need to consider the quiescences closures $\LTS_\quiescence$ and $\qStd$. Lemma~\ref{lemma:prespec-preserves-traces} (appearing in the main paper) shows that \Spec\ satisfies this property.

\prespecPreservesTraces*

\begin{proof}
Since $\qStd$ is finite (Assumption~\ref{assume:Std-finite}) we can construct a deterministic (image-finite) LTS (where states are finite traces of the original) $\qStd'$ with $\traces_\omega(\qStd) = \traces_\omega(\qStd')$. With $\Spec$ being deterministic, too,
we only need to
  prove $\traces_*(\qStd)\subseteq\traces_*(\Spec)$
  (see~\cite{vanGlabbeek2001:handbook} for a proof).

  Let $\sigma\in\traces_*(\mStdPlus)$.  We proceed by induction on
  $k=|\sigma|$.  If $k = 0$ then $\sigma = \epsilon$ and hence
  $\sigma\in\traces_*(\Spec)$.
  If $k>0$, we know that $\sigma[..k-1]\in\traces_*(\qStd)$ and from the
  inductive hypothesis that
  $\sigma[..k-1]\in\traces_*(\Spec)$.
  By Lemma~\ref{lemma:testing:RefTraceIsState}, there is a path
  $p\in\paths_*(\Spec)$ with
  $\last(p)=\sigma[..k-1]$ and $\trace(p) = \sigma[..k-1]$.
  To show that $\sigma\in\traces_*(\Spec)$, we need to show
  that there is a transition
  $\sigma[..k-1] \xrightarrow{\sigma[k]}_{\Spec} \sigma$
  in $\Spec$.
  By Def.~\ref{def:testing:approxSpec}, this holds if for any
  $\sigma_i \in \mapInp{\traces_\omega(\mStdPlus)}$ with
  $\forall j \leq k: d_\Inputs(\mapInp{\sigma[..j]}, \sigma_i[..j]) \leq \inpbound$,
  there is some $\sigma_S \in \traces_\omega(\mStdPlus)$ with
  $\mapInp{\sigma_S}=\sigma_i$ for which
  $d_\qOutputs(\mapOut{\sigma[k]}, \mapOut{\sigma_S[k]}) \leq \outpbound$.
  The existence of $\sigma_i$ implies the existence of some
  $\sigma_{io}\in\traces_\omega(\mStdPlus)$ with
  $\mapInp{\sigma_{io}}=\sigma_i$.
  Also, notice that $\sigma$ can be extended to an infinite trace
  $\sigma'\in\traces_\omega(\mStdPlus)$ such that
  $\sigma'[..k]=\sigma$.
  By assumption~\ref{assume:qstd-clean} we know that $\mStdPlus$ is robustly clean w.r.t. $\Contract_\Std$.
  Then, by 
  Def.~\ref{def:ed-clean:LTS}.2, there is some
  $\sigma''\in\traces_\omega(\mStdPlus)$ with
  $\mapInp{\sigma''} = \mapInp{\sigma_{io}} \ (= \sigma_i)$ and
  $d_\Outputs(\mapOut{\sigma'[k]}, \mapOut{\sigma''[k]}) \leq \outpbound$.
  Taking $\sigma_S=\sigma''$ concludes the proof.
%
\qed
\end{proof}

In order to prove robust cleanness of \Spec, we first show in the following lemma that $\Spec$ satisfies the second condition
of Def.~\ref{def:ed-clean:LTS}.

\begin{restatable}{lemma}{RefSatisfiesSecondRule}
\label{lemma:testing:RefSatisfiesRule2}
  Let \Spec\ be constructed from contract \Contract.  Then, for all
  $\sigma,\sigma'\in\traces_\omega(\Spec)$, if
  $\sigma\in\traces_\omega(\qStd)$, it holds that for all $k\geq 0$
  such that
  $d_\Inputs(\mapInp{\sigma[..j]}, \mapInp{\sigma'[..j]}) \leq \inpbound$
  for all $j\leq k$, there exists
  $\sigma''\in\traces_\omega(\Spec)$ such that
  $\mapInp{\sigma}=\mapInp{\sigma''}$ and
  $d_\qOutputs(\mapOut{\sigma'[k]}, \mapOut{\sigma''[k]}) \leq \outpbound$.
\end{restatable}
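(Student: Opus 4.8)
The plan is to read the inference rule defining $\Spec$ (Def.~\ref{def:testing:approxSpec}) ``backwards''. Since $\sigma'$ is a trace of $\Spec$, each of its transitions was added precisely because the rule's premise held for it, and the only freedom in that premise is the choice of the universally quantified input projection $\sigma_i$. We instantiate $\sigma_i \coloneqq \mapInp{\sigma}$, the input projection of the given standard trace; the witness the rule then returns is a standard trace, which by Lemma~\ref{lemma:prespec-preserves-traces} is already a trace of $\Spec$, so it can serve as the required $\sigma''$.

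In detail, the case $k=0$ is immediate: the two output prefixes are empty (distance $0 \leq \outpbound$), so $\sigma'' \coloneqq \sigma$ works, using $\sigma \in \traces_\omega(\qStd) \subseteq \traces_\omega(\Spec)$ from Lemma~\ref{lemma:prespec-preserves-traces}. Assume then $k \geq 1$. Because $\Spec$ is deterministic and $\last(p)=\trace(p)$ for every $p \in \paths_*(\Spec)$ (Lemma~\ref{lemma:testing:RefTraceIsState}, Corollary~\ref{cor:RefTrace}), the path carrying $\sigma'$ visits the states $\epsilon, \sigma'[..1], \sigma'[..2], \dots$, so $\sigma'[..k-1] \xrightarrow{\sigma'[k]}_{\Spec} \sigma'[..k]$ is a transition of $\Spec$. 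Apply the premise of Def.~\ref{def:testing:approxSpec} to it (so $|\sigma| = k-1$ and $a = \sigma'[k]$), instantiating the quantifier over $\sigma_i \in \mapInp{\traces_\omega(\qStd)}$ by $\sigma_i \coloneqq \mapInp{\sigma}$. This is legitimate because $\sigma \in \traces_\omega(\qStd)$; and the guard $\forall j \leq k : d_\Inputs(\mapInp{(\sigma'[..k])}[..j], \mapInp{\sigma}[..j]) \leq \inpbound$ holds, since for $j \leq k$ projection commutes with truncation, giving $\mapInp{(\sigma'[..k])}[..j] = \mapInp{\sigma'[..j]}$ and $\mapInp{\sigma}[..j] = \mapInp{\sigma[..j]}$, and the lemma's hypothesis together with symmetry of the pseudometric yields $d_\Inputs(\mapInp{\sigma'[..j]}, \mapInp{\sigma[..j]}) \leq \inpbound$. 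The conclusion of the rule then supplies $\sigma_S \in \traces_\omega(\qStd)$ with $\mapInp{\sigma_S} = \mapInp{\sigma}$ and $d_\qOutputs(\mapOut{\sigma'[k]}, \mapOut{\sigma_S[k]}) \leq \outpbound$.

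Finally, set $\sigma'' \coloneqq \sigma_S$: by Lemma~\ref{lemma:prespec-preserves-traces}, $\sigma_S \in \traces_\omega(\qStd) \subseteq \traces_\omega(\Spec)$, and $\sigma''$ meets both requirements, namely $\mapInp{\sigma''} = \mapInp{\sigma}$ and $d_\qOutputs(\mapOut{\sigma'[k]}, \mapOut{\sigma''[k]}) \leq \outpbound$ (the single-index form being the $[..k]$ form of Def.~\ref{def:ed-clean:LTS}.2 specialised via past-forgetfulness of $d_\qOutputs$, Assumption~\ref{assume:past-forgetful}). I do not expect a real obstacle; $\Spec$ is engineered precisely so that its second condition holds by construction. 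The only delicate point is the bookkeeping — aligning $|\sigma'[..k-1]|+1 = k$ with the range $j \leq k$ of the rule's guard, verifying $\mapInp{(\sigma'[..k])}[..j] = \mapInp{\sigma'[..j]}$, and recognising that the instantiation $\sigma_i \coloneqq \mapInp{\sigma}$ is exactly what turns the $\forall$-$\exists$ shape of Def.~\ref{def:testing:approxSpec} into condition~2 of Def.~\ref{def:ed-clean:LTS}.
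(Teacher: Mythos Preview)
Your proposal is correct and follows essentially the same route as the paper: identify the transition $\sigma'[..k-1]\xrightarrow{\sigma'[k]}_{\Spec}\sigma'[..k]$ via Corollary~\ref{cor:RefTrace}, instantiate the universal in Def.~\ref{def:testing:approxSpec} with $\sigma_i=\mapInp{\sigma}$, and lift the resulting standard witness into $\Spec$ via Lemma~\ref{lemma:prespec-preserves-traces}. Your version is a bit more explicit about the $k=0$ boundary case and the index bookkeeping, but the argument is the same.
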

\begin{proof}
\begin{sloppypar}
  Let $\sigma,\sigma'\in\traces_\omega(\Spec)$ with
  $\sigma\in\traces_\omega(\qStd)$.
%
  By Corollary~\ref{cor:RefTrace}, there is a path
  $p =
  \epsilon\,\sigma'[1](\sigma'[..1])\ldots(\sigma'[..k-1])\sigma'[k](\sigma'[..k])
  \in \paths_*(\Spec)$.
  In particular
  $\sigma'[..k-1]\xrightarrow{\sigma'[k]}\sigma'[..k]$
  is a transition in $\Spec$.
  By Def.~\ref{def:testing:approxSpec}, we know that for all
  $\sigma_i\in\mapInp{\traces(\mStdPlus)}$ such that for all j $\leq k$
  $d_\Inputs(\mapInp{\sigma'[..k][..j]},\sigma_i[..j])\leq\inpbound$
  (which is equivalent to
  $d_\Inputs(\mapInp{\sigma'[..j]},\sigma_i[..j])\leq\inpbound$),
  there is some $\sigma_S\in\traces_\omega(\mStdPlus)$ with
  $\mapInp{\sigma_S}=\sigma_i$ and
  $d_\qOutputs(\mapOut{\sigma[k]}, \mapOut{\sigma_S[k]}) \leq \outpbound$ (*).

  Since $\sigma\in\traces_\omega(\mStdPlus)$ then
  $\mapInp{\sigma}\in\mapInp{\traces_\omega(\mStdPlus)}$.
  Suppose that $d_\Inputs(\mapInp{\sigma'[..j]},\mapInp{\sigma[..j]})\leq\inpbound$ for
  all $j\leq k$ (otherwise the lemma holds trivially).
  Then, by (*), there exist $\sigma_S\in\traces_\omega(\mStdPlus)$ with
  $\mapInp{\sigma_S}=\mapInp{\sigma}$ such that
  $d_\qOutputs(\mapOut{\sigma[k]}, \mapOut{\sigma_S[k]}) \leq \outpbound$.
  Since
  $\traces_\omega(\qStd)\subseteq\traces_\omega(\Spec)$
  by Lemma~\ref{lemma:prespec-preserves-traces}, the lemma follows.  
\qed
  \end{sloppypar}
\end{proof}

The following Lemma shows that \Spec\ is the largest implementation that satisfies Def.~\ref{def:ed-clean:LTS}.\ref{def:ed-clean:LTS:ii} (the second condition of robust cleanness).
\begin{lemma}
\label{lemma:spec-is-largest-clean2}
Let \Contract\ be a contract and \Spec\ be constructed from \Contract. Then, for all $\LTS$ satisfying Def.~\ref{def:ed-clean:LTS}.\ref{def:ed-clean:LTS:ii}, $\traces_\omega(\LTS_\quiescence) \subseteq \traces_\omega(\Spec)$.
\end{lemma}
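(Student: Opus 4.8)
The plan is to fix an arbitrary $\tau\in\traces_\omega(\LTS_\quiescence)$ and show $\tau\in\traces_\omega(\Spec)$. Since $\Spec$ is deterministic (Def.~\ref{def:testing:approxSpec}) and, by Lemma~\ref{lemma:testing:RefTraceIsState}, every finite path of $\Spec$ is uniquely identified by the trace it spells, an infinite sequence belongs to $\traces_\omega(\Spec)$ as soon as all of its finite prefixes belong to $\traces_*(\Spec)$: the unique states reached along the prefixes are forced to coincide and glue into a single infinite path. Hence it suffices to prove, by induction on $k$, that $\tau[..k]\in\traces_*(\Spec)$. The base case $k=0$ is immediate. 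For the inductive step, assume $\tau[..k-1]\in\traces_*(\Spec)$; by Lemma~\ref{lemma:testing:RefTraceIsState} the corresponding path ends in the state $\tau[..k-1]$, so it remains only to exhibit the transition $\tau[..k-1]\xrightarrow{\tau[k]}_{\Spec}\tau[..k]$, i.e.\ to discharge the premise of the inference rule of Def.~\ref{def:testing:approxSpec} instantiated with $\sigma:=\tau[..k-1]$ and $a:=\tau[k]$ (so that $|\sigma|+1=k$).

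The heart of the argument is that single premise. Fix $\sigma_i\in\mapInp{\traces_\omega(\Std_\quiescence)}$ with $d_\Inputs(\mapInp{\tau[..j]},\sigma_i[..j])\leq\inpbound$ for all $j\leq k$ (if there is no such $\sigma_i$, the premise holds vacuously). Choose a standard trace $\rho\in\traces_\omega(\Std_\quiescence)$ with $\mapInp{\rho}=\sigma_i$; since the standard behaviour is contained in the implementation, $\Std_\quiescence\subseteq\LTS_\quiescence$ (the standing hypothesis on implementations, as in Theorem~\ref{thm:testing:cleanImpliesIoco}), we get $\rho\in\traces_\omega(\LTS_\quiescence)$. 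Now apply the second condition of Def.~\ref{def:ed-clean:LTS} to $\LTS$ with $\rho$ in the role of the standard trace, $\tau$ in the role of the deviating trace, and index $k$: the required input bound $d_\Inputs(\mapInp{\rho[..j]},\mapInp{\tau[..j]})\leq\inpbound$ for $j\leq k$ holds because $\mapInp{\rho[..j]}=\sigma_i[..j]$ and $d_\Inputs$ is symmetric. This produces some $\sigma''\in\traces_\omega(\LTS_\quiescence)$ with $\mapInp{\sigma''}=\mapInp{\rho}=\sigma_i$ and $d_\qOutputs(\mapOut{\tau[..k]},\mapOut{\sigma''[..k]})\leq\outpbound$.

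Finally $\sigma''$ is turned into a standard witness: it has the same input projection $\sigma_i$ as $\rho\in\traces_\omega(\Std_\quiescence)$, so, since $\Std_\quiescence$ is standard for $\LTS_\quiescence$ (Def.~\ref{def:LTS:Std}), $\sigma''\in\traces_\omega(\Std_\quiescence)$. Past-forgetfulness of $d_\qOutputs$ (Assumption~\ref{assume:past-forgetful}) then gives $d_\qOutputs(\mapOut{\tau[k]},\mapOut{\sigma''[k]})=d_\qOutputs(\mapOut{\tau[..k]},\mapOut{\sigma''[..k]})\leq\outpbound$. Taking $\sigma_S:=\sigma''$ discharges the premise of Def.~\ref{def:testing:approxSpec}, so the transition $\tau[..k-1]\xrightarrow{\tau[k]}_{\Spec}\tau[..k]$ exists; the induction closes and $\tau\in\traces_\omega(\Spec)$, whence $\traces_\omega(\LTS_\quiescence)\subseteq\traces_\omega(\Spec)$.

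The only genuine ingredient is the one appeal to the second robust-cleanness condition of $\LTS$; everything else is bookkeeping, and that is where I expect the real care to be needed. Concretely: keeping the index shift $|\sigma|+1=k$ consistent; using that projections commute with prefixing ($\mapInp{\sigma[..j]}=\mapInp{\sigma}[..j]$, and likewise for $\mapOut{}$); the reduction from prefix distances to last-symbol distances via past-forgetfulness; and --- most delicately --- ensuring that the properties phrased for the bare LTS transfer to the quiescence closures, namely $\Std_\quiescence\subseteq\LTS_\quiescence$ and that $\Std_\quiescence$ is standard for $\LTS_\quiescence$. None of this is deep, but these are exactly the points a careful write-up must nail down.
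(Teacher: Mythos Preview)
Your argument is essentially the paper's own proof, recast as a direct induction instead of a contradiction: both pick a standard trace with input projection $\sigma_i$, invoke Def.~\ref{def:ed-clean:LTS}.\ref{def:ed-clean:LTS:ii} to obtain $\sigma''\in\traces_\omega(\LTS_\quiescence)$ with matching inputs and close output, and then use Def.~\ref{def:LTS:Std} to push $\sigma''$ back into $\traces_\omega(\Std_\quiescence)$. You are in fact a bit more careful than the paper, making explicit both the appeal to past-forgetfulness and the hypothesis $\Std_\quiescence\subseteq\LTS_\quiescence$ needed so that the chosen standard trace qualifies for the quantifier ``$\sigma\in\traces_\omega(\LTS_\quiescence)$'' in Def.~\ref{def:ed-clean:LTS}; the paper silently relies on both.
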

\begin{proof}
For a proof by contradiction, suppose that there is some $\LTS$ satisfying Def.~\ref{def:ed-clean:LTS}.\ref{def:ed-clean:LTS:ii}, but which  has some trace $\sigma \in \traces_\omega(\LTS_\quiescence)$ that is not a trace of \Spec, i.e. $\sigma \not\in \traces_\omega(\Spec)$.
Let $\traces_*(\Spec)$ be the finite prefixes of \Spec. Since $\sigma \not\in \traces_\omega(\Spec)$, there must be some $k > 0$ for which $\sigma[..k-1] \in \traces_*(\Spec)$, but $\sigma[..k] \not\in \traces_*(\Spec)$. Hence, there is no transition $\sigma[..k-1] \xrightarrow{\sigma[k]}_\Spec \sigma[..k]$ in \Spec.
This can only be, because the premise of Def.~\ref{def:testing:approxSpec} is not satisfied, i.e. there is some $\sigma_i \in \mapInp{\traces_\omega(\qStd)}$, such that for all $j \leq k$, $d_\Inputs(\mapInp{\sigma[..j]}, \sigma_i[..j]) \leq \inpbound$ (*) and for all standard traces $\sigma_S \in \traces_\omega(\mStdPlus)$ with $\mapInp{\sigma_s} = \sigma_i$ it holds that $d_\qOutputs(\mapOut{\sigma[k]}, \mapOut{\sigma_S[k]}) > \outpbound$ (**).

For $\sigma \in \traces_\omega(\LTS_\quiescence)$ there must be a transition $\sigma[..k-1] \xrightarrow{\sigma[k]}_\LTS \sigma[..k]$ in $\LTS_\quiescence$.
Let $\sigma_{io} \in \traces_\omega(\qStd)$ such that $\mapInp{\sigma_{io}} = \sigma_i$. From Def.~\ref{def:ed-clean:LTS}.\ref{def:ed-clean:LTS:ii} we get for $\LTS, \sigma_{io}, \sigma$ and $k$ with (*) a trace $\sigma'' \in \traces_\omega(\LTS_\quiescence)$ with $\mapInp{\sigma''} = \mapInp{\sigma_{io}} = \sigma_i$ and $d_\qOutputs(\mapOut{\sigma[..k]}, \mapOut{\sigma''[..k]}) \leq \outpbound$.
Furthermore, we get from Def.~\ref{def:LTS:Std}, $\sigma_i \in \traces_\omega(\qStd)$, $\sigma'' \in \traces_\omega(\LTS_\quiescence)$ and $\mapInp{\sigma''} = \sigma_i$ that $\sigma'' \in \traces_\omega(\qStd)$. This is a contradiction to (**).
\qed
\end{proof}

The following Lemma shows that \Spec\ satisfies the first condition of robust cleanness.

\begin{lemma}
\label{lemma:testing:RefSatisfiesRule1}
Let \Spec\ be constructed from contract \Contract. Then, for all $\sigma, \sigma' \in \traces_\omega(\Spec)$, if $\sigma \in \traces_\omega(\qStd)$, it holds that for all $k \geq 0$ such that $d_\Inputs(\mapInp{\sigma[..j]}, \mapInp{\sigma'[..j]}) \leq \inpbound$
  for all $j\leq k$, there exists
  $\sigma''\in\traces_\omega(\Spec)$ such that
  $\mapInp{\sigma'}=\mapInp{\sigma''}$ and
  $d_\qOutputs(\mapOut{\sigma[k]}, \mapOut{\sigma''[k]}) \leq \outpbound$.
\end{lemma}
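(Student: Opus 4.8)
The plan is to reduce the claim to the satisfiability of the contract together with the maximality of $\Spec$ established in Lemma~\ref{lemma:spec-is-largest-clean2}. The asymmetry with Rule~2 (Lemma~\ref{lemma:testing:RefSatisfiesRule2}) is the crux: Rule~2 holds essentially because $\Spec$ is \emph{built} by adding exactly the transitions that witness it, whereas Rule~1 asks for a matching execution of $\Spec$ over the \emph{same inputs as $\sigma'$} whose $k$-th output stays $\outpbound$-close to the standard trace $\sigma$, and the existence of such an execution is precisely what Assumption~\ref{assume:inputs-satisfiable} (satisfiability of every input for $\Contract$) is designed to provide.

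Concretely, fix $\sigma,\sigma'\in\traces_\omega(\Spec)$ with $\sigma\in\traces_\omega(\qStd)$ and $k$ with $d_\Inputs(\mapInp{\sigma[..j]},\mapInp{\sigma'[..j]})\leq\inpbound$ for all $j\leq k$. The boundary case $k=0$ is immediate (take $\sigma'':=\sigma'$; the output condition is then vacuous), so assume $k\geq 1$. Put $\sigma_i:=\mapInp{\sigma'}\in(\Inputs\cup\{\NoInp\})^\omega$ and $\sigma_S:=\sigma$. Since $\mapInp{}$ is idempotent and commutes with taking prefixes, the hypothesis is literally the premise $d_\Inputs(\mapInp{\sigma_i[..j]},\mapInp{\sigma_S[..j]})\leq\inpbound$ for all $j\leq k$ occurring in Def.~\ref{def:satisfiableContract}, and $\sigma_S=\sigma$ is a standard trace by assumption. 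By Assumption~\ref{assume:inputs-satisfiable} the input $\sigma_i$ is satisfiable, so there exist an implementation $\LTS$ satisfying Def.~\ref{def:ed-clean:LTS}.\ref{def:ed-clean:LTS:ii} w.r.t.\ $\Contract$ and a trace $\rho\in\traces_\omega(\LTS_\quiescence)$ with $\mapInp{\rho}=\sigma_i$ and $d_\qOutputs(\mapOut{\rho[k]},\mapOut{\sigma_S[k]})\leq\outpbound$. Because $\LTS$ satisfies Def.~\ref{def:ed-clean:LTS}.\ref{def:ed-clean:LTS:ii}, Lemma~\ref{lemma:spec-is-largest-clean2} gives $\traces_\omega(\LTS_\quiescence)\subseteq\traces_\omega(\Spec)$, hence $\rho\in\traces_\omega(\Spec)$. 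Taking $\sigma'':=\rho$ yields $\sigma''\in\traces_\omega(\Spec)$, $\mapInp{\sigma''}=\sigma_i=\mapInp{\sigma'}$, and, by symmetry of the pseudometric $d_\qOutputs$, $d_\qOutputs(\mapOut{\sigma[k]},\mapOut{\sigma''[k]})=d_\qOutputs(\mapOut{\rho[k]},\mapOut{\sigma[k]})\leq\outpbound$, which is the claim.

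The hard part is not conceptual but careful bookkeeping: one must check that the index conventions of the lemma (single positions $\sigma[k]$, past-forgetful distances) line up with those of Def.~\ref{def:satisfiableContract} and Def.~\ref{def:ed-clean:LTS} (prefixes $\sigma[..k]$), and that the informal ``$\sigma$ is a standard trace'' in the hypothesis is exactly ``$\sigma_S\in\traces_\omega(\Std_\quiescence)$'' as required by satisfiability. A secondary point worth making explicit is that the auxiliary $\LTS$ furnished by satisfiability is only required to satisfy the \emph{second} cleanness condition, which is precisely the hypothesis of Lemma~\ref{lemma:spec-is-largest-clean2}, so transporting $\rho$ into $\Spec$ needs no further argument. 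If one preferred to avoid invoking Lemma~\ref{lemma:spec-is-largest-clean2}, an alternative is to re-run the transition-by-transition induction of Lemma~\ref{lemma:prespec-preserves-traces} on the prefix of $\rho$ directly from Def.~\ref{def:testing:approxSpec}, but that route is strictly more laborious.
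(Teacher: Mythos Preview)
Your proof is correct and follows essentially the same route as the paper: instantiate Def.~\ref{def:satisfiableContract} with the input projection $\mapInp{\sigma'}$ and the standard trace $\sigma$ to obtain an auxiliary implementation $\LTS$ and a witness trace, then invoke Lemma~\ref{lemma:spec-is-largest-clean2} to transport that trace into $\Spec$. The paper's proof is more terse (it does not separate out $k=0$, does not name the symmetry of $d_\qOutputs$, and omits your commentary on index conventions), but the logical skeleton is identical.
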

\begin{proof}
Let $\sigma \in \traces_\omega(\qStd)$, $\sigma' \in \traces_\omega(\Spec)$ and $k \geq 0$. Suppose that  for all  $j\leq k$ it holds that $d_\Inputs(\mapInp{\sigma[..j]}, \mapInp{\sigma'[..j]}) \leq \inpbound$ (otherwise, the lemma holds trivially). 
From Assumption~\ref{assume:inputs-satisfiable} we know that input $\mapInp{\sigma'}$ is satisfiable and, hence, 
we get from Def.~\ref{def:satisfiableContract} for $\sigma$ and $k$ an implementation $\LTS$ satisfying Def.~\ref{def:ed-clean:LTS}.\ref{def:ed-clean:LTS:ii} 
for which there is $\sigma'' \in \traces_\omega(\LTS_\quiescence)$ with $\mapInp{\sigma''} = \mapInp{\sigma'}$ and $d_\qOutputs(\mapOut{\sigma''[k]}, \mapOut{\sigma[k]}) \leq \outpbound$. 
From Lemma~\ref{lemma:spec-is-largest-clean2} we get that $\traces_\omega(\LTS_\quiescence) \subseteq \traces_\omega(\Spec)$, so we know that $\sigma'' \in \traces_\omega(\Spec)$ 
Hence, $\sigma''$ is the desired trace to conclude the proof.
\qed
\end{proof}

During construction of \Spec\ we added all quiescence transitions that are necessary with regards to the definition of the quiescence closure. This is shown in the following Lemma.

\begin{lemma}
\label{lemma:R-is-quiescence-closed}
Let \Spec\ be constructed from \Contract. Then, the quiescence closure $\Spec_\quiescence$ of \Spec\ is exactly \Spec.
\end{lemma}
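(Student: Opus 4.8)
We read ``$\Spec_\quiescence$ is exactly $\Spec$'' in the paper's sense of identifying an LTS with its set of $\omega$-traces (cf.\ the definition of ${\subseteq}$ on LTS), i.e.\ the goal is $\traces_\omega(\Spec_\quiescence)=\traces_\omega(\Spec)$. One inclusion is immediate since $\rightarrow_\Spec\subseteq\rightarrow_{\Spec_\quiescence}$. For the converse, recall that by Def.~\ref{def:delta-closure} the closure adds over $\Spec$ exactly the self-loops $\sigma\xrightarrow{\delta}\sigma$ at states $\sigma$ that have no outgoing transition labelled by a genuine output $o\in\Outputs$; by Lemma~\ref{lemma:testing:RefTraceIsState} each such reachable $\sigma$ is the finite trace leading to it, and by Def.~\ref{def:testing:approxSpec} the transitions out of $\sigma$ depend on $\sigma$ only through $\mapInp{\sigma}$ (the premise mentions $\sigma$ only via $\mapInp{\sigma\cdot a}$ and $|\sigma|$). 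So the task is to show that these extra self-loops generate no new $\omega$-trace.

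I would record two facts about a reachable state $\sigma$ of $\Spec$ receiving a self-loop, i.e.\ with $\out(\sigma)\cap\Outputs=\emptyset$. (K1) $\Spec$ already has the transition $\sigma\xrightarrow{\delta}_\Spec\sigma\cdot\delta$: this is the ``output-or-quiescence'' property noted when $\Spec$ is introduced, and it re-derives from the failure of the premise of Def.~\ref{def:testing:approxSpec} for every $o\in\Outputs$ together with the fact that $\mStdPlus=\Std_\quiescence$ is a quiescence closure (so at depth $|\sigma|+1$ a standard trace offers $\delta$ or a genuine output, besides possibly an input), which forces every standard trace whose input projection stays $\inpbound$-close to $\mapInp{\sigma}$ to be quiescent at depth $|\sigma|+1$; then $d_\qOutputs(\delta,\delta)=0\le\outpbound$ validates the premise for $a=\delta$. (K2) The states $\sigma,\sigma\cdot\delta,\sigma\cdot\delta^2,\dots$ are pairwise trace-equivalent in $\Spec$. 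Since $\mapInp{\sigma\cdot\delta^{n}}=\mapInp{\sigma}\cdot\NoInp^{n}$ and transitions depend only on the input projection, it suffices to show that appending these $\NoInp$'s never changes the set of available transitions along any continuation $\rho$; I would do this by induction on $|\rho|$, checking that the premise of Def.~\ref{def:testing:approxSpec} for a transition out of $\sigma\cdot\delta\cdot\rho$ is logically equivalent to the one out of $\sigma\cdot\rho$: inserting the extra $\delta$ only inserts a $\NoInp$ into the left argument of $d_\Inputs$ at a position at which, by quiescence of $\sigma$ propagated through (K1), every $\inpbound$-close standard input projection also carries $\NoInp$, so by past-forgetfulness of $d_\Inputs$ that coordinate contributes $0$ and the surviving constraints coincide, while on the output side $\delta$ is matched only by $\delta$, consistently with the standard trace being quiescent there.

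Given (K1) and (K2), the lemma follows by a standard simulation argument: $\Spec$ is deterministic, and from every quiescent reachable $\sigma$ the whole chain $\sigma,\sigma\cdot\delta,\sigma\cdot\delta^2,\dots$ exists in $\Spec$ and consists of mutually trace-equivalent states; hence every run of $\Spec_\quiescence$ can be turned into a run of $\Spec$ with the same label sequence by replacing each traversal of a self-loop $\sigma\xrightarrow{\delta}\sigma$ by the genuine step $\sigma\xrightarrow{\delta}\sigma\cdot\delta$ of $\Spec$ and then continuing the simulation from $\sigma\cdot\delta$ using (K2); carrying this out coinductively along the (possibly infinite) run yields the same $\omega$-trace in $\Spec$, so $\traces_\omega(\Spec_\quiescence)\subseteq\traces_\omega(\Spec)$. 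I expect (K2) to be the crux: conceptually it merely says that once a state of $\Spec$ is quiescent, inserting further quiescence steps changes nothing about its future, but making this precise requires tracking how a single $\delta$ shifts all later positions of the standard traces that occur in Def.~\ref{def:testing:approxSpec}, and it is exactly there that Assumption~\ref{assume:past-forgetful} (past-forgetfulness of $d_\Outputs$), Assumption~\ref{assume:quiescence-o-infinite} ($d_\qOutputs(\delta,o)=\infty$ for $o\neq\delta$), and $\mStdPlus$ being a quiescence closure are used.
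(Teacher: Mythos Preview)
Your reading of the lemma as $\omega$-trace equality is weaker than what the paper establishes and, more importantly, leads you down an unnecessarily elaborate path with real gaps.

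The paper's proof is direct: it shows that \emph{every} reachable state $\sigma$ of $\Spec$ already has some outgoing transition labelled by an element of $\qOutputs$; hence the quiescence closure adds no new transitions at all, and $\Spec_\quiescence=\Spec$ as LTS. The key ingredient is not the failure of the premise of Def.~\ref{def:testing:approxSpec} for all $o\in\Outputs$, but \emph{satisfiability of the contract} (Def.~\ref{def:satisfiableContract}, Assumption~\ref{assume:inputs-satisfiable}): from the input $\mapInp{\sigma}\cdot(\NoInp)^\omega$ one obtains a robustly-clean implementation $\LTS$ with a trace $\sigma''$ whose $(|\sigma|{+}1)$-th label is an output or $\delta$; then Lemma~\ref{lemma:spec-is-largest-clean2} puts $\sigma''$ into $\Spec$, and since the rule in Def.~\ref{def:testing:approxSpec} depends on its source state only through the input projection and the length, the same label is enabled at $\sigma$.

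Your argument for (K1) does not go through. From ``the premise fails for every $o\in\Outputs$'' you only get, for each $o$, \emph{some} close standard input $\sigma_i^{(o)}$ all of whose standard extensions are $\outpbound$-far from $o$ at depth $|\sigma|{+}1$. This neither shows that \emph{all} close standard traces are quiescent there, nor that the premise holds for $a=\delta$ (which must be checked against every close $\sigma_i$, not just the witnesses $\sigma_i^{(o)}$). Without satisfiability you cannot conclude (K1).

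Your (K2) relies on past-forgetfulness of $d_\Inputs$, but the paper only assumes past-forgetfulness of the \emph{output} distance (Assumption~\ref{assume:past-forgetful}); $d_\Inputs$ is an arbitrary pseudometric on finite input traces. Inserting an extra $\NoInp$ at position $|\sigma|{+}1$ shifts all later prefixes, and there is no reason the set of $\inpbound$-close standard input projections, or the position-$(|\sigma|{+}2)$ outputs along them, should coincide with the depth-$(|\sigma|{+}1)$ data at $\sigma$. So (K2) is not available, and with it the simulation argument collapses. The fix is simply to prove, as the paper does, that the closure adds nothing in the first place; then (K2) and the coinductive replacement are not needed.
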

\begin{proof}
We have to show that for every state $\sigma \in \traces_*(\Spec)$, there is a transition $\sigma \xrightarrow{o}_\Spec \sigma\cdot o$ in \Spec\ with $o\in\qOutputs$. Let $\sigma_i = \mapInp{\sigma} \cdot (\NoInp)^\omega$ an infinite input trace. We proceed by case-distinction on whether there is a trace $\sigma_S \in \traces_\omega(\qStd)$ such that for all $j \leq |\sigma|+1$ it holds that $d_\Inputs(\sigma_i[..j], \mapInp{\sigma_S[..j]}) \leq \inpbound$. If this is not the case, the premise of Def.~\ref{def:testing:approxSpec} does not hold and hence we get that for all $o\in\qOutputs$ a transition $\sigma \xrightarrow{o}_\Spec \sigma\cdot o$ in \Spec.

If the assumption holds, then we get from assumption~\ref{assume:inputs-satisfiable} and Def.~\ref{def:satisfiableContract} an implementation $\LTS$ and a trace $\sigma'' \in \traces_\omega(\LTS_\quiescence)$ with $\mapInp{\sigma''} = \sigma_i$ and $d_\qOutputs(\mapOut{\sigma''[|\sigma|+1]}, \mapOut{\sigma_S[|\sigma|+1]}) \leq \outpbound$. 
From Lemma~\ref{lemma:spec-is-largest-clean2} we get that $\sigma'' \in \traces_\omega(\Spec)$. 
For this trace to exist it is necessary that there is the transition $\sigma''[..|\sigma|] \xrightarrow{\sigma''[|\sigma|+1]}_\Spec \sigma''[..|\sigma|+1]$ in \Spec. 
Hence, we know (from Def.~\ref{def:testing:approxSpec}) that for every trace $\sigma_S \in \mapInp{\traces_\omega(\mStdPlus)}$ and every $j \leq |\sigma|+1$ for which it holds that $d_\Inputs(\mapInp{\sigma''[..j]}, \mapInp{\sigma_S[..j]}) \leq \inpbound$, there is some $\hat\sigma \in\traces_\omega(\mStdPlus)$ with $\mapInp{\hat\sigma} = \mapInp{\sigma_S}$ and $d_\qOutputs(\sigma''[|\sigma|+1], \mapOut{\hat\sigma[|\sigma|+1]}) \leq \outpbound$.
Since $\mapInp{\sigma''} = \sigma_i$ and in particular $\mapInp{\sigma} \cdot \NoInp = \mapInp{\sigma''[..|\sigma|+1]}$, we have that for every $\sigma_S$ and $j \leq |\sigma|+1$, $d_\Inputs(\mapInp{\sigma''[..j]}, \mapInp{\sigma_S[..j]}) \leq \inpbound \iff d_\Inputs(\mapInp{(\sigma \cdot \sigma''[|\sigma|+1])}, \mapInp{\sigma_S[..j]}) \leq \inpbound$. Hence, we can for every $\sigma_S \in \mapInp{\traces_\omega(\mStdPlus)}$ and $j \leq |\sigma|+1$ with $d_\Inputs(\mapInp{(\sigma \cdot \sigma''[|\sigma|+1])}, \mapInp{\sigma_S[..j]}) \leq \inpbound$, provide a $\hat\sigma \in\traces_\omega(\mStdPlus)$ with $\mapInp{\hat\sigma} = \mapInp{\sigma_S}$ and $d_\qOutputs(\sigma''[|\sigma|+1], \mapOut{\hat\sigma[|\sigma|+1]}) \leq \outpbound$. By Def.~\ref{def:testing:approxSpec} we know that the transition $\sigma[..|\sigma|] \xrightarrow{\sigma''[|\sigma|+1]}_\Spec \sigma[..|\sigma|] \cdot \sigma''[|\sigma|+1]$ exists in \Spec.
Since $\mapInp{\sigma''} = \sigma_i$, we know that $\mapInp{\sigma''}[|\sigma|+1] = \NoInp$ and hence $\sigma''[|\sigma|+1] \in \qOutputs$.
\qed
\end{proof}

We can now prove that \Spec\ is indeed robustly clean. Lemmas~\ref{lemma:testing:RefSatisfiesRule2} and \ref{lemma:testing:RefSatisfiesRule1} show that \Spec satisfies robust cleanness, however, they show each condition w.r.t. \Spec\ instead of $\Spec_\quiescence$. We use Lemma~\ref{lemma:R-is-quiescence-closed} to close this gap.

\SCleanImpliesRClean*
\begin{proof}
  With Lemma~\ref{lemma:testing:RefSatisfiesRule1} and \ref{lemma:R-is-quiescence-closed} we get that \Spec\ satisfies the first condition of Def.~\ref{def:ed-clean:LTS}. From Lemma~\ref{lemma:testing:RefSatisfiesRule2} and \ref{lemma:R-is-quiescence-closed} we conclude that \Spec\ also satisfies the second condition.
  \qed
\end{proof}

\begin{lemma}
\label{lemma:R-is-IOTS}
Let \Contract\ be a contract with standard \Std\ and let \Spec\ be constructed from \Contract. Then \Spec\ is input-enabled.
\end{lemma}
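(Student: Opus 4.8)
The plan is to show directly that the inference rule defining $\rightarrow_\Spec$ (Def.~\ref{def:testing:approxSpec}) fires for every input label from every state, which immediately gives input-enabledness: a state of $\Spec$ is, by Lemma~\ref{lemma:testing:RefTraceIsState} and Corollary~\ref{cor:RefTrace}, a finite trace $\sigma$, and the only candidate $a$-successor for $a\in\Inputs$ is $\sigma\cdot a$, which is again a finite trace; so it suffices to verify the premise of the rule for the pair $(\sigma,a)$, with no real induction needed.

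To verify the premise I would fix an arbitrary $\sigma_i\in\mapInp{\traces_\omega(\mStdPlus)}$ with $d_\Inputs(\mapInp{(\sigma\cdot a)}[..j],\sigma_i[..j])\leq\inpbound$ for all $j\leq|\sigma|+1$ (if there is no such $\sigma_i$ the premise is vacuous). The crucial observation is that $a\in\Inputs$, so $\mapOut{a}=\NoOutp$, while the input-distance constraint at $j=|\sigma|+1$ forces in particular $d_\Inputs(\mapInp{\sigma}\cdot a,\ \sigma_i[..(|\sigma|+1)])\leq\inpbound<\infty$. Since the placeholder symbol $\NoInp$ lies at infinite distance from every genuine input (as is the case for the projection distances used throughout, and as made explicit for the concrete instantiation in Sect.~\ref{sec:evaluation}; this mirrors Assumption~\ref{assume:quiescence-o-infinite} on the output side), this rules out $\sigma_i[|\sigma|+1]=\NoInp$, i.e.\ $\sigma_i[|\sigma|+1]\in\Inputs$. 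Then, picking any $\sigma_S\in\traces_\omega(\mStdPlus)$ with $\mapInp{\sigma_S}=\sigma_i$ (which exists by definition of $\mapInp{\traces_\omega(\mStdPlus)}$), I get $\sigma_S[|\sigma|+1]\in\Inputs$, hence $\mapOut{\sigma_S[|\sigma|+1]}=\NoOutp=\mapOut{a}$, and therefore $d_\qOutputs(\mapOut{a},\mapOut{\sigma_S[|\sigma|+1]})=0\leq\outpbound$. This discharges the premise of Def.~\ref{def:testing:approxSpec}, so $\sigma\xrightarrow{a}_\Spec\sigma\cdot a$ is a transition of $\Spec$; since $\sigma$ and $a$ were arbitrary, $\Spec$ is input-enabled.

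I expect the only delicate point to be the step that uses the rigidity of the placeholder $\NoInp$ (infinite distance to genuine inputs): it is exactly this that makes an input-labelled step of $\Spec$ matchable only against input-labelled steps of the standard, so that on the output side both project to $\NoOutp$ and the output-distance obligation in the construction of $\Spec$ is met for free. Everything else is routine bookkeeping with Def.~\ref{def:testing:approxSpec} and the definitions of the projections $\mapInp{\cdot}$ and $\mapOut{\cdot}$.
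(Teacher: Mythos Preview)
Your argument hinges on the claim that $d_\Inputs$ places $\NoInp$ at infinite (or at least greater than $\inpbound$) distance from every genuine input, so that the standard-side symbol at position $|\sigma|+1$ is forced to be an input as well. You correctly flag this as the delicate point, but it is in fact a genuine gap: the paper makes no such assumption on $d_\Inputs$. The listed assumptions (Appendix, items 1--5) require only that $d_\qOutputs$ is past-forgetful and that $d_\qOutputs(o,\quiescence)=\infty$; nothing analogous is assumed for $d_\Inputs$, which is an arbitrary pseudometric on $(\Inputs\cup\{\NoInp\})^*$. The instantiation in Section~\ref{sec:evaluation} does happen to set $d_\Inputs(a,b)=\infty$ when exactly one of $a,b$ is $\NoInp$, but that is a choice for that particular case study, not part of the general framework. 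So in general the standard trace may very well have $\sigma_i[|\sigma|+1]=\NoInp$ (i.e.\ $\sigma_S[|\sigma|+1]\in\qOutputs$) while still satisfying the $\inpbound$ bound, and your proof does not cover this case.

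The paper handles precisely this missing case via contract satisfiability (Assumption~\ref{assume:inputs-satisfiable}, Def.~\ref{def:satisfiableContract}): given the input sequence $\mapInp{(\sigma\cdot i)}$, the standard trace $\sigma_S$, and $k=|\sigma|+1$, satisfiability supplies an implementation $\LTS$ and a trace $\hat\sigma\in\traces_\omega(\LTS_\quiescence)$ with $\mapInp{\hat\sigma}=\mapInp{(\sigma\cdot i)}$ and $d_\qOutputs(\mapOut{\hat\sigma[|\sigma|+1]},\mapOut{\sigma_S[|\sigma|+1]})\leq\outpbound$. Since $\mapInp{\hat\sigma}[|\sigma|+1]=i\in\Inputs$, one gets $\mapOut{\hat\sigma[|\sigma|+1]}=\NoOutp$, hence $d_\qOutputs(\NoOutp,\mapOut{\sigma_S[|\sigma|+1]})\leq\outpbound$, which discharges the premise of Def.~\ref{def:testing:approxSpec} even when $\sigma_S[|\sigma|+1]$ is an output. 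Your ``input'' case ($\sigma_S[|\sigma|+1]\in\Inputs$) is fine and matches the paper; what is missing is exactly this ``output'' case, and satisfiability---not a structural property of $d_\Inputs$---is what closes it.
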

\begin{proof}
We have to show that for any trace $\sigma\in\traces_\omega(\Spec)$ it holds for every $i\in\Inputs$ that $\sigma\cdot i\in\traces_\omega(\Spec)$, i.e., there is a transition $\sigma \xrightarrow{i}_\Spec \sigma \cdot i$ in \Spec.
To have this transition, we must satisfy the premise of Def.~\ref{def:testing:approxSpec}.
Let $\sigma_i\in\mapInp{traces_*(\qStd)}$ and accordingly $\sigma_S\in\traces_*(\qStd)$ a trace with $\mapInp{\sigma_S} = \sigma_i$.
Assume that $\forall j \leq |\sigma|+1$ it holds that $d_\Inputs(\mapInp{(\sigma\cdot i)[..j]}, \sigma_i[..j]) \leq \inpbound$ (otherwise the lemma holds trivially).
We pick $\sigma_S$ for the existential quantifier. 
By definition $\mapInp{\sigma_S} = \sigma_i$, so it suffices to show that $d_\qOutputs(\mapOut{i}, \mapOut{\sigma_S[|\sigma|+1]}) = d_\qOutputs(\NoOutp, \mapOut{\sigma_S[|\sigma|+1]}) \leq \outpbound$.
We continue by case distinction of whether $\sigma_S[|\sigma|+1]\in\Inputs$.
If this is the case, we are immediately done, because $d_\qOutputs(\NoOutp, \NoOutp) = 0 \leq \outpbound$.

If instead $\sigma_S[|\sigma|+1]\in\qOutputs$, we use satisfiability of \Contract\ (Assumption~\ref{assume:inputs-satisfiable}) with $\mapInp{(\sigma\cdot i)}$ for $\sigma_i$, $\sigma_S$ for $\sigma_S$ and $k=|\sigma|+1$. We get some implementation $\LTS$ satisfying Def.~\ref{def:ed-clean:LTS}.\ref{def:ed-clean:LTS:ii} and a trace $\hat\sigma\in\traces_\omega(\LTS_\quiescence)$ with $\mapInp{\hat{\sigma}} = \mapInp{(\sigma\cdot i)}$ and $d_\qOutputs(\mapOut{\hat\sigma[|\sigma|+1]}, \mapOut{\sigma_S[|\sigma|+1]}) \leq \outpbound$. From Lemma~\ref{lemma:spec-is-largest-clean2} we get that $\hat\sigma\in\traces_\omega(\Spec)$. We get from $d_\qOutputs(\mapOut{\hat\sigma[|\sigma|+1]}, \mapOut{\sigma_S[|\sigma|+1]}) \leq \outpbound$ and $\mapInp{\hat{\sigma}} = \mapInp{(\sigma\cdot i)}$ that $d_\qOutputs(\NoOutp, \mapOut{\sigma_S[|\sigma|+1]}) \leq \outpbound$, which concludes the proof.
\qed
\end{proof}

The second property we want to show for \Spec\ is that it is the largest implementation within \Contract.

\RefIsLargestImplementation*
\begin{proof}
We have to show that \Spec\ is an IOTS, it is robustly clean and that for every implementation $\LTS$ that is robustly clean w.r.t. \Contract, it holds that $\traces_\omega(\LTS_\quiescence) \subseteq \traces_\omega(\Spec)$, where \Spec\ is constructed from \Contract.

Lemma~\ref{lemma:R-is-IOTS} gives us that \Spec\ is input-enabled and hence, according to Def.~\ref{def:LTS}, \Spec\ is an IOTS.
We know that \Spec\ is robustly clean from Theorem~\ref{thm:SCleanImpliesRClean}.
From robust cleanness of $\LTS$ we get that $\LTS$ satisfies Def.~\ref{def:ed-clean:LTS}.\ref{def:ed-clean:LTS:ii}. The theorem follows with Lemma~\ref{lemma:spec-is-largest-clean2}.
\qed
\end{proof}

\subsection{Soundness of Algorithm \DT}

In the main paper we presented an algorithm \DT\ to conduct doping tests. We will prove here, that for every robustly clean \SUT\ it indeed holds that $\SUT\ioco\Spec$ and we show soundness and exhaustiveness of the algorithm w.r.t. to this $\ioco$ relation. This leads to the soundness of \DT\ w.r.t. robust cleanness. We then show, that also the bounded version of \DT\ is sound w.r.t. robust cleanness.

We first establish a Lemma that shows an important relation between traces of a system and the outputs of a set of states that can be reached by a certain trace.

\begin{lemma}
\label{lemma:out-extends-trace}
Let $\LTS$ be a LTS, $\sigma \in \traces_*(\LTS_\quiescence)$ a suspension trace of $\LTS$ and $o$ an output. Then, $o \in \out(\LTS_\quiescence \after \sigma)$ if and only if $\sigma\cdot o \in \traces_*(\LTS_\quiescence)$.
\end{lemma}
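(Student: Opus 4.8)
The plan is to prove the biconditional by directly unfolding the definitions of $\LTS_\quiescence\after\sigma$, of $\out(\cdot)$ on states and on sets of states, and of $\paths_*$ and $\trace$. The statement is essentially a bookkeeping fact about finite paths in the quiescence closure $\LTS_\quiescence$, so no machinery beyond these definitions is needed.

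For the forward direction I would assume $o\in\out(\LTS_\quiescence\after\sigma)$. By definition of $\out$ on a set of states, there is a state $q\in\LTS_\quiescence\after\sigma$ with $q\xrightarrow{o}_\quiescence q'$ for some $q'$. By definition of $\after$, writing $\sigma=a_1\cdots a_n$, the membership $q\in\LTS_\quiescence\after\sigma$ means there is a finite path $p=q_0\,a_1\,q_1\cdots a_n\,q\in\paths_*(\LTS_\quiescence)$ beginning in the initial state $q_0$ with $\trace(p)=\sigma$. Appending the transition $q\xrightarrow{o}_\quiescence q'$ to $p$ yields a finite path $p'=q_0\,a_1\,q_1\cdots a_n\,q\,o\,q'\in\paths_*(\LTS_\quiescence)$ (still beginning in $q_0$) with $\trace(p')=\sigma\cdot o$, hence $\sigma\cdot o\in\traces_*(\LTS_\quiescence)$.

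For the converse I would assume $\sigma\cdot o\in\traces_*(\LTS_\quiescence)$, so there is a finite path $p'=q_0\,a_1\,q_1\cdots a_n\,q_n\,o\,q_{n+1}\in\paths_*(\LTS_\quiescence)$ witnessing it. Its prefix $p=q_0\,a_1\,q_1\cdots a_n\,q_n$ is again a finite path in $\paths_*(\LTS_\quiescence)$ beginning in $q_0$ with $\trace(p)=\sigma$, so $q_n\in\LTS_\quiescence\after\sigma$. Since $q_n\xrightarrow{o}_\quiescence q_{n+1}$, we obtain $o\in\out(q_n)\subseteq\out(\LTS_\quiescence\after\sigma)$.

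There is no real obstacle here; the only points requiring a little care are (i) that every path in $\paths_*$ is required to begin in the initial state, so that both one-step extensions and prefixes of such paths are again legitimate members of $\paths_*(\LTS_\quiescence)$, and (ii) that $o$ ranges over $\Outputs\cup\{\quiescence\}$ on both sides, so the two formulations really speak about the same object. The hypothesis $\sigma\in\traces_*(\LTS_\quiescence)$ is used only to ensure the statement is non-vacuous, i.e.\ that $\LTS_\quiescence\after\sigma\neq\emptyset$; it plays no further role in either direction.
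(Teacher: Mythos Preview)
Your proof is correct and follows the same approach as the paper: both arguments simply unfold the definitions of $\after$, $\out$, and $\paths_*$/$\traces_*$ to see that a witnessing path for $\sigma$ together with an outgoing $o$-transition is precisely a witnessing path for $\sigma\cdot o$. Your additional remarks about the role of the hypothesis $\sigma\in\traces_*(\LTS_\quiescence)$ and about $o$ ranging over $\Outputs\cup\{\quiescence\}$ are accurate observations that the paper's terse chain of ``if and only if'' steps leaves implicit.
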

\begin{proof}
By definition, $o\in\out(\LTS_\quiescence \after \sigma)$ if and only if there is some $q \in \LTS_\quiescence \after \sigma$ for which there is some $q'$ and a transition $q \xrightarrow{o} q'$. This holds if and only if there is a path $p\in\paths_*(\LTS_\quiescence)$ with $\trace(p)=\sigma$, $\last(p)=q$ and $q \xrightarrow{o} q'$. Equivalently, there can be path $p'\in\paths_*(\LTS_\quiescence)$ with $\trace(p') = \sigma\cdot o$, which is the case if and only if $\sigma\cdot o \in \traces_*(\LTS_\quiescence)$.
\qed
\end{proof}

Next, we prove that \Spec\ is a suitable specification for the \ioco\ relation.

\cleanImpliesIoco*
\begin{proof}
We have to show that for all $\sigma \in \traces_*(\Spec_\quiescence)$ it holds that $\out(\SUT_\quiescence \after \sigma) \subseteq \out(\Spec_\quiescence \after \sigma)$. From Lemma~\ref{lemma:R-is-quiescence-closed} we know that $\sigma \in \traces_*(\Spec)$. If $\out(\SUT_\quiescence \after \sigma) = \emptyset$ the theorem trivially holds. Otherwise, there is some $o \in \out(\SUT_\quiescence \after \sigma)$. From Lemma~\ref{lemma:out-extends-trace} we get that $\sigma\cdot o \in \traces_*(\SUT_\quiescence)$. Since every state in $\SUT_\quiescence$ has either an outgoing output or quiescence transition, we know how to extend $\sigma\cdot o$ to an infinite trace $\sigma'\in\traces_\omega(\SUT_\quiescence)$ with $\sigma'[..|\sigma|+1] = \sigma\cdot o$. Since \SUT\ is robustly clean w.r.t. \Contract\ and because \Spec\ is the largest implementation within \Contract, we get with Thm.~\ref{thm:testing:RefIsLargestImplementation} that $\sigma''\in\traces_\omega(\Spec)$. Then, we can conclude that $\sigma\cdot o\in\traces_*(\Spec)$ and $\sigma\cdot o\in\traces_*(\Spec_\quiescence)$ with Lemma~\ref{lemma:R-is-quiescence-closed}. Finally, Lemma~\ref{lemma:out-extends-trace} gives us that $o\in\out(\Spec_\quiescence \after \sigma)$.
\qed
\end{proof}

\removed{
\hrmkSB{This has no essentially been shown in the previous section}
\begin{lemma} \label{lemma:testing:SUTSubsetRef}
Let \Contract\ be a contract with standard \Std\ and \Spec\ be constructed from \Contract. Let \SUT\ be an implementation that is robustly clean w.r.t. \Contract\ and with $\qStd \subseteq \SUT_\quiescence$. Then, $\traces_\omega(\SUT) \subseteq \traces_\omega(\Spec)$.
\end{lemma}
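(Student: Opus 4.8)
The plan is to obtain this lemma as an immediate corollary of the fact, already established, that $\Spec$ is the largest implementation within $\Contract$. First I would invoke Theorem~\ref{thm:testing:RefIsLargestImplementation}: since $\Spec$ is constructed from $\Contract$, it is the largest implementation within $\Contract$, which by definition means that for every $\LTS'$ that is robustly clean w.r.t.\ $\Contract$ we have $\traces_\omega(\LTS'_\quiescence) \subseteq \traces_\omega(\Spec_\quiescence)$. Instantiating with $\LTS' = \SUT$, which is robustly clean w.r.t.\ $\Contract$ by hypothesis, yields $\traces_\omega(\SUT_\quiescence) \subseteq \traces_\omega(\Spec_\quiescence)$.

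Next I would discharge the quiescence closures on both sides. On the right, Lemma~\ref{lemma:R-is-quiescence-closed} tells us that $\Spec_\quiescence$ is exactly $\Spec$, so $\traces_\omega(\Spec_\quiescence) = \traces_\omega(\Spec)$. On the left, I would note that the $\quiescence$-closure construction of Def.~\ref{def:delta-closure} only \emph{adds} $\quiescence$-labelled self-loops to quiescent states and never removes a transition; hence every infinite path of $\SUT$ is also an infinite path of $\SUT_\quiescence$, giving $\traces_\omega(\SUT) \subseteq \traces_\omega(\SUT_\quiescence)$. Chaining the three inclusions produces $\traces_\omega(\SUT) \subseteq \traces_\omega(\SUT_\quiescence) \subseteq \traces_\omega(\Spec_\quiescence) = \traces_\omega(\Spec)$, which is the claim.

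There is no genuine obstacle here; as the margin note records, the statement has essentially been proven in the previous section, and the argument above is just the bookkeeping needed to connect the pieces. The only points deserving a moment's care are (i) that the lemma is phrased with $\traces_\omega(\SUT)$ rather than $\traces_\omega(\SUT_\quiescence)$, so the trivial inclusion between a system and its quiescence closure must be made explicit, and (ii) that the extra hypothesis $\qStd \subseteq \SUT_\quiescence$ — while it is what makes $\SUT$ a legitimate implementation for the standard underlying $\Contract$ — is not actually consumed by this particular inclusion, since robust cleanness of $\SUT$ w.r.t.\ $\Contract$ alone already licenses the appeal to Theorem~\ref{thm:testing:RefIsLargestImplementation}.
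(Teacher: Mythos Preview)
Your argument is correct and is exactly the kind of reduction the authors themselves had in mind: the margin comment attached to this (removed) lemma reads ``This has [now] essentially been shown in the previous section,'' and your chain
\[
\traces_\omega(\SUT) \subseteq \traces_\omega(\SUT_\quiescence) \subseteq \traces_\omega(\Spec_\quiescence) = \traces_\omega(\Spec)
\]
via Theorem~\ref{thm:testing:RefIsLargestImplementation} and Lemma~\ref{lemma:R-is-quiescence-closed} is the clean way to cash that out. Your remark that the hypothesis $\qStd \subseteq \SUT_\quiescence$ is not actually consumed is also accurate.

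It is worth noting, though, that the paper's own proof of this lemma does \emph{not} go through Theorem~\ref{thm:testing:RefIsLargestImplementation}. Instead it argues directly by contradiction: assume some $\sigma \in \traces_\omega(\SUT) \setminus \traces_\omega(\Spec)$, use determinism of $\Spec$ to locate the smallest $k$ with $\sigma[..k] \notin \traces_*(\Spec)$, read off from Def.~\ref{def:testing:approxSpec} a witnessing standard input $\sigma_i$ for which no standard trace stays $\outpbound$-close at step $k$, and then apply condition~\ref{def:ed-clean:LTS:ii} of robust cleanness of $\SUT$ (together with Def.~\ref{def:LTS:Std}) to produce exactly such a standard trace, yielding the contradiction. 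This is essentially a hand-rolled instance of Lemma~\ref{lemma:spec-is-largest-clean2}. Your route is shorter and avoids repeating that argument; the paper's route is self-contained and does not depend on the ``largest implementation'' packaging. Either is fine, and the authors evidently retired the direct proof once the packaged version was available.
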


\begin{proof}
  By contradiction, suppose $\sigma \in \traces_\omega(\SUT)$ but
  $\sigma \not\in \traces_\omega(\Spec)$.
  \begin{sloppypar}
  Because $\Spec$ is deterministic, we know that the reason why
  $\sigma$ is no trace of $\Spec$ appears in a finite prefix of
  $\sigma$.
  Let $k$ be the smallest number, for which
  $\sigma[..k] \not\in \traces_*(\Spec)$.
  From $\sigma[..k] \not\in \traces_*(\Spec)$ it follows that there is
  some trace $\sigma_i \in \mapInp{\traces_\omega(\StdPlus)}$ with
  $d_\Inputs(\mapInp{\sigma[..j]}, \sigma_i[..j]) \leq \inpbound$, for
  all $j \leq k$, for which there is no trace
  $\sigma_S \in \traces_\omega(\StdPlus)$ with input $\sigma_i$ for
  which it holds that
  $d_\Outputs(\mapOut{\sigma[k]}, \mapOut{\sigma_S[k]}) \leq \outpbound$.
    \end{sloppypar}

  Let $\sigma_! \in \traces_\omega(\StdPlus)$ be the witness for
  $\sigma_i$, i.e. $\sigma_i = \mapInp{\sigma_!}$.  From
  Def.~\ref{def:ed-clean:LTS} of $\SUT$ for $\sigma$, $\sigma_!$ and
  $k$ and from
  $\forall j \leq k: d_\Inputs(\mapInp{\sigma[..j]}, \sigma_i[..j]) \leq \inpbound$,
  we get a trace $\sigma'' \in \traces_\omega(\SUT)$ with
  $\mapInp{\sigma''} = \sigma_i$ and
  $d_\Outputs(\mapOut{\sigma[k]}, \mapOut{\sigma''[k]}) \leq \outpbound$.
  According to Def. \ref{def:LTS:Std} and because
  $\sigma_i \in \mapInp{\traces_\omega(\StdPlus)}$, it follows that
  $\sigma'' \in \traces_\omega(\StdPlus)$.  This is a contradiction to
  the non-existence of $\sigma_S$ above.
  \qed
\end{proof}
}

In order to prove that Algorithm~\ref{algo:dynamictest} conducts tests according to \ioco\ \Spec, we will recapitulate the original algorithm of the model-based testing theory~\cite{DBLP:journals/cn/Tretmans96,DBLP:conf/fortest/Tretmans08}.


%
First we will consider a distinguished label
$\theta\notin\Inputs\cup\Outputs\cup\{\quiescence\}$ which is intended
to make observable quiescent states. $\theta$ could be implemented via a
 timeout mechanism set appropriately. We refer
to~\cite{DBLP:journals/cn/Tretmans96,DBLP:conf/fortest/Tretmans08} for
further explanations.

\denseparagraph{Test cases.} We work with a smooth and concise notation for test cases, for which we shall use basic process
algebraic notation~\cite{Milner89}.  A \emph{process} is a term
defined in the language given by
$\textstyle p \Coloneqq \nsum_{i\in I} a_i.p_i \mid A$, 
where $I$
is an index set, each $a_i$ is a label, and each $p_i$ is a process,
and $A$ belongs to a set of constants called \emph{process names}
which in turn can be defined by equations of the form $A:=p$.
%
%
We write $\nsum_{i\in I_1} a_i.p_i + \nsum_{i\in I_2} a_i.p_i$ for
$\nsum_{i\in I_1\cup I_2} a_i.p_i$.
A process has semantics in terms of LTS in the usual way: the set of
states is the set of all possible processes and the transitions are
defined according to the  following rules
\[
  \begin{array}{c}\nsum_{i\in I} a_i.p_i \xrightarrow{a_i}p_i\\[2em]\end{array}\qquad\qquad
  \infer{A\xrightarrow{a}p'}{p\xrightarrow{a}p'}\ {\begin{array}{c}A:=p\\[2em]\end{array}}\vspace{-1em}
\]

A \emph{test case} $\test$ for an implementation with inputs in $\Inputs$
and outputs in $\Outputs$ is defined as a deterministic LTS with
the following restrictions:
\begin{inparaenum}[(i)]
\item%
  from $\test$, any of the special processes $\pass$ and $\fail$ can be
  reached, where $\pass\neq\fail$, and they are defined by
  $\pass:=\nsum\{a.\pass\mid a\in\Outputs\cup\{\theta\}\}$
  and
  $\fail:=\nsum\{a.\fail\mid a\in\Outputs\cup\{\theta\}\}$,
\item%
  $t$ has no reachable cycles except those of $\pass$ and $\fail$, and
\item%
  for any state $q$ reachable from $\test$, the set $\{a\mid
  q\xrightarrow{a}q'\}$ contains the whole set $\Outputs$ of
  outputs, and also contains at most one input or $\theta$ (but not
  both).
\end{inparaenum}
A \emph{test suite} is a set of test cases and a
\emph{test run} of a test case $\test$ with an IUT $\SUT$ is an
experiment where the test case supplies inputs to the IUT while
observing the outputs of the IUT or the absence of
them~\cite{DBLP:conf/fortest/Tretmans08}.  This can be described with
a particular form of parallel composition according to the following
rules:
\begin{gather*}
  \!\!\!%
  \infer{%
    q \| p \xrightarrow{a} q' \| p'
  }{
    q \xrightarrow{a} q' & p \xrightarrow{a} p'
  }\ {\begin{array}{c}a \in \Inputs \cup \Outputs\\[2em]\end{array}}
  \qquad\quad
  \infer{%
    q \| p \xrightarrow{\theta} q' \| p
  }{
    q \xrightarrow{\theta} q' & p \xrightarrow{\delta} p'
  }\vspace{-1em}
\end{gather*}
Let $s_0$ be the initial state of $\SUT$ and $q_0$ be the initial
state of $\test$.  The IUT $\SUT$ passes the test case $\test$,
notation $\SUT\passes\test$, if and only if there is no state $s$ such
that a state $\fail \| s$ is reachable from $q_0\|s_0$.  Given a 
test suite 
$T$, we write $\SUT\passes T$ whenever $\SUT\passes\test$ for
all $\test\in T$.

\denseparagraph{Basic test generation.} Tretmans~\cite{DBLP:journals/cn/Tretmans96,DBLP:conf/fortest/Tretmans08} proposed the following recursive and non-deterministic generation algorithm, 
in which ${S \after a} \coloneqq \{s'\mid \exists s\in S: s\xrightarrow{a}s'\}$.

\medskip

\noindent
$\TG(S)\coloneqq$ choose non-deterministically one of the following
processes:
\begin{compactenum}
\item  \label{def:algoTr:1}
$\pass$
\item \label{def:algoTr:2}
$\phantom{+ \ }i;t_i \quad$ where $i {\in} \Inputs$, $s \after i \neq \emptyset$ and $t_i {\in} \TG(S \after i)$\\
$ + \ \nsum \{ o ; \fail \mid {o \in \Outputs} \land {o \notin \out(s)}\}$\\
$ + \ \nsum \{ o_j ; t_{o_j} \mid {o_j \in \Outputs} \land {o_j \in \out(s)}\}$\\
\mbox{}\hspace{6.5em} where for each $o_j$, $t_{o_j} \in \TG(S \after o_j)$
\item  \label{def:algoTr:3}
$\phantom{+ \ }\nsum \{ o; \fail \mid {o \in \Outputs} \land {o \notin \out(s)}\}$\\
$ + \ \nsum \{ \theta ; \fail \mid \quiescence \notin \out(s)\}$\\
$ + \ \nsum \{ o_j ; t_{o_j} \mid {o_j \in \Outputs} \land {o_j \in \out(s)}\}$\\
\mbox{}\hspace{6.5em} where for each $o_j$, $t_{o_j} \in \TG(S \after o_j)$\\
$ +\ \nsum \{ \theta ; t_\theta \mid \quiescence \in \out(s)\}\quad$ where $t_\theta \in \TG(S \after \quiescence)$
\end{compactenum}

\medskip
\noindent
Given a specification $\textit{Spec}$ with initial state $s_0$,
$\TG(\{s_0\})$ generates a \emph{test suite} for $\textit{Spec}$. A test suite is a set of test cases.  
The first possible option in the algorithm states that at any moment the
test process can stop indicating that the execution up to this point
has been satisfactory.  The second option may exercise input $i$ and
continue with test $t_i$. Alternatively it can accept any possible
output. If the output is not included in the specification, the test
fails. If instead it is considered, it is accepted and it continues
with the testing process.  The third option is similar to the previous
one only that it considers the possibility of quiescence instead of
inputs: if this is not a quiescent state but a timeout is produced
without observing outputs (label $\theta$), the test fails; if instead
it is quiescent and a timeout is produced, the test continues with the
selected execution. The important property of this algorithm is that
$\SUT \ioco \textit{Spec}$ if and only if
$\SUT\passes\TG(\{s_0\})$~\cite{DBLP:journals/cn/Tretmans96,DBLP:conf/fortest/Tretmans08}.

\denseparagraph{Complete doping test generation} If $\Contract$ is a satisfiable contract and \Spec\ constructed from \Contract\ with $\epsilon$ being its initial state, then $\TG(\{\epsilon\})$ could be used right away to generate a test suite for $\Spec$, and used to doping test some IUT $\SUT$.
However, $\Spec$ itself cannot be constructed beforehand since it is an infinite object
that requires infinite behaviour to be constructed. We instead propose an algorithm that generates appropriate tests from the quiescence closed standard LTS $\Std_\quiescence$ which is a finite object. 
For this, we use the oracle $\clean$ from the main paper (see eq. (\ref{eq:algo:clean-approx})).

The resulting doping test generation algorithm $\DTG$ is similar to   $\TG$ with
the variation that the outputs that are included in the test come from $\clean$
  rather than from $\out$.

\medskip

\noindent
$\DTG(h)\coloneqq$ 
choose non-deterministically one of the following processes:
\begin{compactenum}
\item \label{def:algoA:1}
$\pass$
\item \label{def:algoA:2}
$\phantom{+ \ }i;t_i \quad$ where $i \in \Inputs$ and $t_i \in \DTG(h \cdot i)$\\
$+ \ \nsum \{ o ; \fail \ | \ o \in \Outputs \land o \notin \clean(h)\}$\\
$+ \ \nsum \{ o_j ; t_{o_j} \ | \ o_j \in \Outputs \land o_j \in \clean(h)\}$\\
\mbox{}\hspace{7.5em} where for each $o_j$, $t_{o_j} \in \DTG(h \cdot o_j)$
\item \label{def:algoA:3}$\nsum \{ o ; \fail \ | \ o \in \Outputs \land o \notin \clean(h)\}$\\
$+ \ \nsum \{ \theta ; \fail \ | \ \quiescence \notin \clean(h)\}$\\
$+ \ \nsum \{ o_j ; t_{o_j} \ | \ o_j \in \Outputs \land o_j \in \clean(h)\}$\\
\mbox{}\hspace{7.5em} where for each $o_j$, $t_{o_j} \in \DTG(h \cdot o_j)$\\
$+ \ \nsum \{ \theta ; t_\theta \ | \ \quiescence \in \clean(h)\}\quad$ where $t_\theta \in \DTG(h \cdot \delta)$
\end{compactenum}
\medskip
\noindent Notice that the input to $\DTG$ is a trace rather than a set of states
as in $\TG$.  Such a trace is, however, the current state of the
reference specification $\Spec$ and, since $\Spec$ is deterministic,
the state $h\cdot a$ is precisely the only possible state of $h \after
a$.
%

In the following proofs, we will deal with transitions of multiple actions. We introduce a handy notation $p \xRightarrow{a\cdot b\cdot c} q$ to express that the system can perform the sequence of actions $\sigma = a\cdot b\cdot c$ in state $p$ to get to state $q$, i.e., $p \xRightarrow{a_1\cdot ... \cdot a_n} q$ if and only if there are states $p_1, \dots, p_n$ such that $p \xrightarrow{a_1} p_1 \xrightarrow{a_2} \cdots \xrightarrow{a_n} p_n = q$.

Lemma~\ref{lemma:equiCleanOut} shows that $\clean$ and $\out$ are equivalent. Notice that $\clean$ works on $\qStd$ and $\out$ operates on \Spec. For ease of reading the proof, we annotate the LTS appropriately, i.e., $\clean^{(\Std_\quiescence)}$ and $\out^{(\Spec)}$.

\begin{lemma} \label{lemma:equiCleanOut}
Let \Contract\ be a contract with standard \Std\ and \Spec\ constructed from \Contract.
%
  For all $h \in (\Inputs \cup \Outputs \cup \{\quiescence\})^*$,
  $\clean^{(\Std_\quiescence)}(h) = \out^{(\Spec)}(\{h\})$.
\end{lemma}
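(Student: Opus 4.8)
The plan is to prove the set equality $\clean^{(\Std_\quiescence)}(h) = \out^{(\Spec)}(\{h\})$ by unfolding both sides and observing that they are controlled by the same logical condition. First I would recall from Lemma~\ref{lemma:testing:RefTraceIsState} and Corollary~\ref{cor:RefTrace} that $\Spec$ is deterministic and each reachable state is identified with the trace reaching it; in particular $\{h\} \after a = \{h \cdot a\}$ whenever the transition $h \xrightarrow{a}_\Spec h\cdot a$ exists, and is empty otherwise. Hence $\out^{(\Spec)}(\{h\})$ is precisely the set of $a \in \qOutputs$ such that the transition $h \xrightarrow{a}_\Spec h\cdot a$ is present in $\Spec$. (Here one must be slightly careful: $\out$ as defined in the paper ranges over $\Outputs \cup \{\delta\}$, and $\Spec$ models quiescence transitions explicitly, so $\delta$-labelled transitions of $\Spec$ are exactly the ones contributing $\delta$ to $\out$; I would note that $h$ being reachable as a state of $\Spec$ is needed, and if $h \notin \traces_*(\Spec)$ then $\{h\}\after a = \emptyset$ for all $a$, while the same premise failure on the $\clean$ side must also be checked — but in the intended use $h$ is always a state visited by $\DTG$, so both sides are nonempty-context; still, I would state the claim for general $h$ by observing that $h\notin\traces_*(\Spec)$ forces $\out^{(\Spec)}(\{h\})$ to be the ``vacuous'' set determined by the same premise.)

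Next I would write out the defining rule for $h \xrightarrow{a}_\Spec h\cdot a$ from Def.~\ref{def:testing:approxSpec}: the transition exists iff for every $\sigma_i \in \mapInp{\traces_\omega(\mStdPlus)}$ with $d_\Inputs(\mapInp{(h\cdot a)}[..j], \sigma_i[..j]) \le \inpbound$ for all $j \le |h|+1$, there is $\sigma_S \in \traces_\omega(\mStdPlus)$ with $\mapInp{\sigma_S} = \sigma_i$ and $d_\qOutputs(\mapOut{a}, \mapOut{\sigma_S[|h|+1]}) \le \outpbound$. Then I would place this side by side with the definition of $\clean(h)$ in eq.~(\ref{eq:algo:clean}): $a \in \clean(h)$ iff for every $\sigma_i \in \mapInp{\traces_\omega(\Std_\quiescence)}$ with $d_\Inputs(\mapInp{\sigma_i[..j]}, \mapInp{(h\cdot a)[..j]}) \le \inpbound$ for all $j \le |h|+1$, there exists $\sigma \in \traces_\omega(\Std_\quiescence)$ with $\mapInp{\sigma} = \mapInp{\sigma_i}$ and $d_\qOutputs(a, \mapOut{\sigma[|h|+1]}) \le \outpbound$. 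Since $\mStdPlus$ is (by our naming convention) $\Std_\quiescence$, and since $d_\Inputs$ is a pseudometric hence symmetric, these two conditions are literally the same predicate on $a$ — modulo the harmless bookkeeping that in $\clean$ the quantified variable ranges over traces of $\Std_\quiescence$ while in Def.~\ref{def:testing:approxSpec} it ranges over traces of $\mStdPlus$, which are the same object, and that $\mapOut{a} = a$ for $a \in \qOutputs$ since $a$ is a single label already in $\Outputs \cup \{\delta\}$.

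So the proof is essentially a definitional unwinding: I would prove $\subseteq$ and $\supseteq$ simultaneously by the chain $a \in \clean^{(\Std_\quiescence)}(h) \iff$ (the Def.~\ref{def:testing:approxSpec} premise-implication holds for $a$) $\iff h \xrightarrow{a}_\Spec h\cdot a \iff h\cdot a \in \traces_*(\Spec)$ with $h \in \traces_*(\Spec)$ $\iff a \in \out^{(\Spec)}(\{h\})$, the last step using Lemma~\ref{lemma:out-extends-trace} together with $\{h\}\after a = \{h\cdot a\}$ from determinism. The one genuine obstacle — and it is minor — is the edge case where $h$ is not itself a reachable state of $\Spec$: then $\out^{(\Spec)}(\{h\})$ collapses but one must check $\clean$ does too, or else restrict the statement to $h \in \traces_*(\Spec_\quiescence) = \traces_*(\Spec)$ (using Lemma~\ref{lemma:R-is-quiescence-closed}); I would handle this by an outer induction on $|h|$, noting that $\DTG$ only ever queries $\clean$ at histories that are traces of $\Spec$, and that the induction hypothesis gives $h \in \traces_*(\Spec)$ before we analyze which $a$ extend it. The symmetry of $d_\Inputs$ and the identification $\mapOut{a}=a$ for output labels are the two facts I would be careful to invoke explicitly, but neither is deep.
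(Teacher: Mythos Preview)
Your approach is correct and essentially the same as the paper's: both proofs amount to unfolding the defining condition of $\clean(h)$ from eq.~(\ref{eq:algo:clean}) and the transition rule of Def.~\ref{def:testing:approxSpec}, then observing that they are literally the same predicate on $o$, so that $o \in \clean(h) \iff h \xrightarrow{o}_\Spec h\cdot o \iff o \in \out^{(\Spec)}(\{h\})$.

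Your worry about the edge case $h \notin \traces_*(\Spec)$ is unnecessary, however. In $\Spec$ the state space is (implicitly) the set of all finite sequences over $\Inputs \cup \Outputs \cup \{\quiescence\}$, and the rule in Def.~\ref{def:testing:approxSpec} defines outgoing transitions from \emph{every} such state $h$, reachable or not. Since $\out(\{h\})$ is computed directly from the transitions leaving the state $h$ (it does not go through $\after$ from the initial state), the set $\out^{(\Spec)}(\{h\})$ is perfectly well defined for arbitrary $h$, and it coincides with $\clean(h)$ by the same definitional match. So no outer induction on $|h|$ and no restriction to reachable histories is needed; the paper's proof simply omits this concern because it does not arise.
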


\begin{proof}
  Let $o \in \clean_{\Std_\quiescence}(h)$.
  Hence, we know that $o \in \Outputs \cup \{\quiescence\}$.
  From eq.~(\ref{eq:algo:clean}),
  we get that for all $\sigma_i \in \mapInp{\traces_{\omega}(\mStdPlus)}$,
  if $(\forall j \leq |h|+1: d_\Inputs(\mapInp{\sigma_i[..j]}, \mapInp{(h \cdot o)[..j]}) \leq \inpbound)$
  then there exists $\sigma \in \traces_{\omega}(\mStdPlus)$ such that
  $\mapInp{\sigma} = \mapInp{\sigma_i}$ and
  $d_\Outputs(o, \mapOut{\sigma[|h|+1]}) \leq \outpbound$.
  This is equivalent to the condition on the rule of
  Def.~\ref{def:testing:approxSpec} which proves transition $h
  \xrightarrow{o}_\Spec h \cdot o$. In turns, such transition exists
  if and only if $o \in \out^{(\Spec)}(\{h\})$.
  \qed
\end{proof}

In the following we will proof that $\SUT\ioco\Spec$ if and only if $\SUT\passes\DTG(\epsilon)$, which is essentially the completeness property from~\cite{DBLP:journals/cn/Tretmans96}. We do this by first proving two lemmas. First, Lemma~\ref{lemma:recursiveDTGReachable} shows that for every trace in \Spec, there is a test $t\in\DTG(\epsilon)$ that follows this trace. Second, we show in Lemma~\ref{lemma:testTraceIsRTrace} that, if a call of $\DTG(\epsilon)$ eventually recursively calls $\DTG(\sigma)$, then $\sigma$ is a trace in \Spec. 

\begin{lemma} \label{lemma:recursiveDTGReachable}
  For all $\sigma \in \traces_*(\Spec)$ and all $t' \in \DTG(\sigma)$
  there exists a test $t \in \DTG(\epsilon)$ such that $t
  \xRightarrow{\sigma\lift} t'$. Here, $\sigma \lift$ is the same trace
  as $\sigma$ with all occurrences of $\quiescence$ replaced by
  $\theta$.
\end{lemma}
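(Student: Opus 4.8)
The plan is to prove Lemma~\ref{lemma:recursiveDTGReachable} by induction on the length $|\sigma|$, carefully tracking the correspondence between recursive calls of $\DTG$ and transitions in a generated test case. The key intuition is that $\DTG$ is essentially Tretmans' $\TG$ algorithm operating on the (deterministic) reference implementation $\Spec$, where the history $h$ plays the role of ``the unique current state'' of $\Spec$; so following a trace of $\Spec$ corresponds exactly to making the right non-deterministic choices in $\DTG$.

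\medskip
\noindent\emph{Base case.} When $\sigma = \epsilon$, we need $t' \in \DTG(\epsilon)$ with $t \xRightarrow{\epsilon\lift} t'$ for some $t \in \DTG(\epsilon)$. Since $\epsilon\lift = \epsilon$, we simply take $t = t'$; the empty transition sequence trivially connects a state to itself.

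\medskip
\noindent\emph{Inductive step.} Suppose the claim holds for all traces of length $n$, and let $\sigma = \sigma_0 \cdot a$ with $|\sigma_0| = n$ and $\sigma \in \traces_*(\Spec)$; in particular $\sigma_0 \in \traces_*(\Spec)$, and by Lemma~\ref{lemma:equiCleanOut} and the determinism of $\Spec$ we have $a \in \clean(\sigma_0)$ if $a$ is an output or quiescence, or $a \in \Inputs$ otherwise (since $\Spec$ is input-enabled by Lemma~\ref{lemma:R-is-IOTS}, and by Def.~\ref{def:testing:approxSpec} the transition $\sigma_0 \xrightarrow{a}_\Spec \sigma$ exists). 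Fix $t' \in \DTG(\sigma) = \DTG(\sigma_0 \cdot a)$. The strategy is to build an element $t_0 \in \DTG(\sigma_0)$ that performs action $a\lift$ (i.e.\ $a$ itself if $a \in \Inputs \cup \Outputs$, or $\theta$ if $a = \quiescence$) and then continues as $t'$, and then invoke the induction hypothesis on $\sigma_0$ to obtain some $t \in \DTG(\epsilon)$ with $t \xRightarrow{\sigma_0\lift} t_0$; composing gives $t \xRightarrow{\sigma_0\lift \cdot a\lift} t' = t \xRightarrow{\sigma\lift} t'$ as required. To construct $t_0$: in case $a \in \Inputs$, choose option~\ref{def:algoA:2} of $\DTG(\sigma_0)$ with input $i = a$ and with $t_i := t'$ (which is legal since $t' \in \DTG(\sigma_0 \cdot a)$), picking for the other summands $t_{o_j}$ arbitrary elements of $\DTG(\sigma_0 \cdot o_j)$; then $t_0 \xrightarrow{a} t'$. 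In case $a \in \Outputs$, choose option~\ref{def:algoA:3} (or~\ref{def:algoA:2}) of $\DTG(\sigma_0)$; since $a \in \clean(\sigma_0)$, $a$ appears as a summand $o_j$ with continuation $t_{o_j}$, which we set to $t'$; then $t_0 \xrightarrow{a} t'$. In case $a = \quiescence$, choose option~\ref{def:algoA:3}; since $\quiescence \in \clean(\sigma_0)$, the summand $\theta;t_\theta$ is present, and we set $t_\theta := t'$; then $t_0 \xrightarrow{\theta} t'$, matching $a\lift = \theta$.

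\medskip
\noindent The main obstacle is bookkeeping around quiescence: one must be careful that $\DTG$ emits the label $\theta$ (not $\quiescence$) on test-case transitions, which is why the statement is phrased in terms of $\sigma\lift$, and one must verify that whenever $\quiescence$ (resp.\ an output $o$) occurs as the next action in a trace of $\Spec$, it genuinely lies in $\clean(\sigma_0)$ so that the corresponding non-failing summand is available in the $\DTG$ clause — this is exactly Lemma~\ref{lemma:equiCleanOut} together with the determinism of $\Spec$ (Def.~\ref{def:testing:approxSpec} only ever adds transitions of the form $\sigma_0 \xrightarrow{a}_\Spec \sigma_0 \cdot a$, so ``the state after $\sigma_0$'' is well-defined and equals $\sigma_0$). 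The remaining details — that the arbitrary choices for the other summands are always possible (they are, since $\DTG(h)$ is nonempty for every $h$, as option~\ref{def:algoA:1} always yields $\pass$), and that process-algebraic summation behaves as expected under the LTS semantics given for processes — are routine.
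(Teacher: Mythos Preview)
Your proof is correct and follows essentially the same approach as the paper: induction on $|\sigma|$, with a case split on whether the last action is an input, an output, or $\quiescence$, and in the latter two cases an appeal to Lemma~\ref{lemma:equiCleanOut} to ensure the non-failing summand is present in the relevant $\DTG$ clause. The paper's proof is organised identically (it uses choice~\ref{def:algoA:2} for both the input and output cases and choice~\ref{def:algoA:3} for $\quiescence$, and invokes Lemmas~\ref{lemma:R-is-quiescence-closed} and~\ref{lemma:out-extends-trace} as an intermediate step to get $\sigma[k]\in\out(\Spec\after\sigma[..k-1])$ before applying Lemma~\ref{lemma:equiCleanOut}); your direct use of determinism of $\Spec$ to read off $a\in\out^{(\Spec)}(\{\sigma_0\})$ from $\sigma_0\xrightarrow{a}_\Spec\sigma$ is a minor streamlining of the same step.
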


\remarkPRD{I checked the proof and is serviceable.  So it stays like
  this by the time being.  However it can be improve in clarity and
  elegance.}

\begin{proof}
Let $k = |\sigma|$. Proof by induction on $k$.

If $k=0$, then $t' \Rightarrow t'$. If $k>0$, then we know that for any $t'' \in \DTG(\sigma[..k-1])$, there is a $t \in \DTG(\epsilon)$ such that $t \xRightarrow{\sigma[..k-1]\lift} t''$.

If $\sigma[k] \in \Inputs$, then let $T$ be the set of test cases, that $\DTG(\sigma[..k-1])$ can produce for choice \ref{def:algoA:2}, i.e. $T = \{i. t_i + \nsum\{o_j. t_{o_j} \mid o_j \in \Outputs \land o_j \in \clean(\sigma[..k-1]) \} + \nsum \{o_\ell. \fail \mid o_\ell \in \Outputs \land o_\ell \not\in \clean(\sigma[..k-1])\} \mid i \in \Inputs \land t_i \in \DTG(\sigma[..k-1] \cdot i) \land \forall o_j: t_{o_j} \in \DTG(\sigma[..k-1]\cdot o_j) \}$. Let $T' \subset T$ be the subset of test cases $T$ where $i$ is instantiated by $\sigma[k]$ and $t_i$ by $t'$. There is a $t'' \in T'$ with $t'' \xRightarrow{\sigma[k]\lift} t'$. From the inductive hypothesis we get $t \in \DTG(\epsilon)$ with $t \xRightarrow{\sigma[..k-1]\lift} t''$, so we have $t \xRightarrow{\sigma \lift} t'$.

If $\sigma[k] \in \Outputs$, we get from $\sigma \in \traces_*(\Spec)$ and Lemmas~\ref{lemma:R-is-quiescence-closed} and~\ref{lemma:out-extends-trace} that $\sigma[k] \in \out(\Spec \after \sigma[..k-1])$, which is equivalent to $\sigma[k] \in \out(\{\sigma[..k-1]\})$, since $\Spec$ is deterministic and with Lemma \ref{lemma:testing:RefTraceIsState}. From Lemma \ref{lemma:equiCleanOut} we get that $\sigma[k] \in \clean(\sigma[..k-1])$. Let $T$ be the set of test cases, that $\DTG(\sigma[..k-1])$ can produce for choice \ref{def:algoA:2} (as above). Let $T \supset T' = \{i. t_i + \nsum\{o_j. t_{o_j} \mid o_j \in \Outputs \land o_j \in \clean(\sigma[..k-1]) \} + \nsum \{o_\ell. \fail \mid o_\ell \in \Outputs \land o_\ell \not\in \clean(\sigma[..k-1])\} \mid i \in \Inputs \land t_i \in \DTG(\sigma[..k-1] \cdot i) \land \forall o_j: 
t_{o_j} \in \DTG(\sigma[..k-1]\cdot o_j) \}$. There is a $t'' \in T'$ with $t'' \xRightarrow{\sigma[k]\lift} t'$. From the inductive hypothesis we get $t \in \DTG(\epsilon)$ with $t \xRightarrow{\sigma[..k-1]\lift} t''$, so we have $t \xRightarrow{\sigma \lift} t'$.

If $\sigma[k] = \quiescence$, we get from $\sigma \in \traces_*(\Spec)$ and Lemmas~\ref{lemma:R-is-quiescence-closed} and~\ref{lemma:out-extends-trace}  that $\quiescence \in \out(\Spec \after \sigma[..k-1])$, which is equivalent to $\quiescence \in \out(\{\sigma[..k-1]\})$, since $\Spec$ is deterministic and with Lemma \ref{lemma:testing:RefTraceIsState}. From Lemma \ref{lemma:equiCleanOut} we get that $\sigma[k] \in \clean(\sigma[..k-1])$. Let $T$ be the set of test cases, that $\DTG(\sigma[..k-1])$ can produce for choice \ref{def:algoA:3}, i.e. $T = \{ \nsum \{ o. \fail \mid o \in \Outputs \land o \not\in \clean(\sigma[..k-1]) \} + \nsum \{ \theta. \fail \mid \quiescence \not\in \clean(\sigma[..k-1]) \} + \nsum \{ o_j. t_{o_j} \mid o_j \in \Outputs \land o_j \in \clean(\sigma[..k-1]) \} + \nsum \{ \theta. t_\theta \mid \quiescence \in \clean(\sigma[..k-1])) \} \mid \forall o_j: t_{o_j} \in \DTG(\sigma[..k-1] \cdot o_j) \land t_\theta \in \DTG(\sigma[..k-1] \cdot \quiescence) \}$. Let $T' \subset T$ be the subset of test cases $T$ where $t_\theta$  is instantiated by  $t'$. There is a $t'' \in T'$ with $t'' \xRightarrow{\sigma[k]\lift} t'$. From the inductive hypothesis we get $t \in \DTG(\epsilon)$ with $t \xRightarrow{\sigma[..k-1]\lift} t''$, so we have $t \xRightarrow{\sigma \lift} t'$.
\qed
\end{proof}

The following Lemma shows, that if a recursive call $\DTG(\sigma)$ is reachable from $\DTG$ for the empty history, then $\sigma$ is a trace of $\Spec$.

\begin{lemma} \label{lemma:testTraceIsRTrace}
Let $\sigma' \in (\Inputs \cup \Outputs \cup \{\quiescence\})^*$ be any trace, $\sigma \in \traces_*(\Spec)$ a trace in $\Spec$ and $t \in \DTG(\sigma)$ a test case generated for history $\sigma$. If $t \xRightarrow{\sigma' \lift} \nsum_{i \in I} a_i. t_i$ and if there is $i \in I$ for which $t_i \not\in \{\fail, \pass\}$, then $\sigma \cdot \sigma' \in \traces_*(\Spec)$.
\end{lemma}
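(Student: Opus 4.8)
The plan is to prove the statement by induction on $k = |\sigma'|$, peeling off the \emph{first} action of $\sigma'$ at each step and exploiting the tight correspondence between the branching of $\DTG(h)$ and the outgoing transitions of the state $h$ of $\Spec$. The base case $k=0$ is immediate: then $\sigma' = \epsilon$ and $\sigma\cdot\sigma' = \sigma \in \traces_*(\Spec)$ by hypothesis (the assumption on $I$ is not needed here).

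The core of the argument will be a one-step claim: \emph{if $\hat\sigma\in\traces_*(\Spec)$, $\hat t\in\DTG(\hat\sigma)$, $\hat t\xrightarrow{a\lift}\hat t'$ for some $a\in\Inputs\cup\Outputs\cup\{\quiescence\}$, and $\hat t'\notin\{\pass,\fail\}$, then $\hat\sigma\cdot a\in\traces_*(\Spec)$ and $\hat t'\in\DTG(\hat\sigma\cdot a)$.} To prove it I would note that $\DTG$ never returns $\fail$, so $\hat t$ has one of the three shapes in the definition of $\DTG(\hat\sigma)$, and then case-split on $a$. If $a\in\Inputs$, only shape~\ref{def:algoA:2} offers an input transition (shape~\ref{def:algoA:1}, i.e.\ $\pass$, and shape~\ref{def:algoA:3} have none), it is the transition on the chosen input, hence $\hat t' = t_a \in \DTG(\hat\sigma\cdot a)$; and $\hat\sigma\cdot a\in\traces_*(\Spec)$ because $\Spec$ is input-enabled (Lemma~\ref{lemma:R-is-IOTS}). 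If $a\in\Outputs$ or $a = \quiescence$ (so $a\lift\in\Outputs\cup\{\theta\}$), then in shapes~\ref{def:algoA:2}/\ref{def:algoA:3} the only transition on that label goes to $\fail$ when $a\notin\clean(\hat\sigma)$ and continues as $\hat t'\in\DTG(\hat\sigma\cdot a)$ when $a\in\clean(\hat\sigma)$, while shape~\ref{def:algoA:1} forces $\hat t'=\pass$; since $\hat t'\notin\{\pass,\fail\}$ we must be in the case $a\in\clean(\hat\sigma)$ and $\hat t'\in\DTG(\hat\sigma\cdot a)$. Finally $\clean(\hat\sigma) = \out^{(\Spec)}(\{\hat\sigma\})$ by Lemma~\ref{lemma:equiCleanOut}, which equals $\out(\Spec_\quiescence\after\hat\sigma)$ since $\Spec$ is deterministic with $\last(p) = \trace(p)$ (Lemma~\ref{lemma:testing:RefTraceIsState}, Corollary~\ref{cor:RefTrace}) and $\Spec_\quiescence = \Spec$ (Lemma~\ref{lemma:R-is-quiescence-closed}); hence $a\in\out(\Spec_\quiescence\after\hat\sigma)$, and Lemma~\ref{lemma:out-extends-trace} yields $\hat\sigma\cdot a\in\traces_*(\Spec)$.

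With the one-step claim established, the inductive step is routine. Write $\sigma' = a\cdot\rho$ with $a = \sigma'[1]$, so $t\xrightarrow{a\lift}t^{(1)}\xRightarrow{\rho\lift}\nsum_{i\in I}a_i.t_i$ for some intermediate process $t^{(1)}$. Because $\pass$ and $\fail$ have only self-loops, every process reachable from $\pass$ (resp.\ $\fail$) is again $\pass$ (resp.\ $\fail$); since $\nsum_{i\in I}a_i.t_i$ has a summand whose continuation is not in $\{\pass,\fail\}$, this process differs from $\pass$ and $\fail$, and therefore so does $t^{(1)}$. The one-step claim (with $\hat\sigma = \sigma$, $\hat t = t$) then gives $\sigma\cdot a\in\traces_*(\Spec)$ and $t^{(1)}\in\DTG(\sigma\cdot a)$, and the induction hypothesis applied to $\rho$ (length $k-1$), to $\sigma\cdot a$ in place of $\sigma$, and to $t^{(1)}$ with $t^{(1)}\xRightarrow{\rho\lift}\nsum_{i\in I}a_i.t_i$ yields $(\sigma\cdot a)\cdot\rho = \sigma\cdot\sigma'\in\traces_*(\Spec)$, as required.

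The hard part will be the case analysis inside the one-step claim, in particular the asymmetry between inputs and outputs: $\DTG$ always offers every input, irrespective of $\clean$, so for input actions the trace extension has to come from input-enabledness of $\Spec$, whereas for outputs and quiescence it comes from the $\clean$--$\out$ correspondence (Lemma~\ref{lemma:equiCleanOut}) together with the trace-extension Lemma~\ref{lemma:out-extends-trace}. The role of the hypothesis ``$t_i\notin\{\pass,\fail\}$ for some $i$'' is exactly to exclude the $\fail$-branches of $\DTG$ (and the $\pass$ shape) and thereby pin down which branch of the algorithm was actually taken along the run $t\xRightarrow{\sigma'\lift}\nsum_{i\in I}a_i.t_i$.
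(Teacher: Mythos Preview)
Your proposal is correct and follows essentially the same approach as the paper: induction on $|\sigma'|$, peeling off the first action, using the absorbing nature of $\pass$/$\fail$ to propagate the ``not $\pass$/$\fail$'' condition backwards, and then a case split (input versus output/quiescence) that invokes input-enabledness of $\Spec$ in one case and the $\clean$--$\out$ correspondence (Lemma~\ref{lemma:equiCleanOut}) together with Lemma~\ref{lemma:out-extends-trace} in the other. Your extraction of an explicit one-step claim is a slightly cleaner packaging of the same argument the paper carries out inline.
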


\begin{proof}
By induction on $|\sigma'|$. If $\sigma' = \epsilon$, then $\sigma \cdot \sigma' \in \traces_*(\Spec)$ follows trivially. For $|\sigma| > 0$, we first observe that $t \xrightarrow{\sigma'[1]\lift} t' \xRightarrow{\sigma'[2..]\lift} \nsum_{i \in \SUT} a_i. t_i$. Notice that it must be that $t' = \nsum_{j \in J} b_j. t'_j$ and $\exists j \in J: t_j \not\in \{\fail, \pass\}$, because once state $\fail$ or $\pass$ is reached, the LTS is trapped in this state and would violate $\exists i \in I: t_i \not\in \{\fail, \pass\}$. For the same reason, $t = \nsum_{\ell \in L} c_\ell. t''_\ell$ and $\exists \ell \in L: t_\ell \not\in \{\fail, \pass\}$.

The transition $t \xrightarrow{\sigma'[1]\lift} t'$ proves that for some $\ell \in L$, $c_\ell.t''_\ell = \sigma'[1]. t'$. Considering all possible choices how $\sigma'[1]. t'$ was added to the sum in $t$, we get that it must be that $t' \in \DTG(\sigma \cdot \sigma'[1])$ and that either $\sigma'[1] \in \clean(\sigma)$ or $\sigma'[1]$ is an input. If $\sigma'[1]$ is an input, we know that $\sigma \cdot \sigma'[1] \in \traces_*(\Spec)$ because \Spec\ must be input enabled. If $\sigma'[1] \in \clean(\sigma)$, we get from Lemma \ref{lemma:equiCleanOut} that $\sigma'[1] \in \out(\{\sigma\})$, so $\sigma \cdot \sigma'[1] \in \traces_*(\Spec)$ with Lemma~\ref{lemma:out-extends-trace}.

From the inductive hypothesis we get for $\sigma'[2..]$ and $t'$ that $\sigma \cdot \sigma'[1] \cdot \sigma'[2..] \in \traces_*(\Spec)$ and hence $\sigma \cdot \sigma' \in \traces_*(\Spec)$.
\qed
\end{proof}

The next lemma states that every robustly clean implementation
passes the test suite generated by the algorithm $\DTG$, i.e., it shows completeness of the test generation algorithm.

\begin{restatable}{lemma}{IocoPassesDTG}
\label{lemma:IocoPassesDTG}
Let $\Contract$ be a contract with standard $\Std$. Let $\SUT$ be an implementation with $\Std_\quiescence \subseteq \SUT_\quiescence$
and let $\Spec$ be the largest implementation within $\Contract$. Then, $\SUT \ioco \Spec$ if and only if  $\SUT \passes \DTG(\epsilon)$.
\end{restatable}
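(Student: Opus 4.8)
The plan is to prove the two implications separately, each by contraposition, mirroring the classical soundness/exhaustiveness argument for Tretmans' test generator. The starting observation is that $\DTG$ is essentially Tretmans' $\TG$ unrolled on $\Spec$: by Lemma~\ref{lemma:equiCleanOut} we have $\clean(h)=\out^{(\Spec)}(\{h\})$, and since $\Spec$ is deterministic with $\last(p)=\trace(p)$ (Lemma~\ref{lemma:testing:RefTraceIsState}, Corollary~\ref{cor:RefTrace}) the set $\Spec\after h$ is the singleton $\{h\}$ for every $h\in\traces_*(\Spec)$; moreover $\Spec$ is input-enabled (Lemma~\ref{lemma:R-is-IOTS}) and $\Spec_\quiescence=\Spec$ (Lemma~\ref{lemma:R-is-quiescence-closed}), so $\traces_*(\Spec_\quiescence)=\traces_*(\Spec)$. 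Throughout I write $\rho\lift$ for a trace with every $\quiescence$ replaced by $\theta$ (and $o\lift$ likewise for a single symbol), and I let $\tau$ denote the reverse translation, replacing every $\theta$ by $\quiescence$; I use repeatedly that in a test run $t\|\SUT_\quiescence$ a $\theta$-step synchronises, via the rule $q\|p\xrightarrow{\theta}q'\|p$, with a $\delta$-self-loop of $\SUT_\quiescence$, i.e.\ with a quiescent state.

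For the direction ``$\SUT\passes\DTG(\epsilon)\Rightarrow\SUT\ioco\Spec$'' I argue contrapositively. Assume $\SUT\not\ioco\Spec$: there are $\sigma\in\traces_*(\Spec_\quiescence)=\traces_*(\Spec)$ and $o\in\out(\SUT_\quiescence\after\sigma)$ with $o\notin\out(\Spec_\quiescence\after\sigma)=\out^{(\Spec)}(\{\sigma\})=\clean(\sigma)$ (using Lemma~\ref{lemma:equiCleanOut}). Choose a test case $t'$ obtained from choice~\ref{def:algoA:3} of $\DTG(\sigma)$; since $o\in\qOutputs$ and $o\notin\clean(\sigma)$, $t'$ has a transition $t'\xrightarrow{o\lift}\fail$. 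By Lemma~\ref{lemma:recursiveDTGReachable} there is $t\in\DTG(\epsilon)$ with $t\xRightarrow{\sigma\lift}t'$. Now run $t\|\SUT_\quiescence$: because $o\in\out(\SUT_\quiescence\after\sigma)$ there is a path of $\SUT_\quiescence$ with trace $\sigma$ reaching some state $s$ with $s\xrightarrow{o}_\quiescence s'$. A step-by-step check shows the run synchronises along $\sigma\lift$: at an input position the deterministic test offers exactly that input because the input lies on the path $t\xRightarrow{\sigma\lift}t'$; at an output position all outputs are offered anyway; at a $\delta$-position the test offers $\theta$ (again it lies on the path) and $\SUT_\quiescence$ performs its $\delta$-self-loop. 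Hence $q_0\|s_0\xRightarrow{\sigma\lift}t'\|s\xrightarrow{o\lift}\fail\|s'$, so $\fail\|s'$ is reachable, $\SUT\not\passes t$, and therefore $\SUT\not\passes\DTG(\epsilon)$.

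For ``$\SUT\ioco\Spec\Rightarrow\SUT\passes\DTG(\epsilon)$'', again contrapositively, suppose $\SUT\not\passes\DTG(\epsilon)$; then some $t\in\DTG(\epsilon)$ admits a run $q_0\|s_0\xRightarrow{\rho}\fail\|s$. Taking the first occurrence of $\fail$ gives $q_0\|s_0\xRightarrow{\rho'}t'\|s'\xrightarrow{a}\fail\|s$ with $t'\notin\{\fail,\pass\}$ ($t'\neq\fail$ by minimality, $t'\neq\pass$ since $\pass$ is a trap that cannot reach $\fail$). Let $\tau'$ be the $\theta\mapsto\quiescence$ translation of $\rho'$. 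A short induction along the $\fail$-free prefix of the run — the reasoning in the proof of Lemma~\ref{lemma:testTraceIsRTrace}, carrying in addition the invariant that the $\DTG$-history threaded through the recursive calls equals the translated trace read so far (using input-enabledness of $\Spec$, Lemma~\ref{lemma:equiCleanOut}, and Lemma~\ref{lemma:out-extends-trace}) — yields both $\tau'\in\traces_*(\Spec)=\traces_*(\Spec_\quiescence)$ and $t'\in\DTG(\tau')$. Inspecting the definition of $\DTG$, a transition $t'\xrightarrow{a}\fail$ out of a choice-\ref{def:algoA:2}/choice-\ref{def:algoA:3} test case of $\DTG(\tau')$ forces $a=o'\lift$ for some $o'\in\qOutputs$ with $o'\notin\clean(\tau')$. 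On the implementation side the run gives $s_0\xRightarrow{\tau'}_\quiescence s'$ and $s'\xrightarrow{o'}_\quiescence s$, so $o'\in\out(\SUT_\quiescence\after\tau')$, whereas $o'\notin\clean(\tau')=\out^{(\Spec)}(\{\tau'\})=\out(\Spec_\quiescence\after\tau')$ by Lemmas~\ref{lemma:equiCleanOut} and~\ref{lemma:R-is-quiescence-closed}. Since $\tau'\in\traces_*(\Spec_\quiescence)$, this witnesses $\SUT\not\ioco\Spec$.

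The main obstacle is precisely the bookkeeping in the second direction: one must guarantee that the history carried through the recursive $\DTG$-calls along a $\fail$-free prefix of a test run is \emph{exactly} the translated trace read so far — a given test case could in principle be generated by $\DTG$ for several distinct histories — so that the constraint $o'\notin\clean(\tau')$ genuinely refers to $\tau'$; this is what Lemma~\ref{lemma:testTraceIsRTrace} is designed to deliver, and the only extra bit needed is that the same induction also returns $t'\in\DTG(\tau')$. A pervasive minor point is the $\theta$/$\delta$ translation between the test vocabulary and the implementation vocabulary, which has to be applied at every quiescence synchronisation through the parallel-composition rule for $\theta$.
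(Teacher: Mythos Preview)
Your proof is correct and follows essentially the same approach as the paper: both directions are argued by contraposition, using Lemma~\ref{lemma:recursiveDTGReachable} together with a choice-\ref{def:algoA:3} test for the exhaustiveness direction, and Lemma~\ref{lemma:testTraceIsRTrace} together with a case analysis on how $\DTG$ can introduce a $\fail$-summand for the soundness direction. Your explicit remark that the induction underlying Lemma~\ref{lemma:testTraceIsRTrace} also yields $t'\in\DTG(\tau')$ is a useful sharpening---the paper uses this fact implicitly when it writes ``we consider all possible cases how $\DTG$ can add $\sigma[|\sigma|].\fail$ to $t'$'' with history $\sigma[..|\sigma|-1]$.
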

\begin{proof}
We prove both directions by contraposition, i.e., $\neg(\SUT\ioco\Spec) \iff \exists t \in \DTG(\epsilon): \neg(\SUT \passes t)$.

Unrolling $\ioco$ according to (\ref{eq:doping:test}) and unrolling the reachability of \passes\!\!, gives us the equivalent claim $\exists \sigma \in \traces_*(\Spec_\quiescence): \exists o \in \qOutputs: o\ \in \out(\SUT_\quiescence\after\sigma) \land o \not\in \out(\Spec_\quiescence\after\sigma) \iff \exists t\ \in \DTG(\epsilon): \exists \sigma \in (\Inputs \cup \Outputs \cup \{\quiescence\})^*: \exists s_1, s_2, t': t \parallel s_0 \xRightarrow{\sigma[..|\sigma|-1] \lift} t' \parallel s_2 \xrightarrow{\sigma[|\sigma|] \lift} \fail || s_1$, where $s_0$ is the initial state of $\SUT$.

We continue to prove the two directions separately.

\begin{itemize}
\item $\Rightarrow$:

From $o \in \out(\SUT_\quiescence\after\sigma)$ and Lemma~\ref{lemma:out-extends-trace}  we get that $(\sigma \cdot o) \in \traces_*(\SUT_\quiescence)$ and that $s_0 \xRightarrow{\sigma} s_2 \xrightarrow{o} s_1$ for some $s_1$ and $s_2$, where $s_0$ is the initial state of $\SUT$. Since $\Spec$ is deterministic and from Lemma \ref{lemma:testing:RefTraceIsState}, we know $\Spec \after \sigma = \{\sigma\}$. From Lemmas~\ref{lemma:R-is-quiescence-closed},~\ref{lemma:equiCleanOut} and $o \not\in \out(\Spec_\quiescence\after\sigma)$, we get that $o \not\in \clean(\sigma)$.

Let $t' = \nsum \mathcal{M} \in \DTG(\sigma)$, where $\DTG$ chose choice \ref{def:algoA:3}. Since $o \not\in \clean(\sigma)$, we know that $o. \fail \in \mathcal{M}$ and that $t' \xrightarrow{o\lift} \fail$.

Since $\sigma \in \traces_*(\Spec_\quiescence)$ and hence $\sigma \in \traces_*(\Spec)$ (Lemma~\ref{lemma:R-is-quiescence-closed}), we get from Lemma \ref{lemma:recursiveDTGReachable}, that there is some $t \in \DTG(\epsilon)$ with $t \xRightarrow{\sigma\lift} t'$.

Hence, by the definition of $\parallel$, we conclude that $t \parallel s_0 \xRightarrow{\sigma\lift} t' \parallel s_2$ from $t \xRightarrow{\sigma\lift} t'$ and $s_0 \xRightarrow{\sigma} s_2$. Also, $t' \xrightarrow{o\lift} \fail$ and $s_2 \xrightarrow{o} s_1$ imply $t' \parallel s_2 \xrightarrow{o\lift} \fail \parallel s_1$.
With $t$ and $\sigma \cdot o$ we prove this direction of the proof.

\item $\Leftarrow$:

W.l.o.g. let $\sigma$ be the shortest trace and $s_1$ and $s_2$ such that $t \parallel s_0 \xRightarrow{\sigma} \fail || s_1$.

Since $t' \parallel s_2 \xrightarrow{\sigma[|\sigma|]\lift} \fail \parallel s_1$ we know from Def. $\parallel$ that $t' \xrightarrow{\sigma[|\sigma|]\lift} \fail$. Since $\sigma$ is the shortest trace reaching $\fail$, $\sigma[|\sigma|]. \fail$ was introduced by $\DTG$ and hence $t' = \nsum_{i\in I} a_i. t_i$ where $a_i.t_i = \sigma[|\sigma|]. \fail$ for some $i$ and there is some $j \in I$ with $t_j \not\in \{\fail,\pass\}$. With Lemma \ref{lemma:testTraceIsRTrace} we can follow that $\sigma[..|\sigma|-1]$ is in $\traces_*(\Spec)$.

We consider all possible cases how $\DTG$ can add $\sigma[|\sigma|]. \fail$ to $t'$. We observe, that in all cases, it is necessary that $\sigma[|\sigma|] \in \Outputs \cup \{\quiescence\}$ and $\sigma[|\sigma|] \not\in \clean(\sigma[..|\sigma|-1])$. With Lemma \ref{lemma:equiCleanOut} we get that $\sigma[|\sigma|] \not\in \out(\{\sigma[..|\sigma|-1]\})$ and since $\Spec$ is deterministic and by Lemma \ref{lemma:testing:RefTraceIsState}, this is equivalent to $\sigma[|\sigma|] \not\in \out(\Spec\after\sigma[..|\sigma|-1])$. 

From $t \parallel s_0 \xRightarrow{\sigma[..|\sigma|-1] \lift} \nsum \mathcal{M} \parallel s_2 \xrightarrow{\sigma[|\sigma|] \lift} \fail || s_1$ we get from the definition of $\parallel$, that $s_0 \xRightarrow{\sigma[..|\sigma|-1]} s_2 \xrightarrow{\sigma[|\sigma|]} s_1$. Hence $\sigma[|\sigma|] \in \out(\SUT_\quiescence \after \sigma[..|\sigma|-1])$.

Now, $\sigma[..|\sigma|-1]$ and $\sigma[|\sigma|]$ prove this implication.
\end{itemize}
\qed
\end{proof}

Notice that $\DTG$ is exactly \DT\ when instantiating $\Omega_\text{case}$ and $\Omega_\Inputs$ by non-determinism. The first case of each algorithm is identical. For the second case, \DT\ does not accept outputs from the IUT. However, there is an additional premise for entering case 2 that allows this case only when no output from the IUT is available. We can check this in the premise, because our algorithm is an online testing algorithm, whereas \DTG\ is a test case generation algorithm -- \DTG\ produces a LTS that must be capable of handling an output in case 2. 
Case 3 is identical in both algorithms, however, we use case 3 to process an output from the IUT if case 2 of the algorithm was chosen but an output is available. This is realised by weakening the premise so to allow this case also if an output from the IUT is available.
The observation that $\DTG$ is exactly \DT\ when instantiating $\Omega_\text{case}$ and $\Omega_\Inputs$ by non-determinism (Assumption~\ref{assume:omega_nondet}) proves the following theorem from the main paper.

\iocoDT*
\begin{proof}
Algorithm~\ref{algo:dynamictest} with $\Omega_\text{case}$ and $\Omega_\Inputs$ being the non-deterministic choice is equivalent to \DTG.
Hence, the theorem follows from Lemma~\ref{lemma:IocoPassesDTG}.
\qed
\end{proof}

\denseparagraph{Bounded-depth doping test generation}
In order to provide practicable testing, we introduced a bounded-length testing algorithm in the main paper. We can modify \DTG\ exactly as we did for \DT: Instead of $\clean$ we use the bounded version $\clean_b$ that can be computed. Since $\clean_b$ only considers finite traces, it conservatively
includes extra outputs thus making tests more permissive.  This is due
to the existential quantifier in the last line of
(\ref{eq:algo:clean-approx}): it may be the case that the $b$-prefix
of some infinite trace satisfies this expression, but no infinite
extension of such prefix in $\Std_\quiescence$ does.
Therefore, we have the following variation of
Lemma~\ref{lemma:equiCleanOut}.

\begin{lemma} \label{lemma:supseteqCleanOut}
Let \Contract\ be a contract with standard \Std\ and \Spec\ constructed from \Contract.
  For all $b>0$, $h \in (\Inputs \cup \Outputs \cup \{\quiescence\})^*$ with $|h| < b$,
  $\clean_b^{(\Std_\quiescence)}(h) \supseteq
  \clean^{(\Std_\quiescence)}(h) = \out^{(\Spec)}(\{h\})$.
\end{lemma}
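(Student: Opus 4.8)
The plan is as follows. The equality $\clean^{(\Std_\quiescence)}(h) = \out^{(\Spec)}(\{h\})$ is exactly Lemma~\ref{lemma:equiCleanOut}, so I only need to prove the inclusion $\clean_b^{(\Std_\quiescence)}(h) \supseteq \clean^{(\Std_\quiescence)}(h)$, and I would do this directly from the definitions (\ref{eq:algo:clean}) and (\ref{eq:algo:clean-approx}). Fixing $b>0$ and $h$ with $|h|<b$ (so that $|h|+1\le b$), I would take an arbitrary $o \in \clean^{(\Std_\quiescence)}(h)$ and an arbitrary $\sigma_i \in \mapInp{\traces_b(\Std_\quiescence)}$ satisfying the input-distance premise of $\clean_b$, i.e. $d_\Inputs(\mapInp{\sigma_i[..j]}, \mapInp{(h\cdot o)[..j]}) \le \inpbound$ for all $j\le|h|+1$; the goal is then to exhibit $\sigma \in \traces_b(\Std_\quiescence)$ with $\mapInp{\sigma} = \mapInp{\sigma_i}$ and $d_\qOutputs(o, \mapOut{\sigma[|h|+1]}) \le \outpbound$.

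The key step is to move from the finite world of $\clean_b$ to the infinite world of $\clean$. I would use the identity $\traces_b(\Std_\quiescence) = \{\rho[..b] \mid \rho\in\traces_\omega(\Std_\quiescence)\}$: one inclusion is trivial, and the other follows because $\Std_\quiescence$ is quiescence-closed, so every state has an outgoing transition (an output, or the $\delta$-loop added in the closure) and hence every finite trace extends to an infinite one. Thus I would pick $\rho_0\in\traces_b(\Std_\quiescence)$ with $\mapInp{\rho_0}=\sigma_i$, extend it to $\rho\in\traces_\omega(\Std_\quiescence)$ with $\rho[..b]=\rho_0$, and set $\sigma_i^\omega := \mapInp{\rho}$. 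Since $\mapInp{}$ is pointwise and idempotent on sequences over $\Inputs\cup\{\NoInp\}$, one gets $\sigma_i^\omega[..j] = \sigma_i[..j]$ for all $j\le b$, and because $|h|+1\le b$ the input-distance premise of $\clean$ for $\sigma_i^\omega$ is exactly the one assumed for $\sigma_i$.

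Applying $o\in\clean^{(\Std_\quiescence)}(h)$ to $\sigma_i^\omega$ then yields $\sigma^\omega\in\traces_\omega(\Std_\quiescence)$ with $\mapInp{\sigma^\omega}=\sigma_i^\omega$ and $d_\qOutputs(o,\mapOut{\sigma^\omega[|h|+1]})\le\outpbound$, and I would set $\sigma := \sigma^\omega[..b]$, which is in $\traces_b(\Std_\quiescence)$ by the identity above. Because projection commutes with truncation, $\mapInp{\sigma} = \mapInp{\sigma^\omega}[..b] = \sigma_i^\omega[..b] = \sigma_i = \mapInp{\sigma_i}$, and because $|h|+1\le b$ we have $\sigma[|h|+1]=\sigma^\omega[|h|+1]$, so the output-distance condition survives. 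Hence $\sigma$ witnesses $o\in\clean_b^{(\Std_\quiescence)}(h)$, which finishes the inclusion.

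I do not expect a real obstacle. The only points to handle with a little care are the two-sided identification of $\traces_b(\Std_\quiescence)$ with the length-$b$ prefixes of $\traces_\omega(\Std_\quiescence)$ --- where quiescence-closedness of $\Std_\quiescence$ is what guarantees extendability of finite traces --- and the index bookkeeping that $|h|+1\le b$, so that position $|h|+1$ is not truncated away. As the surrounding text notes, the converse inclusion genuinely fails, since an infinite extension certifying membership in $\clean_b$ need not lie in $\Std_\quiescence$; so only ``$\supseteq$'' is claimed.
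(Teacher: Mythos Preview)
Your proposal is correct and follows essentially the same route as the paper's own proof: cite Lemma~\ref{lemma:equiCleanOut} for the equality, then for the inclusion lift the finite input projection $\sigma_i$ to an infinite one in $\mapInp{\traces_\omega(\Std_\quiescence)}$ (using that every state of $\Std_\quiescence$ has an outgoing output or $\delta$-transition), apply the defining property of $\clean$, and truncate the resulting witness back to length $b$.

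The only cosmetic difference is that the paper extends $\sigma_i$ directly by appending $\NoInp^\omega$ (justified by the same quiescence-closure argument), whereas you first extend a witnessing trace $\rho_0$ and then project; the two constructions yield the same infinite input projection. Your final truncation $\sigma := \sigma^\omega[..b]$ is in fact the careful one, since what $\clean_b$ demands is a witness in $\traces_b(\Std_\quiescence)$ with input projection equal to the length-$b$ sequence $\sigma_i$; the index bookkeeping $|h|+1\le b$ then guarantees position $|h|+1$ is preserved.
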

\begin{proof}
$\clean^{(\Std_\quiescence)}(h) = \out^{(\Spec)}(\{h\})$ is already shown in Lemma~\ref{lemma:equiCleanOut}.
We only have to prove $\clean^{(\Std_\quiescence)}(h) \subseteq \clean_b^{(\Std_\quiescence)}(h)$, i.e., $\forall o\in\qOutputs$ it holds that $o\in\clean(h) \iff o\in\clean_b(h)$.
Let $\mathsf{P}$ be the predicate $\mathsf{P}(\sigma) \coloneqq \forall j \leq |h|+1: d_\Inputs(\mapInp{\sigma[..j]}, \mapInp{(h\cdot o)}) \leq \inpbound$ for all $\sigma$ with $|\sigma| > |h|$. 

Assume $o \in \clean(h)$. We have to show that $o\in\clean_b(h)$. For this, we have to assume an input trace $\sigma_i\in\mapInp{\traces_b(\qStd)}$ with $\mathsf{P}(\sigma_i)$. Let $\hat{\sigma_i} = \sigma_i \cdot \NoInp^\omega$. $\hat{\sigma_i} \in \mapInp{\traces_\omega(\qStd)}$, because every finite trace of \qStd\ can be extended by either quiescence or an output. Since $|h| < b$ and hence $|h\cdot o| \leq b$, $\mathsf{P}(\hat{\sigma_i})$ holds. From $o\in\clean(h)$ we now get that that there is some trace $\hat\sigma\in\traces_\omega(\qStd)$ with $\mapInp{\hat\sigma} = \hat{\sigma_i}$ and $d_\qOutputs(o, \mapOut{\hat\sigma[|h|+1]}) \leq \outpbound$. Let $\sigma = \hat\sigma[..|h|+1]$, then $\mapInp{\sigma} = \sigma_i$ and $d_\qOutputs(o, \mapOut{\sigma[|h|+1]}) \leq \outpbound$. This proves $o\in\clean_b(h)$.
\qed
\end{proof}

As a consequence of Lemma~\ref{lemma:supseteqCleanOut}, we have that
any robustly clean implementation passes the test suit generated by
$\DTG_b$, or, expressed inversely, if an implementation fails a test
generated by $\DTG_b$, then it is doped.  This is stated in the
following lemma.


\begin{restatable}{lemma}{IocoPassesDTGb}
\label{lemma:IocoPassesDTGb}
Let $\Contract$ be a contract with standard $\Std$. Let $\SUT$ be an implementation with $\Std_\quiescence \subseteq \SUT_\quiescence$. Then, if \SUT\ is robustly clean w.r.t. \Contract, $\SUT\passes\DTG_b(\epsilon)$ for every positive integer $b$.
\end{restatable}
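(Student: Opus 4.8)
\noindent The plan is to argue by contradiction, reusing the reasoning behind the $\Leftarrow$-direction of Lemma~\ref{lemma:IocoPassesDTG} but compensating for the fact that $\clean_b$ over-approximates $\clean$. Assume $\SUT$ is robustly clean w.r.t.\ $\Contract$ and yet, for some positive integer $b$, $\SUT$ does not pass $\DTG_b(\epsilon)$; i.e.\ there is a test case $t_b \in \DTG_b(\epsilon)$ and a state $s$ such that $\fail \parallel s$ is reachable from $t_b \parallel s_0$, where $s_0$ is the initial state of $\SUT$. Unrolling reachability (and recalling that a $\theta$-step of the test synchronises with a $\quiescence$-step of $\SUT$) we obtain a trace $\sigma \in (\Inputs \cup \Outputs \cup \{\quiescence\})^*$ with $t_b \parallel s_0 \xRightarrow{\sigma\lift} \fail \parallel s$. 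Pick $\sigma$ of minimal length and split the run as $t_b \parallel s_0 \xRightarrow{\sigma[..|\sigma|-1]\lift} t' \parallel s' \xrightarrow{\sigma[|\sigma|]\lift} \fail \parallel s$; set $h := \sigma[..|\sigma|-1]$ and $a := \sigma[|\sigma|]$, so $\sigma = h \cdot a$.

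The first step is to pin down the shape of $t'$. By minimality of $\sigma$, the prefix run $t_b \parallel s_0 \xRightarrow{h\lift} t' \parallel s'$ visits neither $\pass$ nor $\fail$ inside $t_b$ (both are absorbing, and $\pass$ can never reach $\fail$); hence, by a straightforward induction over the recursive structure of $\DTG_b$ — the syntactic $\DTG_b$-analogue of Lemma~\ref{lemma:testTraceIsRTrace}, which also bookkeeps the $\theta \leftrightarrow \quiescence$ correspondence between the test side and the implementation side — we obtain $t' \in \DTG_b(h)$. Since $t' \xrightarrow{\sigma[|\sigma|]\lift} \fail$, the term $t'$ cannot come from case~1 of $\DTG_b$ (which is $\pass$ and has no $\fail$-successor); thus it stems from case~2 or case~3, which $\DTG_b$ selects only when $|h| < b$, and the presence of the $\fail$-edge forces $a \in \qOutputs$ with $a \notin \clean_b(h)$.

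Next I would derive two incompatible facts about $h \cdot a$. On one hand, because $|h| < b$, Lemma~\ref{lemma:supseteqCleanOut} gives $\clean(h) \subseteq \clean_b(h)$, and $\clean(h) = \out^{(\Spec)}(\{h\})$ by Lemma~\ref{lemma:equiCleanOut}; together with $a \notin \clean_b(h)$ this yields $a \notin \out^{(\Spec)}(\{h\})$, which — using determinism of $\Spec$ and $\last(p)=\trace(p)$ (Lemma~\ref{lemma:testing:RefTraceIsState}), $\Spec_\quiescence=\Spec$ (Lemma~\ref{lemma:R-is-quiescence-closed}), and Lemma~\ref{lemma:out-extends-trace} — is equivalent to $h \cdot a \notin \traces_*(\Spec)$. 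On the other hand, applying the parallel-composition rules to the split above gives $s_0 \xRightarrow{h} s' \xrightarrow{a} s$ in $\SUT_\quiescence$, so $h \cdot a \in \traces_*(\SUT_\quiescence)$. Since $\Spec$ is the largest implementation within $\Contract$ (Theorem~\ref{thm:testing:RefIsLargestImplementation}) and $\SUT$ is robustly clean, $\traces_\omega(\SUT_\quiescence) \subseteq \traces_\omega(\Spec_\quiescence) = \traces_\omega(\Spec)$; as $\SUT_\quiescence$ is input-enabled with every state offering an output or $\quiescence$, every finite trace extends to an infinite one, so $\traces_*(\SUT_\quiescence) \subseteq \traces_*(\Spec)$ and hence $h \cdot a \in \traces_*(\Spec)$ — a contradiction. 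Therefore no such $t_b$ and $\sigma$ exist, i.e.\ $\SUT \passes \DTG_b(\epsilon)$ for every positive integer $b$.

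The conceptual point to get right — and the place where this differs from Lemma~\ref{lemma:IocoPassesDTG} — is that one cannot conclude $h \in \traces_*(\Spec)$ from the test run, because $\clean_b$ keeps the test alive along histories that $\Spec$ need not admit, so the $\DTG_b$-version of Lemma~\ref{lemma:testTraceIsRTrace} only yields the syntactic fact $t' \in \DTG_b(h)$. Instead, robust cleanness of $\SUT$ forces $\traces_\omega(\SUT_\quiescence) \subseteq \traces_\omega(\Spec)$ via Theorem~\ref{thm:testing:RefIsLargestImplementation}, hence $\traces_*(\SUT_\quiescence) \subseteq \traces_*(\Spec)$, and it is precisely this inclusion — applied to the actual trace $h\cdot a$ observed on $\SUT$ — that renders the over-approximation in $\clean_b$ harmless. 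The remaining pieces (the syntactic induction, the inequality $|h| < b$, and Lemma~\ref{lemma:supseteqCleanOut}) are routine bookkeeping.
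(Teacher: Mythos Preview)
Your proof is correct and follows essentially the same route as the paper's: argue by contradiction, locate a minimal failing run $t_b \parallel s_0 \xRightarrow{\sigma\lift} \fail \parallel s$, identify the last action $a$ as an output with $a \notin \clean_b(h)$, pass to $a \notin \out^{(\Spec)}(\{h\})$ via Lemma~\ref{lemma:supseteqCleanOut}, and derive a contradiction from $h\cdot a \in \traces_*(\SUT_\quiescence) \subseteq \traces_*(\Spec)$ using robust cleanness together with Theorem~\ref{thm:testing:RefIsLargestImplementation}. Your write-up is in fact slightly tidier than the paper's in two places: you make explicit that $|h| < b$ (which is the hypothesis of Lemma~\ref{lemma:supseteqCleanOut}, and follows because $\DTG_b$ forces case~1 exactly at depth $b$), and you skip the detour through $\SUT \ioco \Spec$ that the paper sets up but does not actually exploit, going straight to the trace inclusion via the largest-implementation theorem.
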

\begin{proof}
%
From Theorem~\ref{thm:testing:cleanImpliesIoco}, we know that if $\SUT$ is robustly clean then $\SUT \ioco \Spec$. So, we proceed by contradiction and assume that $\SUT \ioco \Spec$ and $\neg(\SUT \passes \DTG_b(\epsilon))$.

Since $\neg(\SUT \passes \DTG_b(\epsilon))$ it must hold that there is some $t \in \DTG_b(\epsilon)$ and $\sigma \in (\Inputs \cup \Outputs \cup \{\quiescence\})^*$ such that $\exists s': t \parallel s_0 \xRightarrow{\sigma} \fail \parallel s'$, where $s_0$ is the initial state of $\SUT$.
%
%
Hence, there is some $s''$ and $t'$ such that $t \parallel s_0 \xRightarrow{\sigma[..|\sigma|-1]} t' \parallel s'' \xrightarrow{\sigma[|\sigma|]\lift} \fail \parallel s'$.

W.l.o.g. let $\sigma$ be the shortest trace and $s'$ and $s''$ such that $t \parallel s_0 \xRightarrow{\sigma} \fail || s'$.

Since $t' \parallel s'' \xrightarrow{\sigma[|\sigma|]\lift} \fail \parallel s'$ we know from Def. $\parallel$ that $t' \xrightarrow{\sigma[|\sigma|]\lift} \fail$. Since $\sigma$ is the shortest trace reaching $\fail$, $\sigma[|\sigma|]. \fail$ was introduced by $\DTG_b$ and hence $t' = \nsum_{i\in I} a_i. t_i$ where $a_i.t_i = \sigma[|\sigma|]. \fail$ for some $i$ and there is some $j \in I$ with $t_j \not\in \{\fail,\pass\}$. 
%
We consider all possible cases how $\DTG_b$ can add $\sigma[|\sigma|]. \fail$ to $t'$. We observe, that in all cases, it is necessary that $\sigma[|\sigma|] \in \Outputs \cup \{\quiescence\}$ and $\sigma[|\sigma|] \not\in \clean_b(\sigma[..|\sigma|-1])$. 
With Lemma \ref{lemma:supseteqCleanOut} we get that $\sigma[|\sigma|] \not\in \out(\{\sigma[..|\sigma|-1]\})$ and since $\Spec$ is deterministic and with Lemma \ref{lemma:testing:RefTraceIsState}, this is equivalent to $\sigma[|\sigma|] \not\in \out(\Spec\after\sigma[..|\sigma|-1])$ (*). 

From $\SUT \ioco \Spec$ we get that for any finite trace $\sigma \in\traces_*(\Spec_\quiescence)$ it holds that $\out(\SUT_\quiescence \after \sigma) \subseteq \out(\Spec_\quiescence \after \sigma)$. Since $s_0 \xRightarrow{\sigma} s'$, we know that $\sigma \in \traces_*(\SUT_\quiescence)$. Let $\sigma' \in\traces_\omega(\SUT_\quiescence)$ be an infinite extension of $\sigma$ with $\sigma'[..|\sigma|] = \sigma$. By knowing that $\SUT$ is robustly clean and Theorem~\ref{thm:testing:RefIsLargestImplementation} it follows that $\sigma' \in\traces_\omega(\Spec)$. Hence, $\sigma$ is a finite trace of \Spec\ and from Lemma~\ref{lemma:out-extends-trace} and lemma~\ref{lemma:R-is-quiescence-closed} we get that $\sigma[|\sigma|] \in \out(\Spec\after\sigma[..|\sigma|-1])$, which contradicts (*).
%
%
\qed
\end{proof}

\boundedPassing*
\begin{proof}
$DT_b$ (Algorithm~\ref{algo:dynamictest} with $\clean_b$) with $\Omega_\text{case}$ and $\Omega_\Inputs$ being the non-deterministic choice is equivalent to $\DTG_b$.
Hence, the theorem follows from Lemma~\ref{lemma:IocoPassesDTGb}.
\qed
\end{proof}


\removed{
It turns out that, by construction of $\Spec$, the suspension traces of
$\Spec$ are exactly its finite traces.

\begin{lemma} \label{lemma:tracesAreSTraces}
  If $\Spec$ is a consistent reference specification, then
  $\STraces(\Spec)=\traces_*(\Spec)$.
\end{lemma}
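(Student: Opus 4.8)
The plan is to reduce the claim to the structural fact about $\Spec$ established in Lemma~\ref{lemma:R-is-quiescence-closed}. By Def.~\ref{def:delta-closure}, $\STraces(\Spec)=\traces_*(\Spec_\quiescence)$, so what must be shown is that forming the quiescence closure of $\Spec$ introduces no new finite traces. One inclusion is free: $\to_{\Spec_\quiescence}$ only extends $\to_\Spec$, so every finite path of $\Spec$ is a finite path of $\Spec_\quiescence$ and hence $\traces_*(\Spec)\subseteq\traces_*(\Spec_\quiescence)$. The work is in the reverse inclusion.

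For that, the key point is that Def.~\ref{def:testing:approxSpec} models quiescence \emph{explicitly}: every state $\sigma$ of $\Spec$ already has an outgoing transition $\sigma\xrightarrow{a}_\Spec\sigma\cdot a$ with $a\in\qOutputs=\Outputs\cup\{\quiescence\}$ --- this is precisely Lemma~\ref{lemma:R-is-quiescence-closed}, and in fact the present lemma is essentially its trace-level restatement, so a one-line argument is $\STraces(\Spec)=\traces_*(\Spec_\quiescence)=\traces_*(\Spec)$ by that lemma. Spelling this out for a self-contained proof, the only transitions $\Spec_\quiescence$ adds on top of $\Spec$ are the self-loops $\sigma\xrightarrow{\quiescence}_\quiescence\sigma$ at states $\sigma$ with no outgoing $\Outputs$-transition, and I would show these loops are redundant: such a $\sigma$ already carries, in $\Spec$, the explicit step $\sigma\xrightarrow{\quiescence}_\Spec\sigma\cdot\quiescence$, and the successor $\sigma\cdot\quiescence$ is again a state with no outgoing $\Outputs$-transition. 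Granting this ``propagation of quiescence'', any finite path of $\Spec_\quiescence$ is converted into a path of $\Spec$ with the same trace by replacing each use of an added self-loop by the explicit $\quiescence$-step; because the new target is again quiescent, the replacement propagates along the remainder of the path, so a straightforward induction on path length gives $\traces_*(\Spec_\quiescence)\subseteq\traces_*(\Spec)$.

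The one mildly delicate step will be the propagation claim, and this is where the standing assumptions on $\Contract$ come in. Unfolding Def.~\ref{def:testing:approxSpec}, $\sigma$ has no outgoing $o$-transition for some $o\in\Outputs$ exactly when there is a standard input projection that is compatible (within $\inpbound$) with $\mapInp{\sigma}\cdot\NoInp$ along positions $1,\dots,|\sigma|+1$, yet all of whose standard witnesses $\sigma_S$ satisfy $d_\qOutputs(o,\mapOut{\sigma_S[|\sigma|+1]})>\outpbound$. Since the input-distance side treats the last letter the same whether it is an output or $\quiescence$ (both project to $\NoInp$), the same input projection, shifted one step, witnesses that $\sigma\cdot\quiescence$ likewise emits no $\Outputs$-transition, using $d_\qOutputs(\quiescence,o)=\infty$ for $o\neq\quiescence$ (Assumption~\ref{assume:quiescence-o-infinite}) together with past-forgetfulness of $d_\qOutputs$ (Assumption~\ref{assume:past-forgetful}). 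The remaining ingredients --- the trivial inclusion and the path rerouting --- are routine bookkeeping.
\qed
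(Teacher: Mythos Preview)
Your one-line reduction to Lemma~\ref{lemma:R-is-quiescence-closed} is exactly right: that lemma states $\Spec_\quiescence=\Spec$, and since $\STraces(\Spec)=\traces_*(\Spec_\quiescence)$ by definition, the claim follows immediately. In fact this lemma was removed from the paper precisely because it is subsumed by Lemma~\ref{lemma:R-is-quiescence-closed}.

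The paper's own proof proceeds differently. It argues by contradiction: take a shortest suspension trace $\sigma$ not in $\traces_*(\Spec)$ and let $\sigma'=\sigma[..k]$ be its maximal prefix that is a trace of $\Spec$. The failing step $\sigma[k{+}1]$ must be a $\quiescence$ self-loop added by the closure, so $\sigma'$ has no $\Outputs$-transition in $\Spec$; combined with $\sigma'\not\xrightarrow{\quiescence}_\Spec\sigma'\cdot\quiescence$, the state $\sigma'$ has no $\qOutputs$-transition at all. Unfolding Def.~\ref{def:testing:approxSpec} at $\sigma'$ then produces a standard input witnessing $\sigma\cdot\quiescence\in\BadInputs$, contradicting the consistency hypothesis. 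So the paper exploits the ``consistent'' assumption directly at the offending state, rather than routing through Lemma~\ref{lemma:R-is-quiescence-closed}.

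Your extended ``self-contained'' rerouting argument, however, has a gap. Replacing the added self-loop $\sigma\xrightarrow{\quiescence}\sigma$ by the explicit step $\sigma\xrightarrow{\quiescence}_\Spec\sigma\cdot\quiescence$ lands you in a \emph{different} state of $\Spec$. The remainder of the $\Spec_\quiescence$-path continues from $\sigma$, not from $\sigma\cdot\quiescence$, and once a non-$\quiescence$ transition is taken (say an input $i$ followed later by an output $o$) you would need, e.g., $\sigma\cdot\quiescence\cdot i\xrightarrow{o}_\Spec\cdots$ to hold merely because $\sigma\cdot i\xrightarrow{o}_\Spec\cdots$ does. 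But the premise in Def.~\ref{def:testing:approxSpec} depends on the full history up to position $|\sigma|{+}2$ versus $|\sigma|{+}1$, so these are different conditions and the implication is not automatic. Your ``propagation of quiescence'' only covers consecutive $\quiescence$-steps; it does not justify that all later transitions survive the shift in history. The clean fix is simply to use Lemma~\ref{lemma:R-is-quiescence-closed} at the LTS level (no self-loops are added at all, so there is nothing to reroute), which is precisely your one-line argument.
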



\begin{proof}
  First notice that only the inclusion
  $\STraces(\Spec)\subseteq\traces_*(\Spec)$ is relevant (the other
  holds trivially).  For a proof by contradiction, suppose that there is some $\sigma \in \STraces(\Spec)$ but not in $\traces_*(\Spec)$. There must be some path $p \in \paths_*(\Spec)$ with $\last(p) = \sigma'$, $\sigma' = \sigma[..k]$ for some $k < |\sigma|$ and $\sigma' \not\xrightarrow{\sigma[k+1]}_\Spec \sigma[..k+1]$. According to Def. \ref{def:delta-closure}, it must be that $\sigma' \xrightarrow{\quiescence}_{\Spec_\quiescence} \sigma'$ and hence $\forall o \in \Outputs \cup \{\quiescence\}, \sigma'' \in Q : \sigma' \not\xrightarrow{o}_\Spec \sigma''$. From Def. \ref{def:testing:approxSpec} we get (for any $o$ and in particular for $o = \quiescence$) some $\sigma_i \in \mapInp{\traces_\omega(\StdPlus)}$ and hence $\sigma_! \in \traces_\omega(\StdPlus)$ with $\forall j \leq k+1: d_\Inputs(\mapInp{\sigma''[..j]}, \mapInp{\sigma_![..j]}) \leq \inpbound$.
  
Moreover, we get from $\forall o \in \Outputs \cup \{\quiescence\}, \sigma'' \in Q : \sigma' \not\xrightarrow{o}_\Spec \sigma''$ that $\mathcal{T} = \{ \sigma \in \traces_\omega(\mathcal{R}) \mid \mapInp{\sigma} = \mapInp{\sigma'} \cdot \NoInp \} = \emptyset$. Hence, we know that $\sigma \cdot \quiescence \in \BadInputs$ for $\sigma_!$ and $k+1$, which contradicts the consistency of $\Spec$.
    \qed
  
%

\end{proof}
}

\removed{
Now we are in conditions to prove Theorem \ref{thm:testing:cleanImpliesIoco}.

\begin{proof}[Proof of Theorem 2]
  Let $\sigma \in \STraces(\Spec)$. By definition, $\STraces(\Spec) = \traces_*(\Spec_\quiescence)$, which is equal to $\traces_*(\Spec)$ by Lemma~\ref{lemma:R-is-quiescence-closed}. Hence $\sigma \in \traces_*(\Spec)$. 
  Suppose there
  is some $o\in\out(\SUT \after \sigma)$ (otherwise the theorem
  trivially holds).
  We need to show that $o\in\out(\Spec \after \sigma)$.
  Let $\hat{\sigma} = \sigma \cdot o$.
  By Lemma~\ref{lemma:testing:RefTraceIsState}, for any path
  $p\in\paths_*(\Spec)$ with $\trace(p)=\sigma$,
  we know $\last(p)=\sigma$.
  Hence we have to show that there is a transition $\sigma
  \xrightarrow {o}_\Spec \hat{\sigma}$.

  By Def.~\ref{def:testing:approxSpec} we have to show for some
  $\sigma_i \in \mapInp{\traces_\omega(\StdPlus)}$ that is close
  to the input of $\hat\sigma$ (i.e.\
  $\forall j \leq |\hat{\sigma}|:
  d_\Inputs(\mapInp{\hat{\sigma}[..j]}, \sigma_i[..j]) \leq \inpbound$),
  that there is some trace $\sigma_S \in \traces_\omega(\StdPlus)$
  such that $\mapInp{\sigma_S} = \sigma_i$ and
  $d_\Outputs(o, \mapOut{\sigma_S[|\hat{\sigma}|]}) \leq \outpbound$ (*).  

  Note that, since $\sigma_i$ is in $\mapInp{\traces_\omega(\StdPlus)}$,
  there is some $\sigma^* \in \traces_\omega(\StdPlus)$ with
  $\mapInp{\sigma^*} = \sigma_i$.

  \begin{sloppypar}
  Let $\sigma'$ be an infinite extension of $\hat{\sigma}$ in
  $\SUT$ and notice that for all $j \leq |\hat{\sigma}|$,
  $d_\Inputs(\mapInp{\sigma'[..j]}, \mapInp{\sigma^*[..j]})
  = d_\Inputs(\mapInp{\hat{\sigma}[..j]}, \sigma_i[..j]) \leq \inpbound$.
  Since $\SUT$ is robustly clean, because of the second condition in
  Def.~\ref{def:ed-clean:LTS}, we know that there exists
  $\sigma''\in\traces_\omega(\SUT)$ with
  $\mapInp{\sigma^*} =\mapInp{\sigma''}$ for which
  $d_\Outputs(\mapOut{\hat{\sigma}[..k]}, \mapOut{\sigma''[..k]}) \leq \outpbound$.
  Because $d_\Outputs$ is past-forgetful, this last statement amounts
  to $d_\Outputs(o, \mapOut{\sigma''[..k]}) \leq \outpbound$.
  Since $\sigma^* \in \traces_\omega(\StdPlus)$ and
  $\mapInp{\sigma^*} = \mapInp{\sigma''}$ we know by
  Def.~\ref{def:LTS:Std} that $\sigma'' \in
  \traces_\omega(\StdPlus)$. Then, trace $\sigma''$ is the trace
  $\sigma_S$ that we were looking for in (*).
  \qed
    \end{sloppypar}
\end{proof}
}

\removed{\IocoPassesDTG*

To prepare for the proof of this theorem, the next lemma states that the outputs observable at a state $h$ in
$\Spec$ are precisely those obtained with $\clean(h)$ from the
standard $\Std_\quiescence$. In order to make clear where each
function is applied, we use appropriate superscripts.}

\removed{

This lemma is central to show that, if $\Std_\quiescence$ induces a
consistent reference specification, $\DTG$ generates a test suite that
completely captures the conformance relation $\SUT\passes\Spec$.  It
is in fact used by the following lemma which states that any
trace in $\Spec$ is exercised in any possible way by the test suite
generated by $\DTG$.


Now, we are able to prove Theorem \ref{thm:IocoPassesDTG}.


Next, we are heading for the final Theorem.

\IPassesBoundedDTG*}




\removed{
The following lemma states the soundness of predicate $\BadInputsFin$.

\begin{lemma}
Let $h \in \traces_*(\Spec)$.  If $\BadInputsFin(h)$, then $\Spec$ is inconsistent.
  

%
%
\end{lemma}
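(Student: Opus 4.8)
The plan is to prove the statement by contradiction. Suppose $h\in\traces_*(\Spec)$ satisfies $\BadInputsFin(h)$ and, contrary to the claim, $\Spec$ is consistent --- equivalently, the contract $\Contract$ from which $\Spec$ is built is satisfiable in the sense of Def.~\ref{def:satisfiableContract}. By Lemma~\ref{lemma:testing:RefTraceIsState} and Corollary~\ref{cor:RefTrace} the finite trace $h$ is exactly the state of $\Spec$ reached by the unique path whose trace is $h$, so everything $\BadInputsFin(h)$ asserts about $h$ can be read as a statement about the transitions enabled out of that state.

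The key step is to unfold $\BadInputsFin(h)$. Read as the finitary witness that the construction rule of Def.~\ref{def:testing:approxSpec} fails for \emph{every} candidate continuation $o\in\qOutputs$, it supplies a standard suspension trace $\sigma_S\in\traces_\omega(\qStd)$ and an index $k\le|h|+1$ with $\mapInp{h}$ within $\inpbound$ of $\sigma_S$ up to $k$, yet such that no $o\in\qOutputs$ makes the extended history $h\cdot o$ meet the $\outpbound$-requirement of the rule against $\sigma_S$. Hence the state $h$ of $\Spec$ has no outgoing output- or quiescence-transition. This contradicts Lemma~\ref{lemma:R-is-quiescence-closed}, whose proof uses precisely satisfiability of $\Contract$ (Assumption~\ref{assume:contract-satisfiable}, and in particular that every input $\sigma_i\in(\Inputs\cup\{\NoInp\})^\omega$ is satisfiable, Assumption~\ref{assume:inputs-satisfiable}), together with Lemma~\ref{lemma:spec-is-largest-clean2}, to show that every state of $\Spec$ admits a transition labelled by some $o\in\qOutputs$. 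The contradiction proves the lemma.

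I expect the difficulty to lie in definitional bookkeeping rather than in any deep argument, so the care points are: (i) matching the $\inpbound$-closeness that $\BadInputsFin$ asserts at step $|h|+1$ with the closeness condition that Def.~\ref{def:testing:approxSpec} tests for $h\cdot o$, exactly as in the proof of Lemma~\ref{lemma:R-is-quiescence-closed}; (ii) the sub-case in which $\sigma_S$ carries an input at position $|h|+1$, so that $\mapOut{\sigma_S[|h|+1]}=\NoOutp$ and values of the form $d_\qOutputs(\quiescence,\NoOutp)$ are at stake --- this is the same $\infty$-distance phenomenon (Assumption~\ref{assume:quiescence-o-infinite}) that renders the contract of Example~2 unsatisfiable, and one has to check that it is precisely what rules out every $o\in\qOutputs$; and (iii) verifying that the quantifier structure of $\BadInputsFin$ is strong enough to block all continuations simultaneously --- were it to separate $h$ from the standard behaviour only one $o$ at a time, finiteness of $\qStd$ (Assumption~\ref{assume:Std-finite}) would let one collect the finitely many witnessing $\sigma_S$ and still conclude.
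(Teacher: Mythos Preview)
Your argument hinges on two identifications that do not match the paper's (removed) framework, and as a result the contradiction you aim for does not fire.

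First, ``$\Spec$ is inconsistent'' in this removed material does \emph{not} mean that the contract is unsatisfiable in the sense of Def.~\ref{def:satisfiableContract}. The paper's proof makes clear that inconsistency is witnessed by exhibiting an infinite trace of $\Spec_\quiescence$ lying in the set $\BadInputs$; the proof is direct, not by contradiction, and concludes each case with ``$\ldots \in \BadInputs$ and hence $\Spec$ is inconsistent''. Your detour through Lemma~\ref{lemma:R-is-quiescence-closed} (which assumes satisfiability) targets a different statement.

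Second, and more damaging, your reading of $\BadInputsFin(h)$ as ``no $o\in\qOutputs$ is enabled after $h$'' is too narrow. The paper's proof reveals a two-branch predicate: either $\clean_b(h)=\emptyset$, or $\clean_b(h)\neq\emptyset$ and $\BadInputsFin(h)$ supplies a standard trace $\sigma_S\in\traces_{|h|}(\Std_\quiescence)$ together with an index $k\leq|h|$ (not $|h|{+}1$) such that every $a\in\clean_b(h[..k-1])\cup\Inputs$ satisfies $d_\Outputs(\mapOut{a},\mapOut{\sigma_S[k]})>\outpbound$. In this second branch the state $h$ may perfectly well have outgoing output/quiescence transitions, so no contradiction with Lemma~\ref{lemma:R-is-quiescence-closed} arises. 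The paper instead takes an arbitrary infinite extension $h^+$ of $h$ in $\Spec_\quiescence$, shows (by a case split on whether $\sigma[k]\in\Inputs$ or $\sigma[k]\in\qOutputs$, using Lemma~\ref{lemma:supseteqCleanOut} in the latter case) that every trace of $\Spec$ sharing the input projection of $h^+$ violates the output bound against $\sigma_S^+$ at position $k$, and concludes $h^+\in\BadInputs$. The first branch ($\clean_b(h)=\emptyset$) is closer in spirit to what you sketched, but even there the paper argues via the quiescence loop added by the closure and the resulting empty set $\{\sigma'\in\traces_\omega(\Spec)\mid\mapInp{\sigma'}=\mapInp{\sigma}\}$, not via contract unsatisfiability.
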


\remarkPRD{The proof looks sound, though it can be made more readable (do not spend time on this now)}

\begin{proof}
We have to show, that there is an infinite trace $\sigma \in \traces_\omega(\Spec_\quiescence)$, which is in $\BadInputs$.

Let $p \in \paths_*(\Spec)$ be a path with $\trace(p) = h$. By Lemma \ref{lemma:testing:RefTraceIsState}, we know that $\last(p) = h$.
Let $p^+ \in \paths_\omega(\Spec_\quiescence)$ be an infinite extension of $p$, i.e., $p^+[..|h|] = p$. Note that this extension always exists, even for inconsistent $\Spec$, because the quiescence closure guarantees that in any state of $\Spec$, there is an outgoing transition that is either labelled with an output action or a $\quiescence$. 

If $\clean_b(h) = \emptyset$, then by Lemma \ref{lemma:supseteqCleanOut} we have that $\out(h) = \emptyset$. Hence, by Def. \ref{def:delta-closure}, there is a transition $h \xrightarrow{\quiescence}_{\Spec_\quiescence} h$ in $\Spec_\quiescence$. Note that $p^+ = s_0 \; a_0 \; s_1 \cdots h \; a_{|h|} \; s_{|h|+1} \cdots$. Let $p' = s_0 \; a_0 \; s_1 \cdots h \; \quiescence \; h \; a_{|h|} \; s_{|h|+1} \cdots$ be $p^+$ with the additional transition $h \xrightarrow{\quiescence}_{\Spec_\quiescence} h$ in the middle. Obviously $p' \in \paths_\omega(\Spec_\quiescence)$.

From $h \not\xrightarrow{\quiescence}_\Spec h \cdot \quiescence$ and Def. \ref{def:testing:approxSpec}, we get a trace $\sigma_i \in \mapInp{\traces_\omega(\StdPlus)}$ and hence $\sigma_! \in \traces_\omega(\StdPlus)$ with $\forall j \leq |h|+1: d_\Inputs(\mapInp{(h\cdot\quiescence)[..j]}, \mapInp{\sigma_![..j]}) \leq \inpbound$.

Let $\sigma = \trace(p')$. Notice that $\sigma[..|h|] = h$ and that $\mapInp{\sigma[|h|+1]} = \NoInp$. Since $\out(\{h\}) = \emptyset$, we know that for all $\sigma' \in \traces_\omega(\Spec)$, $\mapInp{\sigma'[|h|+1]} \neq \NoInp$. Hence $\{ \sigma' \in \traces_\omega(\Spec) \mid \mapInp{\sigma'} = \mapInp{\sigma} \} = \emptyset$.

Now, it follows that $\sigma \in \BadInputs$ with $\sigma_!$ and $|h|+1$.


If $\clean_b(h) \neq \emptyset$, then we get from $\BadInputsFin(h)$ a trace $\sigma_S \in \traces_{|h|}(\StdPlus)$ and $k \geq 0$ such that $\forall j \leq k: d_\Inputs(\mapInp{h[..j]}, \mapInp{\sigma_S[..j]}) \leq \inpbound$. Let $h^+ = \trace(p^+)$ and $\sigma_S^+ \in \traces_\omega(\StdPlus)$ be an infinite extension of $\sigma_S$. Since $k \leq |h|$ it holds that $\forall j \leq k: d_\Inputs(\mapInp{h^+[..j]}, \mapInp{\sigma_S^+[..j]}) \leq \inpbound$.

Let $\sigma^+ \in \traces_\omega(\Spec)$ be an arbitrary trace in $\Spec$ with $\mapInp{\sigma^+} = \mapInp{h^+}$ and let $\sigma \in \traces_{|h|}(\Spec)$ be the prefix $\sigma^+[..|h|]$ of $\sigma^+$. Since $|\sigma| = |h|$ we have $\mapInp{\sigma} = \mapInp{h}$. Furthermore, we know from $\BadInputsFin(h)$ that $\forall a \in \clean_b(h[..k-1]) \cup \Inputs: d_\Outputs(\mapOut{a}, \mapOut{\sigma_S[k]}) > \outpbound$ (*).  We continue by case distinction.

\begin{itemize}
\item If $\sigma[k] \in \Inputs$, then it follows from (*) that $d_\Outputs(\mapOut{\sigma[k]}, \mapOut{\sigma_S[k]}) > \outpbound$ and hence, since $k \leq |h|$, $d_\Outputs(\mapOut{\sigma^+[k]}, \mapOut{\sigma_S^+[k]}) > \outpbound$.
\item Otherwise, $\sigma[k] \in \Outputs \cup \{\quiescence\}$ and we show that $\sigma[k] \in \clean_b(h[..k-1])$ as follows. By (\ref{eq:algo:clean-approx}), we have to show that $\forall \sigma_i \in \traces_b(\Std_\quiescence): (\forall j \leq k: d_\Inputs(\mapInp{\sigma_i[..j]}, \mapInp{(h[..k-1] \cdot \sigma[k])[..j]}) \leq \inpbound) \Rightarrow \exists \sigma_S \in \traces_b(\Std_\quiescence): \mapInp{\sigma_S} = \mapInp{\sigma_i} \land d_\Outputs(\sigma[k], \mapOut{\sigma_S[k]}) \leq \outpbound$. Since $|\sigma| = |h|$ we know that $\mapInp{(h[..k-1] \cdot \sigma[k])} = \mapInp{(\sigma[..k-1] \cdot \sigma[k])}$ which gives us $\sigma[k] \in \clean_b(h[..k-1]) \iff \sigma[k] \in \clean_b(\sigma[..k-1])$ after some rewriting of (\ref{eq:algo:clean-approx}). To prove $\sigma[k] \in \clean_b(\sigma[..k-1])$, by Lemma \ref{lemma:supseteqCleanOut} it suffices to show that $\sigma[k] \in \out(\{\sigma[..k-1]\})$, which follows from $\sigma \in \traces_{|h|}(\Spec)$ and $k \leq |h|$.

For $\sigma[k] \in \clean_b(\sigma[..k-1])$ we get from (*) that $d_\Outputs(\mapOut{\sigma[k]}, \mapOut{\sigma_S[k]}) > \outpbound$ and hence, since $k \leq |h|$, $d_\Outputs(\mapOut{\sigma^+[k]}, \mapOut{\sigma_S^+[k]}) > \outpbound$.
\end{itemize}
Now $\sigma_S^+$ and $k$ prove that $h^+ \in \BadInputs$ and hence that $\Spec$ is inconsistent.
\end{proof}
}


\fi

\end{document}

doped their Diesel engines is less advanced than in the VW case. The standardized test on the test stand lasts less than 20 minutes. As experiments figured out, there are FIAT engines for which the emission cleaning is turned off after 22 minutes. This 
In the FIAT case, we have to lengthen the test. It is a non-trivial question, how we continue the test after the 20 minutes derived from the test stand data. \remarkSB{We could discuss the different idea: stretch the test stand data -> this changes derivates, e.g. acceleration} The emission test on the test stand is used to measure the emission per kilometre and in order to fulfil a certain emission norm, the value per kilometre must be below a certain threshold. 

ToDo: transition and motivation

We pick a subsequence of the test stand inputs that can be attached at the end of the current test input data we have, such that the continuation is \emph{smoothly}. The term smoothly usually means that the values, the first and maybe also the second derivative at the connection points are equal or at least close to equal. By that, we extend the test both in time and driven kilometres and the inputs provided to the system are still natural in the sense of what can happen during real driving. \remarkSB{Describe formally what \emph{smooth extension} means}

Let $\TestStand$ be the LTS for the test on the test stand and $\TestStand_+$ the extended LTS as described above. We have $\inpbound$, $\outpbound$, $d_\Inputs$ and $d_\Outputs$ as usual. In particular, the output distance function measures the distance of the outputs at the end of the test.

Formally, $\TestStand$ is an LTS with a single trace $i_1 i_2 \cdots i_n o$, that has exactly one transition labelled with an output: the last one. For detecting the FIAT fraud, we use the same kind of measurement. We construct some extended test $\TestStand_+ = i_1 i_2 \cdots i_n i_{n+1} \cdots i_{n+m} o$ and adjust the distance function accordingly:

\begin{definition}
Let $t = n+m+1$ be the length of the test, i.e. the number of samples during the test. Moreover, let $t_\TestStand = n+1$ be the number of samples used during the test on the test stand. We assume, that the sample frequency is determined by the executor of the test, so the sample frequency is the same for every test.

We define the distance function for tests of length $t$ as follows:\\
\[
d_{\Outputs, t} (\sigma_1 \cdot s_1, \sigma_2 \cdot s_2) \coloneqq d_\Outputs(\frac{s_1}{t} \cdot t_\TestStand, \frac{s_2}{t} \cdot t_\TestStand)
\]
\end{definition}

For some reasonable and suitably picked values for $\inpbound$ and  $\outpbound$, the FIAT engines do not meet the definition of robust cleanness for  $\TestStand_+$ as $\Std$, $d_\Inputs$ and $d_{\Outputs, t}$ with $t = |\sigma|$ for the only available trace $\sigma$ in $\TestStand_+$.  \remarkSB{It's not clear where $\inpbound, \outpbound, d_\Inputs$ and $d_\Outputs$ come from}

If we compute $\Spec$ from the values above, then our model-based testing algorithm will eventually run a test that will certainly fail.